\newtheorem{theorem}{Theorem}
\newtheorem{definition}{Definition}
\newtheorem{example}{Example}
\newtheorem{corollary}{Corollary}
\newtheorem{proposition}{Proposition}
\newtheorem{lemma}{Lemma}
\newcommand{\blackslug}{\mbox{\hskip 1pt \vrule width 4pt height 8pt 
depth 1.5pt \hskip 1pt}}
\newcommand{\qed}{\quad\blackslug\lower 8.5pt\null\par\noindent}
\newenvironment{proof}{\par\noindent{\bf Proof:}}{\qed \par}
\newcommand{\cR}{\mbox{${\cal R}$}}
\title{Revealed Preferences for Matching with Contracts}
\author{Daniel Lehmann\\
The Rachel and Selim Benin School \\of Computer Science and Engineering, \\Hebrew University, 
\\Jerusalem 91904, Israel
\\lehmann@cs.huji.ac.il
}
\date{March 2020} 
\begin{document}
\maketitle
%%Version 2 must deal with the general case: a general solution concept for bilateral markets, not only for coherent
%% agents.
%% Try to weaken the assumptions on at least one of the choice functions, f1 or f2: assuming only Contraction and Cumulativity
\begin{abstract}
Many-to-many matching with contracts is studied in the framework of revealed preferences.
All preferences are described by choice functions that satisfy natural conditions.
Under a no-externality assumption individual preferences can be aggregated 
into a single choice function expressing a collective preference.
In this framework, a two-sided matching problem may be described 
as an agreement problem between two parties, 
each of which is characterized by a choice function over the set of contracts: 
the two parties must find a {\em stable agreement}, i.e., 
a set of contracts from which no party will want to take away any contract 
and to which the two parties cannot agree to add any contract.
On such stable agreements a party's preference relation is a partial order 
and the two parties have inverse preferences.
An algorithm is presented that generalizes algorithms previously proposed
by Gale-Shapley, Roth, Kelso-Crawford, Alkan-Gale, Hatfield-Milgrom and Hatfield-Kominers
in less general situations. 
To any agreement problem, this algorithm provides a stable agreement that is preferred
to all stable agreements by one of the parties and therefore less preferred than all stable agreements by 
the other party.
In this algorithm the agents' preferences are expressed only by their choice functions and therefore no notion 
of utility is assumed.
The number of steps of the algorithm is linear in the size of the set of contracts, i.e., 
polynomial in the size of the problem.
The algorithm provides a proof that stable agreements form a lattice under the 
two inverse preference relations.
Under additional assumptions on the role of money in preferences, 
agreement problems can describe general two-sided markets in which goods are exchanged for money.
Stable agreements provide a solution concept, including prices, that is more general than competitive equilibria.
Stable agreements exist in many situations in which there is no competitive equilibrium.
They satisfy an almost one price law for identical items.
Only part of those results have been presented, with a different proof, in~\cite{Choice_and_Matching}.
\end{abstract}

\section{Introduction} \label{sec:intro}
Matching is a topic of intense research activity and has provided many deep and beautiful
theoretical results, solutions to practical problems of great importance and original 
conceptual insights into the description of preferences, equilibrium theory, social choice theory
and the conception of auction mechanisms. 
It is not possible to survey here even a small part of this activity.
It is only those results directly related to the present work that will be mentioned below. 

Gale and Shapley~\cite{GaleShapley:marriage} described one-to-one and a restricted form of
one-to-many matching problems. 
They showed that the existence of a stable solution was guaranteed for all possible preferences
of the agents. 
They also showed that, in the one-to-one situation there is, for each of the two sides, 
a stable solution that is optimal for it and that, in the one-to-many situation there is such a stable
solution that is optimal for the side interested in a single agent (the doctors' side).
In~\cite{Knuth:mariages} Knuth attributes to J.H. Conway the remark that the stable solutions
to the one-to-many matching problems above form a lattice under a suitable partial ordering.
Kelso and Crawford~\cite{KelsoCraw:82} considered a more general situation:
the job market, matching firms with teams of workers that receive a salary, 
whereas a worker can work for only one firm.
A discussion of a possible lattice structure for stable solutions to the Kelso-Crawford problem 
can be found in~\cite{Roth:84, Roth_MOR:85}.
Blair~\cite{Blair:88} considered a generalized job market in which workers can contract with
many firms and showed that, even in such a many-to-many matching situation 
the stable solutions present a lattice structure.
The study of many-to-many matchings has been actively pursued since, see, for example,
\cite{Echenique_Oviedo:2006}.

All the studies mentioned above assume a full description of the preferences of the agents,
by a total ordering on the set of all subsets of alternatives.
The stream of research in economic theory, 
initiated by Samuelson~\cite{Samuelson:38, Samuelson:48}, developed by~\cite{Sen:revealed}
and brought to matching theory in~\cite{Roth:84, Roth_MOR:85, Alkan_Gale:2003} 
proposed to assume only preferences revealed by the agents' actions.
The agents' preferences are represented by choice functions that pick, 
out of a set of possible choices, the one {\em preferred}.

In their seminal article~\cite{Hatfield_Milgrom:2005}, Hatfield and Milgrom 
proposed a fundamental shift of focus.
Instead of focusing on the preferences that agents on one side have for agents on the other
side (or for groups of agents on the other side) they suggested that one should focus on the
preferences of the agents on subsets of a set of contracts, where each contract links two
agents on opposite sides. They achieved two goals: 
they broke out of the limited framework of the job market in which only a single salary
parameter was considered, and they opened the way for considering preferences 
that were not preferences on the agents of the other side. 
For them, choice functions that satisfy a {\em substitutes} property describe 
the preferences of hospitals over sets of doctors, not utility functions.
Many researchers pursued the paths opened by~\cite{Hatfield_Milgrom:2005} 
and many of them will be mentioned below.

\section{About this paper} \label{sec:this}
\subsection{Goals and summary} \label{sec:goals}
This paper's goal is to generalize the one-to-many matching with contracts situation
of~\cite{Hatfield_Milgrom:2005} to a many-to-many situation and to study the structure
of the stable solutions in such a situation.
It can be compared to \cite{Echenique_Oviedo:2006}, which does not consider contracts.

In this paper, preferences of the agents over sets of contracts are specified by choice functions
without any assumption on the structure of the set of contracts.
A set of three conditions on choice functions guarantees 
\begin{enumerate}
\item that the aggregation of the choice functions of all agents on the same side satisfies 
the same conditions and represents the collective preferences of the side in question, 
thus reducing the many-to-many situation to a one-to-one situation, 
\item that a natural partial order can be defined on stable solutions, 
\item that the set of stable solutions is always a non-empty lattice under this partial order, and
\item that a simple, elegant algorithm, expressed in terms of two choice functions,
that generalizes all previously described matching algorithms, 
always finds an extremal stable solution.
\end{enumerate}
General two-sided markets can be described in this framework by postulating that consumers prefer low prices
and producers high prices, and that none of them cares with whom he or she trades.
A {\em Law of two prices} is proved: identical goods are traded at an essentially uniform price.

\subsection{Plan of this paper} \label{sec:plan}
This paper develops a significant mathematical apparatus, but its purpose is mainly 
conceptual. Therefore it stresses the change of perspective it proposes on matching and
delays the mathematical results as much as possible.
Section~\ref{sec:preferences} discusses the representation of preferences over subsets
of a given set.
Section~\ref{sec:choice_functions} defines the set of coherent choice functions that
will be used to represent preferences.
It explains the three properties appearing in the definition, surveys the literature
and presents examples of coherent choice functions.
Section~\ref{sec:bilateral}, describes many-to-many matching when the agents' preferences 
are specified by coherent choice functions. It defines the collective preferences of each side
and shows that those are also coherent.
Section~\ref{sec:agreement} redefines many-to-many matching with coherent choice functions
as an agreement problem between two parties each of them equipped 
with a coherent choice function.
It proposes a solution concept: stable agreements, that generalizes the stable solutions
of the one-to-many matching problems.
Section~\ref{sec:algorithm} presents an algorithm, in terms of choice functions, that 
finds a stable agreement.
It shows that it reduces to the Gale-Shapley differed acceptance algorithm in the marriage problem
and that it is a polynomial-time algorithm.
Section~\ref{sec:properties_algo} is the technical part of this paper and 
is devoted to the proof of the properties of the algorithm defined 
in Section~\ref{sec:algorithm}. 
It includes a proof of the lattice structure of the set of stable agreements.
Section~\ref{sec:money} generalizes the framework of many-to-many matching with contracts
to economies of producers and consumers in which money is present. 
It shows that stable agreements define prices for items.
Section~\ref{sec:open} describes open problems and future work.
In particular it suggests that many-to-many matching may be the right framework to 
understand imperfect finite bilateral markets.
Section~\ref{sec:conclusion} is a conclusion.

\section{The representation of preferences} \label{sec:preferences}
Preferences, or incentives, are at the center of economic theory and a lot of effort
has been devoted to understanding and formalizing them.
In situations where the preferences are on the elements of an unstructured set, preferences 
are described in one of two ways:
\begin{enumerate}
\item a quantitative description that associates a number, its value or its utility, 
to each element, or
\item a qualitative description by a binary relation between elements that describes 
which elements are preferred to which.
\end{enumerate}
Any quantitative description induces a natural total and transitive relation, i.e., 
a total pre-order, on the elements.
Qualitative descriptions are almost always taken to be total pre-orders.
Any total pre-order can be defined by a utility function, and therefore one can relatively 
easily translate from one description to the other.

The picture is quite different when preferences are on a structured set. 
In this work, following Hatfield and Milgrom's~\cite{Hatfield_Milgrom:2005}, 
preferences are on the subsets of a base set,
and preferences are assumed to satisfy certain conditions in respect to the subset structure.
In~\cite{GaleShapley:marriage} the authors present the agents' preferences in the marriage 
problem as preferences on the items of an unstructured set: 
boys have preferences over the set of girls and reciprocally, but this is, in fact, not the most
natural description. 
To express the fact that a solution is a complete matching, i.e., that nobody stays unmatched, 
the authors have to impose this restriction as a hard constraint: partial matchings
will not be considered, whereas, had they chosen to describe preferences of, say boys, 
as a preference over sets of girls, they could have formulated their restriction as 
a soft constraint by assuming that boys prefer singletons over all other subsets, and could have 
shown that their solution indeed is a complete matching.
As soon as one considers one-to-many matchings one cannot avoid considering preferences
over subsets, e.g., firms have preferences over sets of workers.

In this work we shall assume that preferences are over the collection of all subsets
of a set $X$. The elements \mbox{$x \in X$} will be called {\em contracts}.
The set $X$ is the set of all possible contracts.
One may describe such preferences in a quantitative way by a valuation
\mbox{$v : 2^{X} \rightarrow \cR$} that associates a real number to any subset of $X$.
This is, for example, the way Kelso and Crawford proceed in~\cite{KelsoCraw:82}.
They defined a family of such valuations, the gross-substitutes valuations, 
and proved their results under the assumption that preferences are gross-substitutes.
Since there is, in their framework, no place for a gross-net distinction, 
this family of valuations will be referred to as {\em v-substitutes} in this paper.
There is a vast literature on this class of functions and the interested reader will find recent
surveys and references in~\cite{RPL:GEB, Lehmann:ultra}.
A Walrasian equilibrium is guaranteed if all agents in a market have preferences described by
v-substitutes valuations.

To describe preferences for matching in a qualitative way, one could look for a suitable family
of binary relations on $2^{X}$.
But the revealed preferences method suggests another formalization: describe an agent's 
preferences by a function \mbox{$f : 2^{X} \rightarrow 2^{X}$} that associates any
subset $A$ of $X$ to the subset of $A$ that the agent prefers to all subsets of $A$.
In Section~\ref{sec:pre-order} we shall define a transitive binary relation associated
with such a function.
Hatfield and Milgrom~\cite{Hatfield_Milgrom:2005} use such choice functions satisfying
the Substitutes property of Definition~\ref{def:coherent} below, 
but their treatment is not complete as noticed in~\cite{Aygun_Sonmez:2012}, 
and they assume additional structure on the set $X$.
We shall now propose a definition of suitable choice functions on the set of all subsets of $X$.

\section{Coherent choice functions} \label{sec:choice_functions}
Section~\ref{sec:def_choice} defines and discusses the family of choice functions
that will be considered in the paper: coherent choice functions.
Section~\ref{sec:examples_choice} presents examples of coherent choice functions.

\subsection{Definition} \label{sec:def_choice}
A {\em finite} set $X$ is given. 
The members of $X$ will be called contracts, following~\cite{Hatfield_Milgrom:2005}.
Examples of contracts are: a marriage between a specific boy and a specific girl, 
a work contract between a specific firm and a specific worker for a specific salary and specific
work conditions, a sale from a specific producer to a specific consumer of a specific good 
at a specific price deliverable in a specific place at a specific date.

Following the revealed preferences approach of~\cite{Alkan_Gale:2003}, 
we want to describe the preferences of an 
individual over sets of contracts by a choice function 
\mbox{$f : {2}^{X} \rightarrow {2}^{X}$} that, for every 
\mbox{$A \subseteq X$},
provides the preferred set $f(A)$ of contracts out of all contracts of $A$.
We shall show that, in certain circumstances, the collective preferences of a group of individuals
can also be defined by such a choice function.

The following definition encapsulates the properties we want to assume 
about the function $f$.
They seem very reasonable if we think in terms of preferences and a detailed discussion 
will be provided after Definition~\ref{def:coherent}.
It is a claim of this paper that, in all previously studied matching problems, 
the preferences of the players may be described by {\em coherent} choice functions.
\begin{definition} \label{def:coherent}
A choice function  \mbox{$f : {2}^{X} \rightarrow {2}^{X}$} is said
to be a {\em coherent} choice function iff it satisfies the following three properties,
for any \mbox{$A, B \subseteq X$}:
\begin{itemize}
\item {\bf (Contraction)} \mbox{$f(A) \subseteq A$},
\item {\bf (Irrelevance of rejected contracts - IRC)} if \mbox{$x \in A \subseteq X$} and
\mbox{$x \notin f(A)$}, then, \mbox{$f(A - \{ x \} ) \subseteq$} \mbox{$f(A)$}, and 
\item {\bf (Substitutes)} if \mbox{$ x \in B \subseteq$} \mbox{$A \subseteq X$} 
and \mbox{$x \in f( A ) $}, then 
\mbox{$x \in$} \mbox{$ f( B )$}.
\end{itemize}
\end{definition}
Note that we allow the set $f(A)$ to be empty, even when $A$ is not.
An agent may prefer no contracts to any other subset of $A$.
A completely different notion has been called by the same name in~\cite{Coherent_choice_Seidenfeld:2010}.

The three properties above have been extensively studied, among others, 
in the social choice literature, where the set $f(A)$ is understood as the set 
of all the best elements of $A$, assuming some global order on $X$.
As noticed in~\cite{Hatfield_Milgrom:2005} (footnote 4) the intuition is different when 
one is interested in matching: the sets of contracts are ordered in some way and $f(A)$ is
the best set among all subsets of $A$, assuming there is a unique such set.
Those three properties are equivalent to the properties 2.5 -- 2.7 of Blair~\cite{Blair:88}.
This set of three properties  has been studied extensively in~\cite{L:LogicsandSemantics}
in the context of nonmonotonic logics.
The present work builds on results obtained there.

Contraction expresses that the best subset of a set $A$ of contracts is a subset of $A$.
It is so obviously required that most authors do not care to mention it.

Irrelevance of rejected contracts expresses that the absence in the choice set 
of a rejected contract cannot cause the acceptance of an additional contract.
If a hospital presented with a set of doctors $A$ that includes doctor $d$ 
would reject doctor $d$, it would not have proposed to additional doctors 
if the set presented had been \mbox{$A - \{ d \}$}.
In fact,  Lemma~\ref{le:cumulativity} will show that
the set of contracts proposed stays unchanged if a rejected contract is omitted:
under the assumptions we have \mbox{$f(A - \{ x \}) =$} \mbox{$f(A)$}.
This property has been discussed and named in~\cite{Aygun_Sonmez:2012}.
Irrelevance of rejected contracts is equivalent to the Local Monotonicity 
of~\cite{L:LogicsandSemantics}.
Local Monotonicity is the model counterpart of the logical property 
named Cautious Monotonicity 
introduced in~\cite{KLMAI:89} as the proper weakening of the monotonicity property of
classical logics when considering a logic in which an additional assumption can cause 
the retraction of a previous conclusion.
In social choice theory, the importance of this property has been put in evidence 
by Aizerman and Malishevski~\cite{AizerMalish:81, Aizer:85}.
Its importance in matching theory has not been recognized so far.
Its purpose is to guarantee that the choice of the best subset of a set $A$ 
(or one of the best subsets of $A$) is done consistently over all subsets of $X$.
Suppose for example that, in a four contracts set, two subsets \mbox{$\{ a , b \}$} and
\mbox{$\{ a , b , c \} $} are both best. 
A choice function must choose one of them. 
Suppose we choose \mbox{$f(\{ a , b , c , d \}) =$} \mbox{$\{ a , b \}$}.
Choosing \mbox{$f(\{ a , b , c \}) =$} \mbox{$\{ a , b , c \}$} would be inconsistent
even though \mbox{$\{ a , b , c \}$} is best in \mbox{$\{ a , b , c \}$}.
Indeed, even though \mbox{$f(\{ a , b , c , d \}) \subseteq $}
\mbox{$\{ a , b , c \} \subseteq $} \mbox{$\{ a , b , c , d \}$}, 
\mbox{$f(\{ a , b , c \}) \not \subseteq$} \mbox{$f(\{ a , b , c , d \})$},
contradicting IRC.

Our third property is named after~\cite{Hatfield_Milgrom:2005}.
It states that, if a contract $x$ is accepted when in competition with a set $A$ of contracts,
it will be accepted when in competition with any subset $B$ of $A$.
Under the converse view it could have been termed {\em Irrelevance of added contracts}:
the addition to $B$ of new contracts (in $A$) cannot make a contract $x$ 
rejected from $B$ acceptable.
Indeed, it says that no added contract can be complementary to a rejected contract.
It is one of the remarkable intuitions of Hatfield and Milgrom that this property expresses
the fact that contracts are {\em substitutes} to one another:
if \mbox{$x \in B - f(B)$} no contract \mbox{$y \in A - B$} can be complementary to $x$
and make us have \mbox{$x \in f( B \cup \{ y \} ) $}.
The Substitutes property is equivalent to the properties (C2) and (C3) 
of Arrow's~\cite{Arrow:59} and to property $\alpha$ of Sen's~\cite{Sen:70} 
as will be shown in Lemma~\ref{le:substitutes}.
It is a kind of antimonotonicity:
if \mbox{$X \subseteq Y$}, then antimonotonicity would require:
\mbox{$f(Y) \subseteq f(X)$}, whereas Substitutes only requires
that this part of $f(Y)$ that is included in $X$ be included in $f(X)$.
It expresses the existence of some kind of coherent test by which 
the {\em preferred} elements of a set are picked up: 
the test corresponding to a superset
must be at least as demanding as the one of a subset. 
The property appears in Chernoff's~\cite{Chernoff:54}. The {\em persistence}
property of~\cite{Alkan_Gale:2003} is very similar.
In Section~\ref{sec:bilateral},
it will be shown that under many circumstances the preferences of collectives
such as hospitals or colleges may also be expressed by coherent functions.

\subsection{Examples} \label{sec:examples_choice}
%% Choice functions induced by a substitutes valuation
The identity function is a coherent choice function. 
It describes the preferences of an agent that is satisfied by what it gets.
Note that, if $v$ is any strictly monotone valuation, the choice function $f$ defined by 
\mbox{$f(A)$} is the subset of $A$ that maximizes $v$, is the identity.

If the set of contracts $X$ is equipped with some kind of order relation essentially any
$f$ that defines $f(A)$ as the {\em best} elements of $A$, in some sense, will prove to be
a coherent choice function.
For example, if $X$ is equipped with a pre-order $R$, i.e., a reflexive and transitive relation
and $f(A)$ is the set of all elements of $A$ that are maximal in $A$ with respect to $R$,
i.e., 
\[
f(A) \: = \: \{ x \in A \mid \forall y \in A, yRx \Rightarrow xRy \}
\]
then $f$ is a coherent choice function.
The {\em responsive} valuations of~\cite{Roth:85} are also coherent choice functions.

But, in general, preferences are determined by some kind of ordering on the subsets of
$X$, not on its elements.
In particular, if every subset of $X$ is given a numerical utility, one may consider the
choice function that, given a set $A$ of contracts, chooses the subset with the highest 
utility. Note that, for this to be possible, there must be a unique subset of highest utility.
Appendix~\ref{app:coherent_gross} shows that for every v-substitutes valuation 
there is a coherent choice functions that picks out an optimal subset of any set of contracts.
There is a submodular valuation $v$ that defines
a choice function that is {\em not} coherent.
Appendix~\ref{app:Ai} contains a different characterization of coherent choice functions, 
meaningful for social choice theory but whose meaning for matching theory is not apparent, 
due to Aizerman and Malishevski. It will not be used in this paper.

It is remarkable that the {\em substitutes} notion that originates in a utility setting can be expressed at all, 
and in an elegant way at that, in a revealed preferences setting, by properties of choice functions.
The twin notion of {\em complements} does not seem to have such an expression.

\section{Bilateral many-to-many matching and collective preferences} \label{sec:bilateral}
\subsection{Contracts, individual and collective preferences} \label{sec:contracts}
In matching theory we are concerned with situations involving two parties of agents:
e.g., men and women, firms and workers, hospitals and doctors, schools and students, producers 
and consumers. A contract links two specific individuals, one from each party:
a man and a woman, a producer and a consumer and so on.
We therefore assume two {\em disjoint finite} sets $I$ and $J$ of agents.
\begin{definition} \label{def:contracts}
For every contract \mbox{$x \in X$},
\begin{itemize}
\item  $x_{I}$ denotes the agent \mbox{$i \in I$} mentioned 
in contract $x$. For any \mbox{$i \in I$}, $X_{i}$ denotes the set of contracts of which 
$i$ is part, i.e., \mbox{$X_{i} =$} \mbox{$\{ x \in X \mid x_{I} = i \}$}.
\item $x_{J}$ denotes the agent \mbox{$j \in J$} 
mentioned in contract $x$. For any \mbox{$j \in J$}, $X_{j}$ denotes the set of contracts of which 
$j$ is part, i.e., \mbox{$X_{j} =$} \mbox{$\{ x \in X \mid x_{J} = j \}$}.
\end{itemize}
\end{definition}

The preferences of agent \mbox{$ a \in I \cup J$} are represented by a choice function 
$f_{a}$ over $X_{a}$.
The collective preferences of the agents on the same side are represented by the choice functions
$f_{I}$ and $f_{J}$ on $X$ defined by
\begin{equation} \label{eq:collective}
f_{I}(A) \, = \, \bigcup_{i \in I} f_{i}(A \cap X_{i} ) , \ \  \ \ 
f_{J}(A) \, = \, \bigcup_{j \in J} f_{j}(A \cap X_{j} ).
\end{equation}
Intuitively: every agent  \mbox{$a \in I \cup J$} cares only about the contracts of $A$ he or she is mentioned in:
$A \cap X_{a}$ and from this set picks his or hers preferred subset according to his or hers preferences.
In other terms $f_{I}$ rejects a contract iff it is rejected by the agent of $I$ it belongs to,
and similarly for $J$.
Equation~(\ref{eq:collective}) expresses a {\em no externalities} assumption: 
agent $a$ is indifferent to the fate of all contracts that do not concern him or her.
Under such an assumption the choice functions $f_{I}$ and $f_{J}$ 
defined in Equation~(\ref{eq:collective}) faithfully represents 
the collective preferences of the side considered.
Even though men are competing among themselves for women, 
one may define a collective preference for the men.
But such a feat cannot be achieved by defining a collective utility for the men since
the collective preferences we want to consider must clearly reflect a very partial ordering: 
we do not want the collective to prefer the good of one man over another, 
the collective preferences must only reflect what is the joint interest of all agents 
on the same side.
One of the good news in this paper is that, in the framework of revealed preferences and
coherent choice functions, this is possible.
We shall now show that if each of the choice functions $f_{a}$ is coherent then 
the choice functions $f_{I}$ and $f_{J}$ 
defined in Equation~(\ref{eq:collective}) are coherent.

\begin{theorem} \label{the:collective}
If \mbox{$S \in \{ I , J \}$}, and if $f_{i}$ is a coherent choice function for every 
\mbox{$i \in S$}, then the choice function $f_{S}$
defined in Equation~(\ref{eq:collective}) is also coherent.
\end{theorem}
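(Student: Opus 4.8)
The plan is to verify the three defining properties of Definition~\ref{def:coherent} for $f_S$ one at a time, in each case reducing to the same property for the individual functions $f_i$. The one structural fact that drives everything is that $\{X_i \mid i \in S\}$ is a partition of $X$: every contract names exactly one agent on side $S$, so each $x$ lies in exactly one $X_i$ with $i \in S$, and hence $A = \bigcup_{i \in S}(A \cap X_i)$ is a disjoint union for every $A \subseteq X$. I would first record the elementary consequences of this: if $x \in A$ and $x \in X_{i_0}$, then $(A - \{x\}) \cap X_i = A \cap X_i$ for every $i \neq i_0$ while $(A - \{x\}) \cap X_{i_0} = (A \cap X_{i_0}) - \{x\}$; and, crucially, since $x$ belongs to no $X_i$ other than $X_{i_0}$, we have $x \in f_S(A)$ if and only if $x \in f_{i_0}(A \cap X_{i_0})$.

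Contraction is immediate from Contraction of each $f_i$, since $f_S(A) = \bigcup_{i \in S} f_i(A \cap X_i) \subseteq \bigcup_{i \in S}(A \cap X_i) = A$. For IRC, suppose $x \in A$ and $x \notin f_S(A)$, and let $i_0 \in S$ be the unique agent with $x \in X_{i_0}$. By the "only one owner" remark, $x \notin f_{i_0}(A \cap X_{i_0})$, so IRC for $f_{i_0}$ gives $f_{i_0}\bigl((A \cap X_{i_0}) - \{x\}\bigr) \subseteq f_{i_0}(A \cap X_{i_0})$; for $i \neq i_0$ the set $A \cap X_i$ is untouched by removing $x$. Taking the union over $i \in S$ yields $f_S(A - \{x\}) \subseteq f_S(A)$.

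For Substitutes, suppose $x \in B \subseteq A \subseteq X$ and $x \in f_S(A)$, and again let $i_0$ be the unique agent of $S$ with $x \in X_{i_0}$. Then $x \in f_{i_0}(A \cap X_{i_0})$ and $x \in B \cap X_{i_0} \subseteq A \cap X_{i_0}$, so Substitutes for $f_{i_0}$ gives $x \in f_{i_0}(B \cap X_{i_0}) \subseteq f_S(B)$. The whole argument is essentially bookkeeping; the only point requiring care is the equivalence $x \in f_S(A) \iff x \in f_{i_0}(A \cap X_{i_0})$, which is exactly where disjointness of the $X_i$ is used and which lets both IRC and Substitutes for $f_S$ be derived from the single relevant $f_{i_0}$ rather than from all the $f_i$ simultaneously. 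I do not expect a deeper obstacle here — in contrast with the later use of the coherence of $f_I$ and $f_J$ in the lattice theorem and in the analysis of the algorithm.
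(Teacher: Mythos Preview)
Your proof is correct and follows exactly the same route as the paper's own proof: verify Contraction, IRC, and Substitutes separately, each time isolating the unique agent $i_0 \in S$ with $x \in X_{i_0}$ and reducing to the corresponding property of $f_{i_0}$, while the other summands are unaffected. Your explicit articulation of the partition structure and of the equivalence $x \in f_S(A) \iff x \in f_{i_0}(A \cap X_{i_0})$ is a clean way to frame what the paper uses implicitly, but there is no substantive difference between the two arguments.
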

\begin{proof}
Let \mbox{$f = f_{S}$}.
For Contraction: \mbox{$\bigcup_{i \in S} f_{i}(A \cap X_{i} ) \subseteq $}
\mbox{$ \bigcup_{i \in S} A \cap X_{i} = A$}.
For Irrelevance of rejected contracts:
if \mbox{$x \in A - f(A)$}, there is a single \mbox{$j \in S$} such that 
\mbox{$x \in A \cap X_{j}$} and
\[
\bigcup_{i \in S} f_{i}(A \cap X_{i} - \{ j \} ) \, = \,
\bigcup_{i \in S - \{ j \} } f_{i}(A \cap X_{i} ) \cup f_{j} ( A \cap X_{j} - \{ x \} ) \, \subseteq \,
\bigcup_{i \in S} f_{i}(A \cap X_{i} )
\]
since $f_{j}$ is coherent.
For Substitutes, assume \mbox{$x \in B \subseteq A$} and \mbox{$x \in f( A ) $}.
There is a single \mbox{$j \in S$} such that 
\mbox{$x \in X_{j}$} and therefore \mbox{$x \in f_{j} ( A \cap X_{j} ) $}.
Since $f_{j}$ satisfies Substitutes, \mbox{$x \in f_{j} ( B \cap X_{j} ) $},
and therefore \mbox{$x \in f ( B )$}.
\end{proof}

\subsection{Preferences in the Gale-Shapley marriage problem} \label{sec:GS_preferences}
%% Couples preferences are not coherent
In this paper, in order to explain the algorithm presented in Section~\ref{sec:algorithm},
we shall now translate, step by step, the classical marriage problem as presented 
in~\cite{GaleShapley:marriage} into the framework proposed in this paper.
We consider a set $I$ of $n$ men and a set $J$ of $n$ women.
For Gale and Shapley the preferences of man $i$ are described by a strict total order 
$<_{i}$ on the set $J$, and the preferences of a woman $j$ by a strict total order $<_{j}$ 
on the set $I$.
Our formalization of the marriage problem uses a contract set 
\mbox{$X = I \times J$} that contains all possible pairs \mbox{$( i , j )$} 
where \mbox{$i \in I$} and \mbox{$j \in J$},
and the preferences of each agent \mbox{$i \in I$} are described by a coherent choice function 
$f_{i}$ on the set of contracts 
\mbox{$X_{i} =$} \mbox{$\{ ( i , j ) \mid j \in J \} $} in which $i$ participates.
We shall now describe the function $f_{i}$ when \mbox{$i \in I$}, i.e., $i$ is a man.
The description of the choice function of a woman is similar.
For \mbox{$A \subseteq X_{i}$}:
if $A$ is empty, $f(A)$ is empty and otherwise the set $f_{i}(A)$ is the singleton
\mbox{$\{ x \}$} where \mbox{$x = ( i , w) \in A $} is such that \mbox{$ j <_{i} w $} for any
\mbox{$j \in J$} such that \mbox{$ ( i , j ) \in A$}.
This definition expresses that $i$ prefers being married to any woman to staying single,
that he prefers one woman to more than one woman and that among the available women
his preferences are described by $<_{i}$.

Let us check that such a choice function is coherent.
Remember that \mbox{$A \subseteq X_{i}$}.
Contraction is obvious. 
For IRC, if \mbox{$x \in A - f_{i}(A) $} the woman in $x$ is not the preferred one from 
\mbox{$A$} and \mbox{$f_{i}(A - \{ x \} ) = $} $f_{i}(A)$.
For Substitutes, suppose \mbox{$A \subseteq$} \mbox{$B \subseteq M_{i}$}.
If \mbox{$(i , j ) \in A - f_{i}(A)$} then there is a \mbox{$ (i , w ) \in A$} such that 
\mbox{$w >_{i} j$} and the contract \mbox{$(i , j )$} cannot be the one chosen
in $f_{i}(B)$.

In matching hospitals and interns, couples have to be given special consideration
and have been treated by ad hoc modifications: see e.g.,
\cite{Roth_evolution:84, Roth_Peranson:99, Bronfman_Romm:2015}.
One could wonder whether, in the perspective proposed in this paper, one could model 
the preferences of a couple of doctors by a collective choice function.
The answer is that one can but this choice function is not expected to be coherent
since the two positions looked for are complementary, not substitutes.
A couple's choice function does not satisfy the Substitutes property of Definition~\ref{def:coherent}.

In the marriage problem, the collective choice function for men has the effect of removing
from a set $A$ of contracts all contracts \mbox{$(i , j )$} when there is in $A$ another
contract \mbox{$(i , j' )$} such that $i$ prefers $j'$ to $j$, and similarly for women. 
If one considers a typical situation in the matching algorithm
of Gale and Shapley in which men are gathered around women and define the set $A$ 
of all contracts as containing exactly those pairs \mbox{$( i , j )$} 
such that man $i$ is one of the men gathered around woman $j$ 
and if $f$ is the collective choice function for the women's side, then $f(A)$ is exactly
the set of pairs \mbox{$( i , j )$} such that $i$ has not been rejected by $j$ in the curent
stage of the algorithm.

\section{The agreement problem} \label{sec:agreement}
\subsection{Definition} \label{sec:definition}
Section~\ref{sec:bilateral} proposed a perspective change: don't view matching problems
as matching individual hospitals and individual doctors each of which have individual preferences
but as an agreement problem between two collective preferences 
described by two coherent choice functions.
Such a perspective subsumes many-to-many matching and allows to consider {\em rules}
or {\em constraints}, e.g., the monogamy restriction in the marriage problem, as preferences.
In our version of the marriage problem, polygamous or polyandrous marriages are considered
and it is only the fact that agents prefer a single mate to more than one mate that ensures
a one-to-one pairing.
In this framework one may easily introduce {\em soft rules} by translating them into
{\em strong preferences}.

In this Section we shall define the problem, the {\em agreement problem}, that
generalizes the matching problem, define the solution concept for this problem,
a {\em stable agreement} and show that this solution concept subsumes stable matchings.
In Section~\ref{sec:algorithm} we shall propose an algorithm to
solve all agreement problems and, then, in the remainder of this paper we shall show 
that the study of this algorithm enables us to prove that stable agreements exist 
and present a lattice structure.

We now assume a finite set $X$ of contracts 
and two {\em coherent} choice functions: \mbox{$f_{i} : {2}^{X} \longrightarrow {2}^{X}$},
for \mbox{$i = 1 , 2$}.
The coherent choice function $f_{i}$ describes the collective preferences of side $i$.
They are obtained from the individual choice functions of agents of side $i$ by Equation (\ref{eq:collective})
and are coherent by Theorem~\ref{the:collective}.
Note that, contrary to~\cite{Hatfield_Milgrom:2005}, no structure is assumed on the set $X$
of contracts: a contract is always between side $1$ and side $2$, we can forget the individuals
involved since the individuals preferences have been taken into account in the collective
preferences expressed by $f_{1}$ and $f_{2}$.
Note also that the situation considered here is that of many-to-many matching: 
both sides may enter multiple contracts. 
Roth's~\cite{Roth:84} and Blair's~\cite{Blair:88} already considered 
such a symmetric situation and the topic has attracted
a lot of attention recently, see~\cite{Hatfield_Kominers:AEJ, Hatfield_Kominers:GEB, 
Hatfield+4:2018}.
Those works assume that preferences are defined by utilities, i.e., a total order on the sets of contracts.

The basic situation is the following.
Both sides wish to agree on a set of contracts, a subset of $X$.
If a tentative agreement, i.e., a subset of $X$, is on the table,
each side may, if it wishes to, and without any need for a permission
from the other side, reject any subset of the tentative agreement 
but the permission of both sides is necessary before a contract or a number of contracts 
can be added to a tentative agreement.
This leads to the following definitions.

To help the reader's intuition we shall mention basic properties of the notions defined
in {\em Propositions} without proofs. No proposition will be used in the sequel and 
all propositions follow from the results in Section~\ref{sec:properties_algo}.

\subsection{Agreements} \label{sec:agreements}
An agreement is a set no contract from which will be taken away by any of the two parties.
\begin{definition} \label{def:agreement}
A set $A$ of contracts is said to be an {\em agreement} iff
\mbox{$A =$} \mbox{$f_{1}(A) =$} \mbox{$f_{2}(A)$}, i.e., it is a fixpoint for both
$f_{1}$ and $f_{2}$.
\end{definition}

In a sense, an agreement is a {\em small} set of contracts.
The proposition below follows from Lemma~\ref{le:sub_agreement2}.
\begin{proposition} \label{prop:sub_agreement}
The empty set is an agreement.
If $A$ is an agreement and \mbox{$B \subseteq A$}, then $B$ is an agreement.
\end{proposition}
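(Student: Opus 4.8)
The plan is to prove the two assertions of Proposition~\ref{prop:sub_agreement} separately, relying on the forthcoming Lemma~\ref{le:sub_agreement2} for the second (and on basic consequences of coherence for the first). For the empty set, I would simply note that Contraction applied to $f_1$ and $f_2$ forces $f_1(\emptyset) \subseteq \emptyset$ and $f_2(\emptyset) \subseteq \emptyset$, hence both equal $\emptyset$; thus $\emptyset = f_1(\emptyset) = f_2(\emptyset)$ and $\emptyset$ is an agreement by Definition~\ref{def:agreement}.

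For the downward-closure claim, suppose $A$ is an agreement, so $A = f_1(A) = f_2(A)$, and let $B \subseteq A$. I want to show $B = f_1(B) = f_2(B)$. By Contraction it suffices to show $B \subseteq f_k(B)$ for $k = 1, 2$. Fix $k$ and take any $x \in B$. Since $B \subseteq A$ and $x \in A = f_k(A)$, the Substitutes property of $f_k$ gives $x \in f_k(B)$. As $x \in B$ was arbitrary, $B \subseteq f_k(B)$, and combined with Contraction this yields $f_k(B) = B$. Doing this for both $k = 1$ and $k = 2$ shows $B$ is an agreement.

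So in fact no appeal to IRC or to the aggregation machinery of Section~\ref{sec:bilateral} is needed: Contraction and Substitutes alone suffice, and the argument is essentially the observation that the set of fixpoints of a coherent choice function is closed under taking subsets. I do not anticipate a genuine obstacle here; the only point requiring the slightest care is to remember that a priori $f_k(B)$ need not be contained in $A$, which is why one argues pointwise via Substitutes rather than by manipulating $f_k$ of the pair $B \subseteq A$ directly. Since the excerpt says this proposition follows from Lemma~\ref{le:sub_agreement2} and that propositions are stated without proof, I would keep the justification to this brief sketch or simply cite that lemma.
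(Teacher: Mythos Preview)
Your proof is correct and is essentially the paper's own argument: the paper defers to Lemma~\ref{le:sub_agreement2}, whose proof is precisely your Substitutes-plus-Contraction step (stated there via the equivalent form $B \cap f(A) \subseteq f(B)$ of Lemma~\ref{le:substitutes}), applied to each $f_k$. One small quibble with your closing aside: by Contraction $f_k(B) \subseteq B \subseteq A$, so $f_k(B)$ \emph{is} contained in $A$ a priori; this does not affect the validity of the proof.
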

Note that the union of two agreements is not in general an agreement.

In the marriage problem one easily sees that an agreement is a partial matching:
$A$ is an agreement iff for each man $i$ there is at most one contract of the form 
\mbox{$(i , j )$} in $A$ and for each woman $j$ there is at most one contract of the form
\mbox{$(i , j )$} in $A$. There may be unmatched agents.
The explanation is that if an agent had two or more contracts in $A$ he or she would prefer 
to throw all of them away except one as implied by the definition of the choice function
$f_{i}$ in Section~\ref{sec:GS_preferences}.

\subsection{Stable sets} \label{sec:stable}
A stable set is a set of contracts to which no addition is supported by both sides.
%% The definition of stability must deal with addition of a set of contracts, not a single contract. 
%%In the coherent case the conditions are equivalent but stable agreements must be presented as a general solution 
%%concept.

\begin{definition} \label{def:stable0}
A set $A$ of contracts is said to be a {\em stable set} iff 
for any set \mbox{$B \subseteq X$} of contracts such that \mbox{$A \cap B = \emptyset$}, 
\mbox{$B \neq \emptyset$}, \mbox{$B \not \subseteq f_{1}(A \cup B) \cap f_{2}(A \cup B)$}.
\end{definition}
In a sense, a stable set is a {\em large} set of contracts.
The following proposition follows straightforwardly from Lemma~\ref{le:substitutes}.
Remember $f_{1}$ and $f_{2}$ are coherent.
\begin{proposition} \label{prop:stable_superset}
The set $X$ is a stable set.
If $A$ is a stable set, then any $B$ such that \mbox{$A \subseteq B$} is a stable set.
\end{proposition}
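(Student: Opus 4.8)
The plan is to prove the two assertions of Proposition~\ref{prop:stable_superset} separately, relying on the Substitutes property (via Lemma~\ref{le:substitutes}) to control how the choice functions behave when contracts are added to a set.

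First I would show that $X$ is a stable set. This is essentially vacuous: if $A = X$, then for any $B$ with $A \cap B = \emptyset$ and $B \neq \emptyset$ we would need $B \subseteq X = A$, contradicting $B \neq \emptyset$ and $A \cap B = \emptyset$. So there is no admissible $B$ to test, and the condition in Definition~\ref{def:stable0} holds trivially.

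Next I would prove the monotonicity claim: if $A$ is stable and $A \subseteq B$, then $B$ is stable. Suppose not, so there is a nonempty $C \subseteq X$ with $B \cap C = \emptyset$ and $C \subseteq f_{1}(B \cup C) \cap f_{2}(B \cup C)$. I want to manufacture a witness against the stability of $A$. The natural candidate is $C' = C \cup (B - A)$, or perhaps just $C$ together with the contracts of $B$ not in $A$; the point is to enlarge the "addition" so that its union with $A$ recovers $B \cup C$. Since $A \cap C' = \emptyset$ (as $C \cap B = \emptyset$ and $B - A$ is disjoint from $A$) and $A \cup C' = B \cup C$, I would use the Substitutes property to argue that every element of $C'$ that lies in $f_{1}(B \cup C) \cap f_{2}(B \cup C)$ stays chosen — but that is automatic only for the elements of $C$, not necessarily for those of $B - A$. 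So the cleaner route is: take $C$ itself as the witness but test it against $A$ rather than $B$. We have $A \cup C \subseteq B \cup C$, and $C \subseteq f_{k}(B \cup C)$ for $k = 1,2$; since $C \subseteq A \cup C \subseteq B \cup C$, Substitutes gives $C \subseteq f_{k}(A \cup C)$ for each $k$. Also $A \cap C \subseteq B \cap C = \emptyset$ and $C \neq \emptyset$, so $C$ violates the stability of $A$, a contradiction.

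The main obstacle is making sure the direction of Substitutes is the one actually needed: Substitutes says that a contract chosen from a larger set is still chosen from any subset containing it, so it is exactly the tool for pushing the "chosen" status of $C$ down from $B \cup C$ to the intermediate set $A \cup C$. One should double-check that $C \subseteq A \cup C \subseteq B \cup C$ and that $C$ is disjoint from $A$; both are immediate from $A \subseteq B$ and $B \cap C = \emptyset$. No use of IRC or Contraction is needed for this proposition, only Substitutes, which is consistent with the remark in the excerpt that the proposition "follows straightforwardly from Lemma~\ref{le:substitutes}."
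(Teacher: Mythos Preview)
Your proof is correct and matches what the paper intends: the paper does not spell out a proof of this proposition but merely points to Lemma~\ref{le:substitutes}, and your argument is precisely the straightforward application of Substitutes (pushing \mbox{$C \subseteq f_{k}(B \cup C)$} down to \mbox{$C \subseteq f_{k}(A \cup C)$} via \mbox{$A \cup C \subseteq B \cup C$}) that the paper has in mind. Your vacuity argument for \mbox{$X$} and the detour you considered and rightly discarded (using \mbox{$C' = C \cup (B - A)$}) show you understand the structure; the cleaner route you settled on is the intended one.
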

Note that the intersection of two stable sets is not, in general, a stable set.

In the marriage problem a set $A$ is stable iff there is no pair \mbox{$(i , j)$} that is not a 
member of $A$ such that $i$ prefers $j$ to all the $w$s such that \mbox{$(i , w) \in A$}
and $j$ prefers $i$ to all the $m$'s such that \mbox{$(m, j) \in A$}.
This corresponds exactly to the notion of stability in the marriage environment.

\subsection{Stable agreements} \label{sec:stable_agreements}
We can now formalize our notion of a solution.
\begin{definition} \label{def:stable_agreement}
A set $A$ of contracts is said to be a {\em stable agreement} iff 
it is an agreement and it is a stable set.
\end{definition}

A stable agreement is a set of contracts from which no agent would like to reject a contract and no two agents 
would agree to keep an additional contract between them if offered.
In the marriage problem a stable agreement is exactly a stable matching.
We may now define the agreement problem in general.

\begin{definition} \label{def:agreement_pb}
The {\em agreement problem} is the following: given a finite set $X$ and two coherent
choice functions on $X$, $f_{1}$ and $f_{2}$, find a stable agreement, i.e., a set
\mbox{$A \subseteq X$} such that \mbox{$f_{1}(A) =$} \mbox{$f_{2}(A) = A$} and
there is no \mbox{$x \in X - A$} such that 
\mbox{$x \in f_{1}(\{ x \} \cup A) \cap f_{2}(\{ x \} \cup A)$}.
\end{definition}

\section{An algorithm for the agreement problem} \label{sec:algorithm}
\subsection{The algorithm} \label{sec:algo}
We shall now offer a general solution to the agreement problem presented in 
Definition~\ref{def:agreement_pb}.
The algorithm is an iterative one that builds a sequence of sets of contracts: 
\mbox{$Z_{j} , j \geq 0$}.
The first element of the sequence: $Z_{0}$ is taken to be equal to $X$, the set of all contracts.
The sequence is then defined by: 
\begin{equation} \label{eq:Zj}
Z_{j + 1} \: = \: ( Z_{j} - f_{1}(Z_{j} ) ) \cup 
f_{2} (f_{1} (Z_{j}) ).
\end{equation}

The initial set of contracts on the table is $Z_{0} = X$. 
Then the process proceeds in stages in the following manner: 
side $1$ picks its preferred contracts from
the pool and side $2$ picks from this set, providing the set of contracts picked by side $1$
and accepted, i.e., not rejected by side $2$.
This set is put back on the table with all contracts not picked up by side $1$ 
and the process goes into another similar stage.
Note that, in the process, all contracts rejected by side $1$ are left on the table for the next
stage, therefore it is only side $2$ that rejects contracts.
We shall see below in Lemma~\ref{le:<2f1} that indeed side $2$ 
shall like side $1$'s offers more and more, i.e., side $2$ prefers $f_{1}(Z_{j+1})$ to
$f_{1}(Z_{j})$. 

\begin{lemma} \label{le:decreasing}
If $f_{1}$ and $f_{2}$ satisfy Contraction, then the sequence $Z_{j}$ of sets of contracts
is a decreasing sequence, i.e., \mbox{$Z_{j + 1} \subseteq Z_{j}$} for any \mbox{$j \geq 0$}
and there is an index \mbox{$f \geq 0$} such that \mbox{$Z_{f} = Z_{f + 1}$}.
\end{lemma}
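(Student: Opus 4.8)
The plan is to prove the two assertions separately: first that the sequence is decreasing, and then that it stabilizes. For the decreasing part, I would argue by induction on $j$, but the key observation is that the inductive step is uniform: it suffices to show that $Z_{j+1} \subseteq Z_j$ follows from Contraction alone, without reference to previous terms. Indeed, by the defining Equation~(\ref{eq:Zj}), $Z_{j+1} = (Z_j - f_1(Z_j)) \cup f_2(f_1(Z_j))$. The first summand $Z_j - f_1(Z_j)$ is trivially a subset of $Z_j$. For the second summand, Contraction applied to $f_2$ gives $f_2(f_1(Z_j)) \subseteq f_1(Z_j)$, and Contraction applied to $f_1$ gives $f_1(Z_j) \subseteq Z_j$; chaining these, $f_2(f_1(Z_j)) \subseteq Z_j$. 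Hence the union of the two summands is contained in $Z_j$, so $Z_{j+1} \subseteq Z_j$ for every $j \geq 0$.

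For the stabilization part, the argument is entirely finitary and does not need any further hypothesis on $f_1, f_2$. Since $X$ is finite and each $Z_j \subseteq Z_0 = X$, the sequence of cardinalities $|Z_j|$ is a non-increasing sequence of non-negative integers (non-increasing because $Z_{j+1}\subseteq Z_j$). Such a sequence cannot strictly decrease forever, so there is some index $f$ with $|Z_f| = |Z_{f+1}|$; combined with $Z_{f+1} \subseteq Z_f$ this forces $Z_{f+1} = Z_f$. Alternatively one can phrase this as: the chain $Z_0 \supseteq Z_1 \supseteq Z_2 \supseteq \cdots$ in the finite lattice $2^X$ must eventually be constant.

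I do not anticipate any real obstacle here; this lemma is purely a bookkeeping statement about the iteration, and the only subtlety worth flagging is that Contraction for \emph{both} $f_1$ and $f_2$ is genuinely used (one application of each) to handle the term $f_2(f_1(Z_j))$, whereas the first term $Z_j - f_1(Z_j)$ needs nothing. It may be worth remarking, for later use, that $Z_f = Z_{f+1}$ together with the defining equation yields $f_1(Z_f) = f_2(f_1(Z_f))$, i.e., $f_1(Z_f)$ is a fixpoint of $f_2$ — but that observation belongs to the analysis of the algorithm's output rather than to this lemma, so I would leave it for the subsequent development in Section~\ref{sec:properties_algo}.
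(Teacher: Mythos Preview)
Your proof is correct and follows essentially the same approach as the paper: both arguments handle the first term trivially, apply Contraction for $f_2$ and then for $f_1$ to the second term to get $f_2(f_1(Z_j)) \subseteq f_1(Z_j) \subseteq Z_j$, and then invoke finiteness of $X$ for stabilization. Your additional remark that $Z_f = Z_{f+1}$ forces $f_1(Z_f)$ to be a fixpoint of $f_2$ is indeed exactly what the paper uses later, in Lemma~\ref{le:S_agreement}.
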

\begin{proof}
For the first claim:
\[
Z_{j + 1} \, = \, ( Z_{j} - f_{1}(Z_{j}) \cup f_{2} ( f_{1} (Z_{j} ) ) \, \subseteq \,
Z_{j} \cup f_{1}(Z_{j}) \, \subseteq \, Z_{j} \cup Z_{j}.
\]
The second claim holds since $X$ is finite.
\end{proof}
In the Gale-Shapley deferred acceptance algorithm: the set of matches that have not yet been
rejected (by a woman) decreases (weakly).

We shall prove in the upcoming sections that the set \mbox{$S = f_{1} ( Z_{f} )$}
is a stable agreement. We shall also show that $S$ is, in some sense, 
{\em preferred} to any stable agreement by side 1.

\subsection{A special case: Gale-Shapley} \label{sec:Gale_Shapley}
In the marriage problem, the iterative process just described parallels the Gale-Shapley
algorithm in which side $1$ is the proposing side, i.e., the side that makes the first move.
$Z_{0} = X$ is the set of all possible couples. 
The set $f_{1}(Z_{0})$ in Equation~(\ref{eq:Zj}) is the set of pairs \mbox{$(i , j)$}
where $j$ is the partner best preferred by $i$: this describes exactly the first step
of the Gale-Shapley process: agents of side $1$ gathering each around 
the agent of the other side that they prefer among all others.
The set $f_{2}(f_{1}(Z_{0}))$ describes exactly the second step of the process:
each agent of side $2$ around which gathered a non-empty set of agents from side $1$ 
chooses the one it prefers amongst those gathered, i.e., it rejects all but the prefered one.
The definition of $Z_{1}$ says that it contains all contracts of $X$ except those rejected
by side $2$ in the last step. This parallels the fact that, in the upcoming stages of the
Gale-Shapley process all pairings are still possible except those pairings that were just rejected
by the agents of side $2$.
We have seen that, when applied to the marriage situation and when \mbox{$Z_{0} = X$}
Equation~(\ref{eq:Zj}) describes exactly the Gale-Shapley process.

\subsection{Computational complexity} \label{sec:complexity}
The algorithm described in Section~\ref{sec:algo} is conceptually straightforward and elegant,
but is it efficient from the computational point of view?
The answer is: it is remarkably efficient.
Note that, by Lemma~\ref{le:decreasing} the number of steps, $f$, of the algorithm is at most the size 
of the set of contracts $X$.
The algorithm finds a subset of $X$ with certain properties and a dumb algorithm would consider each subset of $X$
and check whether it satisfies the properties or not: this requires a number of operations of the order of
$2^{\mid X \mid}$, i.e., exponential in the size of $X$.
Our algorithm requires only a linear number of operations in the size of $X$.
How large is $X$? 
Considering the discussion in Section~\ref{sec:bilateral}, the set $X$ will be something like the Cartesian product
\mbox{$I \times J \times T$} of the set of agents on one side by the set of agents on the other side by the set
of items that can be traded. 
The considerations in Section~\ref{sec:money} show that this may be multiplied by a finite set of prices, but,
all in all, the size of $X$ is polynomial in the size of the data that defines the problem.
%%CAUTION: the assumption that choice functions are all coherent prallels the "substitutes" assumption of 
%% Kelso-Crawford and in such a case the problem of finding a stable agreement is in P.
This stands in contradistinction with algorithms for finding a competitive equilibrium:
in~\cite{NisanSegal:JET} the authors showed that finding an efficient allocation requires exponential communication.
For finding a stable agreement a polynomial number of applications of the choice functions is enough.

\section{Properties of the algorithm} \label{sec:properties_algo}
The purpose of this section is to study the properties of the sequence of $Z_{j}$'s above
to prove that $S$ is a stable agreement and study the structure of stable agreements.
To this purpose a significant mathematical apparatus is required.
Section~\ref{sec:properties_coherent} 
develops the theory of coherent choice functions.
To any coherent choice function $f$ it associates a partial pre-order $\leq_{f}$ expressing the
preferences revealed by $f$.
Section~\ref{sec:properties_iterative}
analyzes the iterative process defined in Section~\ref{sec:algorithm} in terms of the
partial pre-orders $\leq_{f_{1}}$ and 
$\leq_{f_{2}}$. 
It shows that the algorithm produces a stable
agreement.
Section~\ref{sec:ordering_agreements} studies the properties of the partial
pre-order defined in Section~\ref{sec:properties_coherent} on stable agreements.
Section~\ref{sec:extremal}
shows that the set $S$
above is a stable agreement preferred by side 1 to any stable agreement.
Section~\ref{sec:lattice} proves that the set of stable agreements has a lattice 
structure.

\subsection{Properties of coherent choice functions} \label{sec:properties_coherent}
Section~\ref{sec:first_prop} presents the elementary properties of coherent choice functions.
None of the results presented there are original. 
The most important result is C. Plott's Lemma~\ref{le:path}.
A rapid overview may be enough for a first reading.
Section~\ref{sec:pre-order} defines the partial pre-order associated with a coherent
choice function, first considered by C. Blair.
This is a fundamental tool in the sequel.
Section~\ref{sec:I} is devoted to the proof of an original technical result that will be used only 
in Section~\ref{sec:lattice} and its reading may be postponed.

\subsubsection{First properties} \label{sec:first_prop}
First, IRC is equivalent to the Local Monotonicity property 
studied in~\cite{L:LogicsandSemantics}.
\begin{lemma} \label{le:irrelevance}
The IRC property is equivalent to:
\[
{\bf Local\  Monotonicity - LM} \  \ {\rm if} \ f(A) \subseteq B \subseteq A, \  {\rm then \ }
f(B) \subseteq f(A).
\]
\end{lemma}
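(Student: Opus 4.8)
The plan is to prove the two implications separately, using the Contraction property (which is part of coherence, though note the lemma is stated for a general $f$ satisfying IRC — I will invoke Contraction where the excerpt's discussion makes clear it is available, or work purely from the set-theoretic content of IRC and LM).

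\textbf{From IRC to LM.} Suppose $f(A) \subseteq B \subseteq A$. The idea is to strip away the rejected contracts of $A$ one at a time, each time staying above $f(A)$, until we reach $B$. Formally, list the elements of $A - B$ as $x_1, \dots, x_k$; since $B \supseteq f(A)$, each $x_m \notin f(A)$. Set $A_0 = A$ and $A_{m} = A_{m-1} - \{x_m\}$, so $A_k = B$. I would prove by induction on $m$ that $f(A_m) \subseteq f(A)$. The base case $m=0$ is trivial. For the inductive step, I need $x_{m} \notin f(A_{m-1})$ in order to apply IRC at $A_{m-1}$; this follows because $f(A_{m-1}) \subseteq f(A)$ by the induction hypothesis and $x_m \notin f(A)$. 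Then IRC gives $f(A_m) = f(A_{m-1} - \{x_m\}) \subseteq f(A_{m-1}) \subseteq f(A)$, closing the induction. Taking $m = k$ yields $f(B) \subseteq f(A)$.

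\textbf{From LM to IRC.} Suppose $x \in A$ and $x \notin f(A)$. Then $f(A) \subseteq A - \{x\} \subseteq A$, so $A - \{x\}$ is exactly a set of the form $B$ with $f(A) \subseteq B \subseteq A$; applying LM directly gives $f(A - \{x\}) \subseteq f(A)$, which is IRC.

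The main obstacle — really the only subtle point — is making sure the induction in the first direction is set up so that the hypothesis of IRC ($x_m$ rejected from the current set $A_{m-1}$) is actually available at each stage; this is precisely why one carries the stronger inductive claim $f(A_m) \subseteq f(A)$ rather than just $f(A_m) \subseteq f(A_{m-1})$. The order of deletion does not matter. The reverse direction is immediate and requires no induction.
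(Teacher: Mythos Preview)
Your proof is correct and follows essentially the same route as the paper's. Both directions match: for LM $\Rightarrow$ IRC you take $B = A - \{x\}$ exactly as the paper does, and for IRC $\Rightarrow$ LM you remove rejected contracts one at a time. The only cosmetic difference is in how the induction is packaged: the paper inducts on $|A - B|$ and invokes the full LM statement as inductive hypothesis (so IRC is applied once to the top set $A$, after which $f(A-\{x\}) \subseteq B \subseteq A-\{x\}$ and the IH finishes), whereas you fix an enumeration $x_1,\dots,x_k$ and carry the invariant $f(A_m)\subseteq f(A)$, which forces you to check that $x_m\notin f(A_{m-1})$ at each step. You correctly identify and handle that point; the paper's formulation simply sidesteps it. As for Contraction, note that neither direction of your argument actually invokes it beyond the line $f(A)\subseteq A-\{x\}$ in the LM $\Rightarrow$ IRC direction, and the paper's proof uses it at exactly the same spot.
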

\begin{proof}
To show that IRC implies LM, reason by induction on the size of $A - B$.
If \mbox{$A = B$}, LM holds.
Let \mbox{$x \in A - B$}.
Assume \mbox{$f(A) \subseteq B \subseteq A$} then \mbox{$x \in A - f(A)$},
\mbox{$f(A - \{ x \} ) \subseteq$} \mbox{$f(A)$} and we have
\mbox{$f( A - \{ x \} ) \subseteq $} \mbox{$B \subseteq $} 
\mbox{$A - \{ x \}$}.
We conclude by the induction hypothesis.

Now assume LM and let \mbox{$x \in A - f(A)$}.
We have \mbox{$f(A) \subseteq $} \mbox{$A - \{ x \} \subseteq$} \mbox{$ A$} and
\mbox{$f(A - \{ x \} ) \subseteq $} $f(A)$.
\end{proof}

The next lemma shows that our definition of the Substitutes property is equivalent to the
corresponding formulation in~\cite{Sen:70, Hatfield_Milgrom:2005}
\begin{lemma} \label{le:substitutes}
A choice function $f$ satisfies the Substitutes condition iff it satisfies 
one of the three following, equivalent,properties:
\begin{enumerate}
\item \label{x:out}
for any \mbox{$B \subseteq A \subseteq X$} and for any \mbox{$x \in X$},
if \mbox{$x \in f( \{ x \} \cup A ) $}, then one has \mbox{$x \in f( \{ x \} \cup B ) $},
\item \label{original:substitutes}
if \mbox{$B \subseteq A \subseteq X$}, \mbox{$B \cap f(A) \subseteq f(B)$},
\item \label{notin:substitutes}
for any \mbox{$x \in B \subseteq A \subseteq X$}, if
\mbox{$x \notin f(B)$} then \mbox{$x \notin f( A )$}.
\end{enumerate}
\end{lemma}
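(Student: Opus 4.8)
The plan is to prove the four conditions equivalent by establishing a cycle of implications, choosing the cycle so that each individual step is short. I would argue Substitutes $\Rightarrow$ (\ref{original:substitutes}) $\Rightarrow$ (\ref{notin:substitutes}) $\Rightarrow$ Substitutes, and then close the loop between Substitutes and (\ref{x:out}) separately, since (\ref{x:out}) is just the ``one extra element'' special form of Substitutes and is handled most cleanly on its own.

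First I would show Substitutes $\Rightarrow$ (\ref{original:substitutes}): given $B \subseteq A$ and $x \in B \cap f(A)$, the Substitutes property applied with this $x$, this $B$, this $A$ yields directly $x \in f(B)$; since $x$ was an arbitrary element of $B \cap f(A)$, we get $B \cap f(A) \subseteq f(B)$. Next, (\ref{original:substitutes}) $\Rightarrow$ (\ref{notin:substitutes}) is the contrapositive reading: if $x \in B \subseteq A$ and $x \in f(A)$, then $x \in B \cap f(A) \subseteq f(B)$, so $x \in f(B)$; contraposing, $x \notin f(B)$ forces $x \notin f(A)$. Then (\ref{notin:substitutes}) $\Rightarrow$ Substitutes is again a contrapositive: suppose $x \in B \subseteq A$ and $x \in f(A)$; if we had $x \notin f(B)$, then (\ref{notin:substitutes}) would give $x \notin f(A)$, a contradiction, so $x \in f(B)$. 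This closes the first cycle, so Substitutes, (\ref{original:substitutes}) and (\ref{notin:substitutes}) are all equivalent.

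It remains to fit (\ref{x:out}) into the equivalence. For Substitutes $\Rightarrow$ (\ref{x:out}): given $B \subseteq A$ and $x \in X$ with $x \in f(\{x\} \cup A)$, set $A' = \{x\} \cup A$ and $B' = \{x\} \cup B$, so that $x \in B' \subseteq A'$ and $x \in f(A')$; Substitutes then gives $x \in f(B') = f(\{x\} \cup B)$. For the converse (\ref{x:out}) $\Rightarrow$ Substitutes: suppose $x \in B \subseteq A$ and $x \in f(A)$. Then $\{x\} \cup A = A$, so $x \in f(\{x\} \cup A)$, and (\ref{x:out}) yields $x \in f(\{x\} \cup B) = f(B)$, using $\{x\} \cup B = B$ since $x \in B$. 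Hence (\ref{x:out}) is equivalent to Substitutes as well, completing the proof.

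None of the steps is a serious obstacle; each is a one-line set-membership argument, and the only thing requiring care is bookkeeping the identifications $\{x\} \cup A = A$ and $\{x\} \cup B = B$ when $x$ already lies in the sets, versus the genuinely augmenting case in condition (\ref{x:out}) where $x$ may lie outside $A$. The mildly delicate point is that (\ref{x:out}) is stated for arbitrary $x \in X$ (not necessarily $x \in A$), so to derive it from Substitutes one must pass through $\{x\} \cup A$ rather than $A$ itself; recording that observation explicitly is the one place where the argument is not completely mechanical.
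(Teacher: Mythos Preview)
Your proof is correct and follows essentially the same elementary implication-chasing approach as the paper. The only structural difference is that the paper runs a single cycle Substitutes $\Rightarrow$ (\ref{x:out}) $\Rightarrow$ (\ref{original:substitutes}) $\Rightarrow$ (\ref{notin:substitutes}) $\Rightarrow$ Substitutes, whereas you prove the cycle Substitutes $\Rightarrow$ (\ref{original:substitutes}) $\Rightarrow$ (\ref{notin:substitutes}) $\Rightarrow$ Substitutes and then handle the equivalence with (\ref{x:out}) separately; this costs one extra implication but each step is identical in spirit.
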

\begin{proof}
\begin{enumerate}
\item Let us show that Substitutes implies property~\ref{x:out}.
If we have \mbox{$B \subseteq A$} and \mbox{$x \in f ( \{ x \} \cup A )$}, 
then \mbox{$\{ x \} \cup B \subseteq \{ x \} \cup A$} and, since \mbox{$x \in \{ x \} \cup B$},
by Substitutes we have \mbox{$x \in f ( \{ x \} \cup B ) $}.
\item Assume property~\ref{x:out} and let \mbox{$B \subseteq A$}, \mbox{$x \in B \cap f(A)$}.
We have \mbox{$\{ x \} \cup A = A$} and \mbox{$x \in f ( \{ x \} \cup A ) $}.
By property~\ref{x:out}, 
\mbox{$x \in f ( \{ x \} \cup B ) $}, but \mbox{$ \{ x \} \cup B =$} $B$.
\item We prove now that property~\ref{original:substitutes} 
implies property~\ref{notin:substitutes}.
Assume property~\ref{original:substitutes}, \mbox{$x \in B \subseteq A$} and
\mbox{$x \notin f(B)$}.
We have \mbox{$B \cap f(A) \subseteq f(B)$} and therefore we have \mbox{$x \notin B \cap f(A)$}.
But \mbox{$x \in B$} and \mbox{$x \notin f(A)$}.
\item Let us, finally, show that property~\ref{notin:substitutes} implies Substitutes.
Assume property~\ref{notin:substitutes}, \mbox{$x \in B \subseteq A$} and
\mbox{$x \in f( A )$}.
By contraposition property~\ref{notin:substitutes} implies \mbox{$x \in f( B)$}.
\end{enumerate}
\end{proof}

The next results use all three properties of Definition~\ref{def:coherent}.
\begin{lemma} \label{le:cumulativity}
If $f$ is coherent then, for any \mbox{$A, B \subseteq X$}:
\begin{itemize} \item
{\bf (Cumulativity)} if \mbox{$f(A) \subseteq B \subseteq A$}, then
\mbox{$f(B) = f(A)$}, and
\item {\bf (Idempotence)} f(f(A)) = f(A).
\end{itemize}
\end{lemma}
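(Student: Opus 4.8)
The plan is to prove the two parts in the natural order, deriving Cumulativity first and then getting Idempotence as an immediate corollary. For Cumulativity, suppose $f(A) \subseteq B \subseteq A$. By Lemma~\ref{le:irrelevance} (the equivalence of IRC with Local Monotonicity), we immediately get one inclusion: $f(B) \subseteq f(A)$. So the whole content is the reverse inclusion $f(A) \subseteq f(B)$. For this I would use Substitutes in the form of property~\ref{original:substitutes} of Lemma~\ref{le:substitutes}: with $B \subseteq A$ we have $B \cap f(A) \subseteq f(B)$, and since $f(A) \subseteq B$ by hypothesis, $B \cap f(A) = f(A)$, giving $f(A) \subseteq f(B)$. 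Combining the two inclusions yields $f(B) = f(A)$.

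For Idempotence, apply Cumulativity with the choice $B = f(A)$. First I would note $f(A) \subseteq f(A) \subseteq A$: the left inclusion is trivial and the right is Contraction. So the hypothesis of Cumulativity is satisfied with $B = f(A)$, and Cumulativity gives $f(f(A)) = f(A)$, which is exactly what we want.

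The main obstacle here is essentially bookkeeping rather than any genuine difficulty: one must be careful to invoke the correct already-proved reformulations — Local Monotonicity from Lemma~\ref{le:irrelevance} for the "small" inclusion and the $B \cap f(A) \subseteq f(B)$ form of Substitutes from Lemma~\ref{le:substitutes} for the "large" inclusion — rather than trying to argue directly from the raw element-wise statements in Definition~\ref{def:coherent}. It is worth emphasizing that the two halves of Cumulativity are "one-sided": IRC alone gives $f(B) \subseteq f(A)$ and Substitutes alone gives $f(A) \subseteq f(B)$, so both properties are genuinely needed, consistent with the remark preceding the lemma that "the next results use all three properties." Contraction enters only in the application to Idempotence, to guarantee $f(A) \subseteq A$ so that $B = f(A)$ is a legitimate choice.
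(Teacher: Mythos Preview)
Your proof is correct and essentially identical to the paper's own argument: the paper also uses Lemma~\ref{le:irrelevance} for $f(B)\subseteq f(A)$, Lemma~\ref{le:substitutes} (in the form $B\cap f(A)\subseteq f(B)$, combined with $f(A)\subseteq B$) for the reverse inclusion, and then derives Idempotence from Cumulativity via $f(A)\subseteq f(A)\subseteq A$. Your additional commentary on which axiom drives which inclusion is accurate and matches the paper's framing.
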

\begin{proof}
For Cumulativity, the assumptions imply,  by Lemma~\ref{le:irrelevance},  
that \mbox{$f(B) \subseteq$} \mbox{$ f(A)$}, 
and by Lemma~\ref{le:substitutes}
\mbox{$f(A) = $} \mbox{$B \cap f(A) \subseteq f(B)$}.
For Idempotence, note that
\mbox{$f(A) \subseteq$} \mbox{$ f(A) \subseteq A$} 
and conclude by Cumulativity.
\end{proof}

\begin{lemma} \label{le:union}
If $f$ is coherent, then \mbox{$f(A \cup B) \subseteq f(A) \cup f(B)$}.
\end{lemma}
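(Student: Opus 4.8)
The plan is to prove the inclusion elementwise, and it turns out that only the Substitutes property is needed. First I would fix an arbitrary contract \mbox{$x \in f(A \cup B)$}. By Contraction, \mbox{$f(A \cup B) \subseteq A \cup B$}, so \mbox{$x \in A \cup B$}, which means \mbox{$x$} belongs to at least one of the two sets \mbox{$A$} or \mbox{$B$}.

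Next I would split into the two symmetric cases. Suppose \mbox{$x \in A$}. Then \mbox{$x \in A \subseteq A \cup B \subseteq X$} and \mbox{$x \in f(A \cup B)$}, so the Substitutes property of Definition~\ref{def:coherent} (equivalently, Lemma~\ref{le:substitutes}) immediately gives \mbox{$x \in f(A)$}. In the remaining case \mbox{$x \in B$}, the same argument with \mbox{$B$} in place of \mbox{$A$} gives \mbox{$x \in f(B)$}. In either case \mbox{$x \in f(A) \cup f(B)$}, and since \mbox{$x$} was arbitrary this establishes \mbox{$f(A \cup B) \subseteq f(A) \cup f(B)$}.

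I do not expect a real obstacle here: the statement is a one-line consequence of Substitutes and uses neither IRC nor the derived Cumulativity and Idempotence of Lemma~\ref{le:cumulativity}. The only point to keep in mind is that \mbox{$A \cup B$} need not itself be an agreement or interact well with \mbox{$f$} in any structural way; but because the argument is purely elementwise, with \mbox{$x$} already lying in a fixed subset of \mbox{$A \cup B$}, no such subtlety intervenes.
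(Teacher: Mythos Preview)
Your argument is correct and is essentially the same as the paper's: the paper writes the set-theoretic version \mbox{$f(A\cup B)=(A\cap f(A\cup B))\cup(B\cap f(A\cup B))$} by Contraction and then applies Lemma~\ref{le:substitutes} to each piece, which is exactly your elementwise case split. One small quibble: you say ``only the Substitutes property is needed,'' but you do (correctly) invoke Contraction as well; IRC is indeed not used.
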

\begin{proof}
By Contraction: \mbox{$f(A \cup B) = $} 
\mbox{$( A \cap f(A \cup B) ) \, \cup \, ( B \cap f(A \cup B ) )$}.
By Lemma~\ref{le:substitutes}, \mbox{$A \cap f(A \cup B) \subseteq f(A)$} and 
\mbox{$B \cap f(A \cup B) \subseteq f(B)$}.
\end{proof}

The sets $A$ of contracts that are fixpoints of $f$, i.e., such that \mbox{$f(A) = A$}
are particularly interesting. 
Our next result shows that any subset of a fixpoint is a fixpoint.
\begin{lemma} \label{le:sub_agreement2}
If $f$ is coherent, \mbox{$B \subseteq A \subseteq X$} and 
\mbox{$f(A) = A$}, then \mbox{$f(B) = B$}.
\end{lemma}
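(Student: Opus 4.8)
If $f$ is coherent, $B \subseteq A \subseteq X$ and $f(A) = A$, then $f(B) = B$.

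The plan is to reduce this to Cumulativity (Lemma~\ref{le:cumulativity}). The key observation is that the hypothesis $f(A) = A$ turns the chain $B \subseteq A$ into exactly the configuration Cumulativity needs: we have $f(A) = A$, so trivially $f(A) \subseteq A \subseteq A$, but more to the point we want to insert $B$ into a chain of the form $f(\text{something}) \subseteq B \subseteq \text{something}$.

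First I would note that since $f(A) = A$ and $B \subseteq A$, we have $f(A) = A \supseteq B$, which is the wrong direction for applying Cumulativity directly to the pair $(A, B)$. Instead, apply Cumulativity to a cleverly chosen set. The cleanest route: consider that $f(B) \subseteq B$ by Contraction, and we must show the reverse inclusion $B \subseteq f(B)$. By the Substitutes property in the form of Lemma~\ref{le:substitutes}(\ref{original:substitutes}), since $B \subseteq A$ we get $B \cap f(A) \subseteq f(B)$; but $f(A) = A \supseteq B$, so $B \cap f(A) = B$, hence $B \subseteq f(B)$. Combining with Contraction gives $f(B) = B$.

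So the proof is essentially a two-line application of Substitutes (via the equivalent form established in Lemma~\ref{le:substitutes}) together with Contraction, using the hypothesis $f(A) = A$ to simplify the intersection $B \cap f(A)$ to $B$. The main "obstacle" — if one can call it that — is merely recognizing that one does not need Cumulativity or IRC here at all: the Substitutes direction alone supplies the nontrivial inclusion $B \subseteq f(B)$, and Contraction supplies the other. I would present it in exactly that order: invoke Lemma~\ref{le:substitutes} with the pair $B \subseteq A$, simplify using $f(A) = A$, then close with Contraction.
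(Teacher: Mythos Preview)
Your proof is correct and is essentially identical to the paper's own proof: invoke Lemma~\ref{le:substitutes} (form~\ref{original:substitutes}) with $B \subseteq A$ to get $B \cap f(A) \subseteq f(B)$, use $f(A)=A$ to simplify the left side to $B$, and close with Contraction. The exploratory detour through Cumulativity is unnecessary, as you yourself note, but the final argument matches the paper exactly.
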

\begin{proof}
By assumption \mbox{$B \subseteq A$} and Lemma~\ref{le:substitutes} implies
\mbox{$B \cap f(A) \subseteq f(B)$} but \mbox{$f(A) = A$}, \mbox{$B \cap f(A) = B$}
and \mbox{$B \subseteq f(B)$}. We conclude by Contraction. 
\end{proof}

The next lemma is a powerful result of C. Plott~\cite{Plott:73} and expresses what he calls
Path Independence. It is presented there in a Social Choice context where the elements 
of $X$ are not contracts but possible social outcomes.
A preferred set of social outcomes contains all individually preferred outcomes and not,
as in this work, a set preferred to other sets, as already noticed 
in~\cite{Hatfield_Milgrom:2005} (see footnote 4).
For completeness sake a proof is provided.
\begin{lemma} [C. Plott, Path Independence] \label{le:path}
A function $f$ is coherent iff it satisfies Contraction and one of the two equivalent conditions
below, 
for any \mbox{$A, B \subseteq X$}:
\begin{enumerate}
\item \label{onef} \mbox{$f(A \cup B) = f(f(A) \cup B)$},
\item \label{twof} \mbox{$f(A \cup B) = f( f(A) \cup f(B) )$}.
\end{enumerate}
\end{lemma}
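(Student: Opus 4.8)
The plan is to establish the chain of equivalences ``$f$ coherent $\iff$ Contraction $+$ (\ref{onef}) $\iff$ Contraction $+$ (\ref{twof})'' by proving four implications: coherent $\Rightarrow$ (\ref{onef}); (\ref{onef}) $\Rightarrow$ (\ref{twof}); (\ref{twof}) $\Rightarrow$ (\ref{onef}); and Contraction $+$ (\ref{onef}) $\Rightarrow$ coherent. Since coherence includes Contraction by definition, and the last implication combined with (\ref{twof}) $\Rightarrow$ (\ref{onef}) also gives Contraction $+$ (\ref{twof}) $\Rightarrow$ coherent, this covers everything.

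For \emph{coherent $\Rightarrow$ (\ref{onef})}: the inclusions $f(A\cup B)\subseteq f(A)\cup f(B)\subseteq f(A)\cup B\subseteq A\cup B$ hold by Lemma~\ref{le:union} and Contraction, so Cumulativity (Lemma~\ref{le:cumulativity}) applied to $f(A)\cup B\subseteq A\cup B$ with $f(A\cup B)\subseteq f(A)\cup B$ yields $f(f(A)\cup B)=f(A\cup B)$. For the \emph{equivalence of (\ref{onef}) and (\ref{twof})}: assuming (\ref{onef}), apply it once to the pair $(A,B)$ and once to the pair $(B,f(A))$ to get $f(A\cup B)=f(f(A)\cup B)=f(f(B)\cup f(A))$; assuming (\ref{twof}), first take $B=A$ to obtain idempotence $f(f(A))=f(A)$, then apply (\ref{twof}) to $(f(A),B)$ and to $(A,B)$ to conclude $f(f(A)\cup B)=f(f(A)\cup f(B))=f(A\cup B)$. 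Neither direction needs Contraction.

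For \emph{Contraction $+$ (\ref{onef}) $\Rightarrow$ coherent}: Contraction is assumed outright. To get IRC, suppose $x\in A$ and $x\notin f(A)$; then $f(A)\subseteq A-\{x\}$, so with $B=A-\{x\}$ we have $A\cup B=A$ and $f(A)\cup B=B$, whence (\ref{onef}) applied to $(A,B)$ gives $f(A)=f(A-\{x\})$ --- in fact the stronger Cumulativity statement advertised in the text --- so a fortiori $f(A-\{x\})\subseteq f(A)$. To get Substitutes, it is enough by Lemma~\ref{le:substitutes} to verify $B\cap f(A)\subseteq f(B)$ for $B\subseteq A$: writing $C=A-B$ so that $A=B\cup C$, condition (\ref{onef}) gives $f(A)=f(f(B)\cup C)$, and Contraction gives $f(f(B)\cup C)\subseteq f(B)\cup C$; intersecting with $B$ and using $C\cap B=\emptyset$ and $f(B)\subseteq B$ yields $B\cap f(A)\subseteq f(B)$.

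The steps are short once the earlier machinery (Lemmas~\ref{le:union}, \ref{le:cumulativity}, \ref{le:substitutes}, \ref{le:irrelevance}) is available; the only mildly delicate point is recognizing that condition (\ref{onef}), under the trivial rewritings $A\cup B=A$ and $f(A)\cup B=B$, directly encodes Cumulativity (hence IRC), and that intersecting the Contraction inclusion with $B$ delivers Substitutes --- so no induction or case analysis is needed. I would present the four implications in the order above.
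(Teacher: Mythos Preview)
Your proposal is correct and follows essentially the same route as the paper's own proof: the equivalence of (\ref{onef}) and (\ref{twof}) via idempotence, the forward direction via Lemma~\ref{le:union} and Cumulativity, and the backward direction by reading off Cumulativity/IRC and the Substitutes inclusion $B\cap f(A)\subseteq f(B)$ from (\ref{onef}) plus Contraction. The only cosmetic difference is that the paper verifies Local Monotonicity for an arbitrary $B$ with $f(A)\subseteq B\subseteq A$ (then invokes Lemma~\ref{le:irrelevance}), whereas you specialize to $B=A-\{x\}$ to get IRC directly; the underlying computation $f(B)=f(f(A)\cup B)=f(A\cup B)=f(A)$ is identical.
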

\begin{proof}
First, notice that condition~\ref{onef} implies \mbox{$f ( B \cup f(A) ) =$} 
\mbox{$f ( f( B ) \cup f(A) )$} and therefore implies condition~\ref{twof}.
Then, notice that condition~\ref{twof} implies Idempotence:
\mbox{$f(A) =$} \mbox{$f(A \cup A) =$}
\mbox{$f( f(A) \cup f(A) ) = $} \mbox{$f ( f ( A ) )$}.
Therefore it implies that \mbox{$f ( f(A) \cup f(B) ) =$}
\mbox{$f ( f(f(A)) \cup f(B) ) =$} \mbox{$f(f(A) \cup B)$}
and therefore implies condition~\ref{onef}.

Suppose now that $f$ is coherent. 
By Lemma~\ref{le:union} and Contraction
\mbox{$f( A \cup B ) \subseteq$} \mbox{$ f(A) \cup f(B) \subseteq$} 
\mbox{$f(A) \cup B \subseteq$} \mbox{$A \cup B$}.
By Cumulativity in Lemma~\ref{le:cumulativity} we conclude that
\mbox{$f( f(A) \cup B) =$} \mbox{$f ( A \cup B )$}.
We have shown that $f$ satisfies condition~\ref{onef}.

Let now $f$ satisfy Contraction and condition~\ref{onef}.
Assume \mbox{$f(A) \subseteq$} \mbox{$B \subseteq$} $A$.
We have 
\[
f ( B ) \, = \, f ( f(A) \cup B ) \, = \, f (A \cup B) \, = \, f(A)
\]
and Local Monotonicity is satisfied.
By Lemma~\ref{le:irrelevance} we conclude that $f$ satisfies Irrelevance of rejected contracts.
If, now, \mbox{$ B \subseteq A$} we have
\[
f(A) \, = \, f( B \cup (A - B) ) \, = \, f ( f(B) \cup (A - B) ) \subseteq f(B) \cup ( A - B ).
\]
We see that \mbox{$B \cap f(A) \subseteq$} $f(B)$.
By Lemma~\ref{le:substitutes} we have shown that $f$ satisfies Substitutes.
We have shown that $f$ is coherent.
\end{proof}

\subsubsection{The partial pre-order induced by a coherent choice function} 
\label{sec:pre-order}
We shall now show that any coherent choice function induces a partial pre-order 
$\leq_{f}$ on the set $2^{X}$.

The choice function $f$ naturally defines a {\em preference} relation between sets
of contracts. A set $A$ is preferred by $f$ to a set $B$ if the best subset of $A \cup B$ 
is $f(A)$, i.e., if the addition of $B$ to $A$ does not make the agent change her
mind in any way: she will stick to $f(A)$.
This relation has been considered by Blair~\cite{Blair:88} in his Definition~4.1.

\begin{definition} \label{def:leqf}
Let $f$ be a coherent choice function and \mbox{$A , B \subseteq X$}.
We shall say that, from the point of view of $f$, $A$ is preferable to $B$ and write 
\mbox{$B \leq_{f} A$} iff \mbox{$f(A \cup B) = f(A)$}.
We shall say that $f$ is indifferent between $A$ and $B$ and write 
\mbox{$A \sim_{f} B$} iff \mbox{$B \leq_{f} A$} and \mbox{$A \leq_{f} B$}.
\end{definition}

The reader should note that the relation $\leq_{f}$ is a very demanding one: 
\mbox{$B \leq_{f} A$}
requires that the contracts of $B$ do not provide any advantage over those of $A$:
if $B$ is available in addition to $A$ the elements of $B$ that are not in $A$ will not
be used at all.
This demanding character of $\leq_{f}$ implies that a claim such as the claim in
Theorem~\ref{the:final} that any stable agreement $A$ is less preferred 
than a specific set $S$ is a powerful claim.
If $f$ is defined by a v-substitutes valuation $v$ as in 
Appendix~\ref{app:coherent_gross}
the partial pre-order $\leq_{f}$ is not the total pre-order defined by the valuation function.
True, if \mbox{$B \leq_{f} A$} one has \mbox{$v(B) \leq v(A)$} but the converse does 
not hold. 

From now on, all choice functions considered are assumed to be coherent and the
assumption that $f$ is coherent will not be mentioned explicitly.

In the marriage problem, a man $i$ prefers a set \mbox{$A \subseteq X_{i}$} of contracts
to a set \mbox{$B \subseteq X_{i}$} iff the most preferred woman in $A$ is (weakly) preferred
to the most preferred woman in $B$, the empty set being less preferred than any set.
Note that, in this specific situation, the $\leq_{i}$ relation for man $i$ is a total order. 
This is not the case in general.

The following provides equivalent definitions.
\begin{lemma} \label{le:equiv_leq}
\mbox{$B \leq_{f} A$} iff \mbox{$f(A \cup B) \subseteq f(A) $} iff 
\mbox{$f(A \cup B) \subseteq A$}.
\end{lemma}
\begin{proof}
The only parts are obvious.
Suppose now that \mbox{$f(A \cup B) \subseteq A$}.
We have \mbox{$f(A \cup B) \subseteq A \subseteq A \cup B$} and by Cumulativity we have
\mbox{$f(A) = f(A \cup B)$}.
\end{proof}

We proceed to study the properties of this binary relation.
\begin{lemma} \label{le:subset_reflex}
For any \mbox{$B \subseteq A \subseteq X$}:
\begin{enumerate}
\item \mbox{$B \leq_{f} A$},
\item the relation $\leq_{f}$ is reflexive,
\item \label{simf} \mbox{$A \sim_{f} f(A)$}.
\end{enumerate}
\end{lemma}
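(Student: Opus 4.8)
The plan is to prove the three items in the order given, using each as a stepping stone for the next. For item~1, suppose $B \subseteq A \subseteq X$. Then $A \cup B = A$, so $f(A \cup B) = f(A) \subseteq A$, and by Lemma~\ref{le:equiv_leq} (the characterization $B \leq_f A$ iff $f(A \cup B) \subseteq A$) we immediately get $B \leq_f A$. Alternatively one can just observe $f(A \cup B) = f(A) = f(A)$ directly, which is the defining condition in Definition~\ref{def:leqf}; either way the argument is a one-liner once we notice $A \cup B = A$.

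For item~2, reflexivity of $\leq_f$ is the special case $B = A$ of item~1: taking $B = A$ gives $A \subseteq A \subseteq X$, hence $A \leq_f A$. So nothing new is needed here beyond instantiating item~1.

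For item~3, I want to show $A \sim_f f(A)$, i.e., both $f(A) \leq_f A$ and $A \leq_f f(A)$. Since $f(A) \subseteq A$ by Contraction, item~1 gives $f(A) \leq_f A$ directly. For the reverse direction $A \leq_f f(A)$, I would use Lemma~\ref{le:equiv_leq}: it suffices to check $f(f(A) \cup A) \subseteq f(A)$. But $f(A) \cup A = A$, so $f(f(A) \cup A) = f(A) \subseteq f(A)$, which holds trivially. Hence $A \leq_f f(A)$, and combining the two inclusions yields $A \sim_f f(A)$.

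None of the three parts presents a genuine obstacle; the only thing to be careful about is invoking the right equivalent formulation from Lemma~\ref{le:equiv_leq} rather than unfolding definitions by hand, and noticing the trivial set identity $A \cup B = A$ when $B \subseteq A$. If one preferred not to route through Lemma~\ref{le:equiv_leq}, item~3's reverse direction could instead be obtained from Idempotence (Lemma~\ref{le:cumulativity}), since $f(f(A) \cup A) = f(A)$ and $f(f(A)) = f(A)$, but the set-identity argument is cleaner.
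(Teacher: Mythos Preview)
Your proposal is correct. Items~1 and~2 match the paper's proof exactly: the paper writes ``By Definition~\ref{def:leqf}'' for item~1 (i.e., $A\cup B=A$ so $f(A\cup B)=f(A)$) and ``By the previous claim'' for item~2.

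For item~3 there is a small but genuine difference. The paper simply cites Path Independence (Lemma~\ref{le:path}): from $f(f(A)\cup A)=f(A\cup A)=f(A)$ and Idempotence one gets $f(f(A)\cup A)=f(f(A))$, which is the defining equality for $A\leq_f f(A)$. Your route instead goes through the third equivalent of Lemma~\ref{le:equiv_leq}, reducing $A\leq_f f(A)$ to $f(f(A)\cup A)\subseteq f(A)$, which collapses to $f(A)\subseteq f(A)$ once you note $f(A)\cup A=A$. Your argument is slightly more elementary---it uses only Contraction and Lemma~\ref{le:equiv_leq}, bypassing both Path Independence and Idempotence---while the paper's one-word citation leans on the heavier Lemma~\ref{le:path}. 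Either way the content is the same trivial set identity, so the difference is cosmetic.
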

\begin{proof}
By Definition~\ref{def:leqf}. By the previous claim. By Lemma~\ref{le:path}.
\end{proof}

\begin{lemma} \label{le:antisym}
\mbox{$A \sim_{f} B$} iff \mbox{$f(A) = f(B)$} and therefore the relation
$\sim_{f}$ is an equivalence relation.
\end{lemma}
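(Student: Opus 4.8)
The plan is to prove the two directions of the equivalence \mbox{$A \sim_{f} B \iff f(A) = f(B)$} separately, and then observe that the right-hand side is manifestly an equivalence relation (reflexivity, symmetry and transitivity of equality of the sets $f(A)$), which immediately transfers to $\sim_{f}$.

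\begin{proof}
Suppose first that \mbox{$f(A) = f(B)$}. Then, using Lemma~\ref{le:path},
\mbox{$f(A \cup B) = f(f(A) \cup f(B)) = f(f(A) \cup f(A)) = f(f(A)) = f(A)$},
where the last equality is Idempotence (Lemma~\ref{le:cumulativity}); so \mbox{$B \leq_{f} A$}. By symmetry of the hypothesis, \mbox{$f(B \cup A) = f(B)$} as well, hence \mbox{$A \leq_{f} B$}, and therefore \mbox{$A \sim_{f} B$}.

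Conversely, suppose \mbox{$A \sim_{f} B$}, i.e.\ \mbox{$f(A \cup B) = f(A)$} and \mbox{$f(A \cup B) = f(B)$} (noting \mbox{$A \cup B = B \cup A$}). Combining the two equalities gives \mbox{$f(A) = f(A \cup B) = f(B)$} directly.

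Finally, since \mbox{$A \sim_{f} B$} holds exactly when \mbox{$f(A) = f(B)$}, the relation $\sim_{f}$ inherits reflexivity, symmetry and transitivity from equality of sets, so $\sim_{f}$ is an equivalence relation.
\end{proof}

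I do not expect any real obstacle here: the content is a short unwinding of Definition~\ref{def:leqf} together with Idempotence and the path-independence identity \mbox{$f(A \cup B) = f(f(A) \cup f(B))$} from Lemma~\ref{le:path}. The only point requiring a moment's care is the ``if'' direction, where one must route through $f(A \cup B)$ rather than trying to compare $A$ and $B$ directly; everything else is bookkeeping.
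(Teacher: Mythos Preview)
Your proof is correct and follows essentially the same approach as the paper. The only minor difference is in the ``if'' direction: the paper obtains \mbox{$f(A\cup B)\subseteq f(A)\cup f(B)=f(A)$} via Lemma~\ref{le:union} and then invokes Lemma~\ref{le:equiv_leq}, whereas you compute \mbox{$f(A\cup B)=f(f(A)\cup f(B))=f(f(A))=f(A)$} directly via Path Independence and Idempotence; both routes are equally short and rest on the same underlying machinery.
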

\begin{proof}
If \mbox{$A \sim_{f} B$} we have \mbox{$f(A \cup B) = f(A)$} and 
\mbox{$f(A \cup B) = f(B)$}. 
We conclude that \mbox{$f(A) = f(B)$}.

Assume that \mbox{$f(A) = f(B)$}.
By Lemma~\ref{le:union} we have \mbox{$f(A \cup B) \subseteq$} \mbox{$ f(A) \cup f(B) = $}
\mbox{$f(A) =$} $f(B)$. Conclude by Lemma~\ref{le:equiv_leq}.
\end{proof}

Note that $\leq_{f}$ does not satisfy the antisymmetry property: 
\mbox{$B \leq_{f} A$} and \mbox{$A \leq_{f} B$} does not imply \mbox{$A = B$},
and, therefore, is not a partial order.

\begin{lemma} \label{le:transitive}
If \mbox{$C \leq_{f} B$} and \mbox{$B \leq_{f} A$} then \mbox{$C \leq_{f} A$}, i.e.,
 the relation $\leq_{f}$ is transitive.
\end{lemma}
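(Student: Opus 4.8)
The plan is to prove transitivity of $\leq_{f}$ directly from the definition, using the characterizations in Lemma~\ref{le:equiv_leq} together with Path Independence (Lemma~\ref{le:path}). We are given $C \leq_{f} B$, i.e., $f(B \cup C) = f(B)$, and $B \leq_{f} A$, i.e., $f(A \cup B) = f(A)$, and we must show $f(A \cup C) = f(A)$. By Lemma~\ref{le:equiv_leq} it suffices to show $f(A \cup C) \subseteq A$.

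First I would compute $f(A \cup B \cup C)$ in two ways using condition~\ref{onef} of Lemma~\ref{le:path}. On one hand, $f(A \cup B \cup C) = f(f(B \cup C) \cup A) = f(f(B) \cup A) = f(B \cup A) = f(A)$, where the first equality is Path Independence (replacing the $B \cup C$ part by its choice), the second uses the hypothesis $C \leq_{f} B$, the third is Path Independence again, and the last is the hypothesis $B \leq_{f} A$. So $f(A \cup B \cup C) = f(A) \subseteq A$. On the other hand, $f(A \cup B \cup C) = f(f(A \cup C) \cup B)$ by Path Independence applied to the $A \cup C$ part. Now $f(A \cup C) \subseteq A \cup B \cup C$ trivially, but more to the point I want to deduce $f(A \cup C) \subseteq A$ from the fact that $f(f(A \cup C) \cup B) \subseteq A$.

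The cleanest route: set $D = f(A \cup C)$, so $f(D \cup B) \subseteq A$ from the paragraph above. By Contraction $D \subseteq A \cup C$, hence $D \cup B \subseteq A \cup B \cup C$, and by Cumulativity (Lemma~\ref{le:cumulativity}), since $f(A\cup B\cup C) \subseteq D \cup B \subseteq A \cup B \cup C$ would give what I want — but I need $f(A\cup B\cup C) \subseteq D\cup B$. Indeed $f(A \cup B \cup C) = f(D \cup B)$, and by Contraction $f(D \cup B) \subseteq D \cup B$, so yes $f(A\cup B\cup C) \subseteq D \cup B \subseteq A\cup B\cup C$, whence by Cumulativity $f(D \cup B) = f(A \cup B \cup C) = f(A)$. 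Now apply Substitutes (via Lemma~\ref{le:substitutes}, property~\ref{original:substitutes}) with $D \subseteq D \cup B$: $D \cap f(D \cup B) \subseteq f(D)$, i.e., $D \cap f(A) \subseteq f(D) = f(f(A \cup C)) = f(A \cup C) = D$ by Idempotence (Lemma~\ref{le:cumulativity}). That only gives $D \cap f(A) \subseteq D$, which is trivial, so I need a different pairing. Instead note $D \leq_f B$ is false in general; the right move is: from $f(D\cup B) = f(A)$ and $D \subseteq D \cup B$, Substitutes gives $D \cap f(A) \subseteq f(D) = D$; separately, to show $D \subseteq A$, observe $D = f(A \cup C) \subseteq f(A) \cup f(C)$ by Lemma~\ref{le:union}, so it remains to control the part of $D$ in $C \setminus A$.

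The main obstacle, then, is showing no contract of $D = f(A\cup C)$ lies outside $A$. I expect the decisive step to be: such a contract $x \in D \setminus A$ would satisfy $x \in C$, and since $f(D \cup B) = f(A) \subseteq A$ we'd have $x \notin f(D \cup B)$; but $x \in D \subseteq D \cup B$ with $x \in f(A \cup C)$ and $D \cup B \subseteq A \cup B \cup C \supseteq A \cup C$ — here I would instead argue from the other direction, using that $f(B \cup C) = f(B)$ forces any $x \in C \setminus B$ chosen in a larger set to already be "dominated," and combine with the computation $f(A \cup B \cup C) = f(A)$ via a single application of Substitutes to the inclusion $A \cup C \subseteq A \cup B \cup C$: that gives $(A \cup C) \cap f(A \cup B \cup C) \subseteq f(A \cup C)$, i.e., $(A\cup C) \cap f(A) \subseteq f(A\cup C) = D$, hence $f(A) \subseteq D$ (as $f(A) \subseteq A$), and then $f(D) \supseteq f(A)$... and conversely by Lemma~\ref{le:union} $D = f(A \cup C) \subseteq f(A) \cup f(C)$. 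Taking $f$ of $D$: $f(A) \subseteq f(D) = D \subseteq f(A) \cup f(C)$, and since $D \leq_f$-comparisons and $f(D \cup B) = f(A)$ pin down $f(D) = f(A)$ once we know $f(D) \subseteq A$. I will organize the final writeup around the clean chain $f(A\cup C) \sim_f f(A \cup B \cup C)$ (from the two Path-Independence computations and Cumulativity) together with Lemma~\ref{le:antisym}, which gives $f(A \cup C) = f(A \cup B \cup C) = f(A)$ outright, finishing the proof; the subtlety to get right is justifying $f(A\cup C) \sim_f f(A\cup B\cup C)$, which follows because $f(A\cup B\cup C) \subseteq f(A\cup C) \cup B$ and $f(A \cup C) \subseteq A \cup C \subseteq A \cup B \cup C$ let Cumulativity collapse $f(f(A\cup C)\cup B) = f(A\cup B\cup C)$, while Lemma~\ref{le:subset_reflex}\ref{simf}-style reasoning handles the rest.
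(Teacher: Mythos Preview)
Your overall strategy matches the paper's: use Path Independence to reduce everything to the equality $f(A\cup C)=f(A)$ via the intermediate set $A\cup B\cup C$. You correctly establish $f(A\cup B\cup C)=f(A)$ by the chain $f(A\cup B\cup C)=f(f(B\cup C)\cup A)=f(f(B)\cup A)=f(B\cup A)=f(A)$. The remaining half, $f(A\cup C)=f(A\cup B\cup C)$, is where your proposal drifts.

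The gap is in your final paragraph. You claim $f(A\cup C)\sim_f f(A\cup B\cup C)$ will follow from $f(A\cup B\cup C)\subseteq f(A\cup C)\cup B$ and $f(A\cup C)\subseteq A\cup B\cup C$, which via Cumulativity only yields $f(f(A\cup C)\cup B)=f(A\cup B\cup C)$. That equality is just Path Independence again and does not by itself pin down $f(A\cup C)$; the hand-wave ``Lemma~\ref{le:subset_reflex}\ref{simf}-style reasoning handles the rest'' is where the argument actually has to happen, and you never supply it. The one-line observation you are missing is this: since you already know $f(A\cup B\cup C)=f(A)\subseteq A\subseteq A\cup C\subseteq A\cup B\cup C$, Cumulativity (Lemma~\ref{le:cumulativity}) gives $f(A\cup C)=f(A\cup B\cup C)$ immediately. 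Equivalently---and this is how the paper does it in one shot---apply Path Independence at the start rather than the end: $f(A\cup C)=f(C\cup f(A))=f(C\cup f(A\cup B))=f(C\cup A\cup B)$, and then continue with the chain you already have. All the detours through $D=f(A\cup C)$, Substitutes on $D\subseteq D\cup B$, and Lemma~\ref{le:union} are unnecessary once you see this.
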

\begin{proof}
By Path Independence (Lemma~\ref{le:path}) and our assumptions:
\[
f(C \cup A) \, = \, f(C \cup f(A)) \, = \, f(C \cup f( B \cup A)) \, = \, f(C \cup B \cup A) \, = \, 
\]
\[
f( f(C \cup B) \cup A) \, = \, f( f(B) \cup A) \, = \,
f (B \cup A) \, =\, f(A).
\]
\end{proof}

We see that, if $f$ is coherent, the relation $\leq_{f}$ is a partial pre-order, also called
partial quasi-order.

The following lemma extends the Substitutes property to the case \mbox{$B \leq_{f} A$},
instead of \mbox{$B \subseteq A$}.
\begin{lemma} \label{le:subst_<}
Let \mbox{$A, B \subseteq X$} be such that \mbox{$B \leq_{f} A$}.
If \mbox{$x \in f(\{ x \} \cup A ) $} then \mbox{$x \in$} \mbox{$ f( \{ x \} \cup B ) $}.
\end{lemma}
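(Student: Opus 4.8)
The plan is to reduce the hypothesis $B \leq_f A$ to a statement about a single fixed set, namely $\{x\} \cup A$ versus $\{x\} \cup B$, and then invoke the ordinary Substitutes property (in the convenient form of Lemma~\ref{le:substitutes}, property~\ref{x:out}). First I would observe that $B \leq_f A$ means $f(A \cup B) = f(A)$, so by Cumulativity (Lemma~\ref{le:cumulativity}), since $f(A \cup B) = f(A) \subseteq A \subseteq A \cup B$, the set $f(A)$ is a fixpoint-like "core" that controls all sets sandwiched between it and $A \cup B$. The natural move is to compute $f(\{x\} \cup A \cup B)$ using Path Independence (Lemma~\ref{le:path}): we have
\[
f(\{x\} \cup A \cup B) \;=\; f(\{x\} \cup f(A \cup B)) \;=\; f(\{x\} \cup f(A)) \;=\; f(\{x\} \cup A),
\]
where the last equality is again Path Independence applied to $\{x\}$ and $A$. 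So $f(\{x\} \cup A \cup B) = f(\{x\} \cup A)$, and in particular $x \in f(\{x\} \cup A)$ gives $x \in f(\{x\} \cup A \cup B)$.

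Now I have $x \in f(\{x\} \cup A \cup B)$ and I want $x \in f(\{x\} \cup B)$. Since $\{x\} \cup B \subseteq \{x\} \cup A \cup B$ and $x$ lies in the larger set's choice, the Substitutes property in the form of Lemma~\ref{le:substitutes}(\ref{x:out}) — taking the "$A$" there to be $A \cup B$ and the "$B$" there to be $B$ — yields immediately $x \in f(\{x\} \cup B)$, which is exactly the conclusion. Alternatively one can phrase the same step directly via the original Substitutes clause applied to $\{x\} \cup B \subseteq \{x\} \cup (A \cup B)$.

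I do not expect a serious obstacle here; the only thing to be careful about is the bookkeeping in the Path Independence computation — making sure each rewriting step is a legitimate instance of $f(S \cup T) = f(f(S) \cup T)$ with the correct grouping, and that the hypothesis $B \leq_f A$ is used precisely at the point where $f(A \cup B)$ is replaced by $f(A)$. Everything else is a routine application of results already established (Cumulativity, Path Independence, and the equivalent forms of Substitutes), so the lemma follows in a few lines.
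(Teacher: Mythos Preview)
Your proposal is correct and follows essentially the same route as the paper: compute $f(\{x\}\cup A\cup B)=f(\{x\}\cup f(A\cup B))=f(\{x\}\cup f(A))=f(\{x\}\cup A)$ via Path Independence and the hypothesis $B\leq_f A$, then apply Substitutes to descend to $\{x\}\cup B$. The aside about Cumulativity is unnecessary but harmless.
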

\begin{proof}
By Path Independence, Definition~\ref{def:leqf}, and Path Independence again:
\[
f( \{ x \} \cup A \cup B) \, = \, f( \{ x \} \cup f(A \cup B) ) \, = \,
f ( \{ x \} \cup f(A)) \, = \, f(\{ x \} \cup A).
\]
But, by Substitutes, \mbox{$x \in f( \{ x \} \cup A \cup B)$} implies 
\mbox{$x \in f( \{ x \} \cup B ) $}.
\end{proof}

\subsubsection{The operation $I_{f}$} \label{sec:I}
The following will prove useful in Section~\ref{sec:lattice}
\begin{definition} \label{def:I}
For any \mbox{$A \subseteq X$} we define
\begin{equation} \label{eq:I}
I_{f}(A) \, = \, A \cup \{ x \in X - A \mid x \notin f(\{ x \} \cup A ) \}.
\end{equation}
\end{definition}

The properties of the operation $I$ are described below.
Note, in particular, item~\ref{A<B} that expresses the preference relation $\leq_{f}$
in terms of set containment and the operation $I_{f}$.
\begin{lemma} \label{le:ABI}
For any \mbox{$A , B \subseteq X$}
\begin{enumerate}
\item if \mbox{$A \subseteq B$} then \mbox{$I_{f}(A) \subseteq I_{f}(B)$},
\item \label{IAA} \mbox{$f(I_{f}(A) ) = f(A)$},
\item \label{IfIf} \mbox{$I_{f} ( I_{f} ( A ) ) = I_{f} ( A )$},
\item \label{sim} \mbox{$I_{f} (A) \sim_{f} A$},
\item \label{IAFA} \mbox{$I_{f} ( f ( A ) ) = I_{f} ( A)$},
\item \label{AIBAB} \mbox{$f(A \cup I_{f}(B) ) = f ( A \cup B ) $},
\item \label{IAcapB} \mbox{$I_{f} ( A \cap B ) \subseteq$} \mbox{$I_{f}(A) \cap I_{f}(B)$},
\item \label{A<B} \mbox{$A \leq_{f} B$} iff \mbox{$A \subseteq I_{f}(B)$}.
\end{enumerate}
\end{lemma}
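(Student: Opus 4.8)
The plan is to prove the eight items essentially in the stated order, since the later ones all reduce to the earlier ones, with item~\ref{IAA} serving as the linchpin. I would first dispose of the monotonicity claim (the first item): if $x \in I_f(A) - A$ then $x \notin f(\{x\}\cup A)$ by Definition~\ref{def:I}, and since $\{x\}\cup A \subseteq \{x\}\cup B$, the contrapositive form of Substitutes (property~\ref{notin:substitutes} of Lemma~\ref{le:substitutes}) gives $x \notin f(\{x\}\cup B)$, hence $x \in I_f(B)$; the remaining elements of $I_f(A)$ lie in $A \subseteq B \subseteq I_f(B)$.

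Next I would prove the central identity $f(I_f(A)) = f(A)$ (item~\ref{IAA}). The key observation is that $f(I_f(A)) \subseteq A$: for $x \in I_f(A) - A$ we have $x \notin f(\{x\}\cup A)$ and $x \in \{x\}\cup A \subseteq I_f(A)$, so property~\ref{notin:substitutes} of Lemma~\ref{le:substitutes} yields $x \notin f(I_f(A))$; together with Contraction this gives $f(I_f(A)) \subseteq A$. Since $A \subseteq I_f(A)$ as well, Cumulativity (Lemma~\ref{le:cumulativity}) applied to $f(I_f(A)) \subseteq A \subseteq I_f(A)$ gives $f(I_f(A)) = f(A)$.

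Everything from item~\ref{sim} onward then falls out. Item~\ref{sim} ($I_f(A) \sim_f A$) is immediate from item~\ref{IAA} and Lemma~\ref{le:antisym}. Item~\ref{AIBAB} follows from Path Independence (Lemma~\ref{le:path}) and item~\ref{IAA}: $f(A \cup I_f(B)) = f(A \cup f(I_f(B))) = f(A \cup f(B)) = f(A\cup B)$. For idempotence (item~\ref{IfIf}), monotonicity gives $I_f(A) \subseteq I_f(I_f(A))$, and for the reverse I would take $x \in I_f(I_f(A)) - I_f(A)$ (so $x \notin A$), compute $f(\{x\}\cup I_f(A)) = f(\{x\}\cup f(I_f(A))) = f(\{x\}\cup A)$ by Path Independence and item~\ref{IAA}, and conclude $x \notin f(\{x\}\cup A)$, i.e.\ $x \in I_f(A)$, a contradiction. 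For item~\ref{IAFA}, monotonicity gives $I_f(f(A)) \subseteq I_f(A)$ since $f(A) \subseteq A$; for the reverse one checks $A \subseteq I_f(f(A))$ (for $x \in A - f(A)$, Path Independence gives $f(\{x\}\cup f(A)) = f(\{x\}\cup A) = f(A) \not\ni x$), and then monotonicity together with item~\ref{IfIf} yields $I_f(A) \subseteq I_f(I_f(f(A))) = I_f(f(A))$. Item~\ref{IAcapB} is immediate from monotonicity applied to $A \cap B \subseteq A$ and $A \cap B \subseteq B$.

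The step I expect to require the most care is item~\ref{A<B}, the characterization of $\leq_f$. The right tool is again Cumulativity, and the temptation to argue via Substitutes in the wrong direction must be resisted. For the ``if'' direction, assume $A \subseteq I_f(B)$; then $f(B) = f(I_f(B)) \subseteq B \subseteq A \cup B \subseteq I_f(B)$, so Cumulativity gives $f(A\cup B) = f(I_f(B)) = f(B)$, which by Definition~\ref{def:leqf} (or Lemma~\ref{le:equiv_leq}) means $A \leq_f B$. For the ``only if'' direction, assume $A \leq_f B$, i.e.\ $f(A\cup B) = f(B) \subseteq B$; then for any $x \in A$ we have $f(A\cup B) \subseteq B \subseteq \{x\}\cup B \subseteq A \cup B$, so Cumulativity gives $f(\{x\}\cup B) = f(A\cup B) = f(B)$, whence either $x \in B$ or $x \notin f(\{x\}\cup B)$, so $x \in I_f(B)$. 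The only real subtlety is noticing that the hypothesis $A \leq_f B$ already pins $f(A\cup B)$ below $B$, which is exactly the sandwich Cumulativity needs; once seen, the argument is short.
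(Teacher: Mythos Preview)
Your proof is correct and follows essentially the same approach as the paper's, relying on the same core ingredients (Substitutes in contrapositive form, Cumulativity, and Path Independence) in the same places. Your treatment of items~\ref{IAFA}, \ref{AIBAB}, and~\ref{IAcapB} is slightly more streamlined than the paper's (you invoke Path Independence and monotonicity directly where the paper argues via Lemma~\ref{le:union} and elementwise case splits), but the underlying arguments are the same.
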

\begin{proof}
\begin{enumerate}
\item Let \mbox{$A \subseteq B$}.
For any \mbox{$x \in X - B$}, by Lemma~\ref{le:substitutes},
\mbox{$x \notin f ( \{ x \} \cup A ) $} implies \mbox{$x \notin f( \{ x \} \cup B ) $}.
Therefore 
\[
C \, = \, \{ x \in X - A \mid x \notin f(\{ x \} \cup A )  \} \, \subseteq \,
\]
\[
( B - A ) \cup \{ x \in X - B \mid x \notin f( \{ x \} \cup B ) \} \, = \, D
\]
and 
\[
I_{f} (A) \, = \, A \cup C \, \subseteq \, B \cup D \, = \, I_{f}(B).
\]
\item Let \mbox{$x \in f(I_{f}(A) ) - A $}.
Since \mbox{$A \subseteq I_{f}(A)$}, by Lemma~\ref{le:substitutes}, 
\mbox{$x \in f(\{ x \} \cup A)$}, contradicting \mbox{$x \in I_{f}(A)$}.
We see that \mbox{$f(I_{f}(A) ) \subseteq A$} and we conclude by Cumulativity.
\item By Lemma~\ref{le:union}, Contraction and item~\ref{IAA} above we have
\[
f(\{ x \} \cup I_{f}(A) ) \, \subseteq \{ x \} \cup f ( I_{f} ( A ) ) \, = \,
\{ x \} \cup f (A ) \, \subseteq \, \{ x \} \cup A \, \subseteq \{ x \} \cup I_{f}(A).
\]
By Lemma~\ref{le:cumulativity} we conclude that
\mbox{$f(\{ x \} \cup I_{f}(A) ) =$} \mbox{$f ( \{ x \} \cup A ) $}.
Therefore for any \mbox{$x \in I_{f} ( I_{f} ( A ) ) - I_{f} ( A ) $}, we have 
\mbox{$x \notin f( \{ x \} \cup A ) $} and \mbox{$x \in I_{f}(A)$}, a contradiction.
We have shown that \mbox{$ I_{f} ( I_{f} ( A ) ) \subseteq$} \mbox{$I_{f}(A)$}.
By definition \mbox{$I_{f}(A) \subseteq $} \mbox{$ I_{f} ( I_{f} ( A ) ) $}.
\item By item~\ref{IAA} and Lemma~\ref{le:antisym}.
\item We distinguish three cases.
First, for any \mbox{$x \in f(A)$}, \mbox{$x \in I_{f}(f(A))$} and 
\mbox{$x \in I_{f}(A)$}.
Secondly, for any \mbox{$x \in A - f(A)$}, \mbox{$x \in I_{f}(A)$}.
Let us show that \mbox{$x \in I_{f}(f ( A ) )$}.
By Path Equivalence, for any \mbox{$x \in A - f(A)$}, 
\mbox{$x \in I_{f}(f(A))$} iff
\mbox{$x \notin f(\{ x \} \cup f(A)) =$} \mbox{$f( \{ x \} \cup A ) =$}
\mbox{$f(A)$}. 
We see that \mbox{$x \in$} \mbox{$ I_{f}(f(A) )$}.
Thirdly, for any \mbox{$x \in X - A$}, \mbox{$x \in I_{f}( f(A)))$} iff
\mbox{$x \notin f( \{ x \} \cup f(A))$} and \mbox{$x \in$} \mbox{$I_{f}(A)$} iff
\mbox{$x \notin f( \{ x \} \cup A)$} and the two conditions are equivalent,
by Path Equivalence.
\item By Lemma~\ref{le:union}, item~\ref{IAA} above and Contraction
\[
f( A \cup I_{f}(B) ) \, \subseteq \, f(A) \cup f (I_{f}(B)) \, = \,
f( A )  \cup f(B) \, \subseteq \, A \cup B \, \subseteq \,
A \cup I_{f} ( B ).
\]
We conclude by Cumulativity.
\item Let \mbox{$x \in I_{f} ( A \cap B )$}.
If, on one hand, \mbox{$x \in A \cap B$} then clearly \mbox{$x \in I_{f} ( A \cap B )$}.
If, on the other hand, \mbox{$x \in$} \mbox{$ X - A \cap B$} then 
\mbox{$x \notin f ( \{ x \} \cup A \cap B ) $},
therefore, by Lemma~\ref{le:substitutes}, 
\mbox{$x \notin f ( \{ x \} \cup A ) $} and \mbox{$x \notin f ( \{ x \} \cup B ) $}.
Whether \mbox{$x \in A$} or \mbox{$x \in X - A$}, \mbox{$x \in I_{f}(A)$}
and similarly \mbox{$x \in I_{f}(B)$}.
\item We use Lemma~\ref{le:equiv_leq}. Assume \mbox{$f(A \cup B) \subseteq B$} and \mbox{$x \in A$}.
We have
\[
f (A \cup B ) \, \subseteq \, \{ x \} \cup B \, \subseteq A \cup B
\]
and by Lemma~\ref{le:cumulativity} also \mbox{$f( \{ x \} \cup B ) =$}
\mbox{$f( A \cup B )$}.
If \mbox{$x \in B$} then \mbox{$x \in I_{f}(B)$}, but if
\mbox{$x \in A - B$}, we have
\mbox{$x \notin f(\{ x \} \cup B ) $} and therefore \mbox{$x \in I_{f}(B)$}.
We have shown that \mbox{$A \subseteq I_{f}(B)$}.
Assume now that \mbox{$A \subseteq I_{f}(B)$}. By Lemma~\ref{le:subset_reflex}, then,
and by item~\ref{sim} above we have \mbox{$A \leq_{f}$} \mbox{$ I_{f}(B) \sim_{f} B$} 
and we conclude by Lemma~\ref{le:transitive}.
\end{enumerate}
\end{proof}

\subsection{Properties of the iterative process} \label{sec:properties_iterative}
We may already see that the end product of our iterative process, the set 
\mbox{$S = f_{1}(Z_{f} )$} is an agreement.

\begin{lemma} \label{le:S_agreement}
The set $S$ is an agreement.
\end{lemma}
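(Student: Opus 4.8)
The plan is to check separately the two halves of the agreement condition, namely $f_1(S) = S$ and $f_2(S) = S$, exploiting that $Z_f$ is a fixpoint of the iteration of Equation~(\ref{eq:Zj}), i.e., $Z_f = Z_{f+1}$ (such an $f$ exists by Lemma~\ref{le:decreasing}).

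First I would dispose of $f_1(S) = S$ at once: since $S = f_1(Z_f)$, Idempotence (part of Lemma~\ref{le:cumulativity}) gives $f_1(S) = f_1(f_1(Z_f)) = f_1(Z_f) = S$.

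The real content is $f_2(S) = S$. Here I would expand Equation~(\ref{eq:Zj}) at the stable index: $Z_f = Z_{f+1} = (Z_f - f_1(Z_f)) \cup f_2(f_1(Z_f)) = (Z_f - S) \cup f_2(S)$. By Contraction $S = f_1(Z_f) \subseteq Z_f$, so $Z_f$ is the disjoint union of $Z_f - S$ and $S$; also by Contraction $f_2(S) \subseteq S$. Now take any $x \in S$: then $x \in Z_f = (Z_f - S) \cup f_2(S)$, and $x \notin Z_f - S$ since $x \in S$, hence $x \in f_2(S)$. This proves $S \subseteq f_2(S)$, and together with $f_2(S) \subseteq S$ we obtain $f_2(S) = S$.

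I expect no genuine obstacle here; the only delicate point is the elementary set-theoretic bookkeeping that $S$ and $Z_f - S$ are disjoint (because $S \subseteq Z_f$), which is precisely what lets one read off $S \subseteq f_2(S)$ from the fixpoint equation. Combining the two parts, $S = f_1(S) = f_2(S)$, so $S$ is an agreement in the sense of Definition~\ref{def:agreement}.
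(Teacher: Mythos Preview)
Your proof is correct and follows essentially the same approach as the paper's own argument: idempotence of $f_1$ for the first half, and reading off $S \subseteq f_2(S)$ from the fixpoint equation $Z_f = (Z_f - S) \cup f_2(S)$ together with Contraction for the second. You have merely spelled out the elementary set-theoretic step (that $x\in S$ forces $x\notin Z_f - S$, hence $x\in f_2(S)$) which the paper compresses into a single ``therefore''.
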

\begin{proof}
By Lemma~\ref{le:cumulativity}, $f_{1}$ is idempotent and 
\mbox{$f_{1}(S) =$} \mbox{$f_{1} ( f_{1} ( Z_{f} ) ) =$} \mbox{$f_{1} ( Z_{f} ) =$} $S$.
By construction \mbox{$Z_{f} =$} \mbox{$Z_{f + 1} =$} 
\mbox{$ ( Z_{f} - f_{1}( Z _{f} ) ) \cup f_{2} ( f_{1} ( Z_{f} ) ) $}
and therefore \mbox{$f_{1}( Z_{f} ) \subseteq$} \mbox{$f_{2} ( f_{1} ( Z_{f} ) )$}
and, by Contraction we conclude that \mbox{$f_{1}( Z_{f} ) =$} \mbox{$f_{2} ( f_{1} ( Z_{f} ) )$}
i.e.,
\mbox{$S =$} \mbox{$f_{2}(S)$}.
\end{proof}

We shall now proceed step by step in the analysis of the iterative process described in
Equation~(\ref{eq:Zj}) and gather all the results in Theorem~\ref{the:final} below.
We have seen in Lemma~\ref{le:decreasing} that the sequence $Z_{j}$ is decreasing.

An offer of side $1$, i.e., a contract in $f_{1}(Z_{j})$, that has been accepted by side $2$ 
will always be offered again by side $1$.
\begin{lemma} \label{le:persistence}
For any $j$, \mbox{$j \geq 0$}, 
\[
f_{2}(f_{1}(Z_{j})) \, \subseteq \, f_{1}(Z_{j + 1}).
\]
\end{lemma}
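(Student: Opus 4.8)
I want to show $f_2(f_1(Z_j)) \subseteq f_1(Z_{j+1})$. Recall that $Z_{j+1} = (Z_j - f_1(Z_j)) \cup f_2(f_1(Z_j))$, so in particular $f_2(f_1(Z_j)) \subseteq Z_{j+1}$; the content of the lemma is that every contract in $f_2(f_1(Z_j))$ survives the application of $f_1$ to the larger set $Z_{j+1}$. The natural tool is the Substitutes property: if I can exhibit $f_2(f_1(Z_j))$ as $f_1(C)$ for some set $C$ with $f_2(f_1(Z_j)) \subseteq C \subseteq Z_{j+1}$, then Substitutes gives me immediately that $f_1(C)$'s elements remain in $f_1(Z_{j+1})$. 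But that is circular unless I can identify $C$. A cleaner route: I would first observe that $f_2(f_1(Z_j)) \subseteq f_1(Z_j)$ by Contraction, hence $f_1(Z_j) = f_2(f_1(Z_j)) \cup (f_1(Z_j) - f_2(f_1(Z_j)))$, and the rejected part $f_1(Z_j) - f_2(f_1(Z_j))$ is precisely what gets dropped from $Z_j$ when forming $Z_{j+1}$ (together with $Z_j - f_1(Z_j)$). So $Z_{j+1} = Z_j - (f_1(Z_j) - f_2(f_1(Z_j)))$.

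The key step is then to show that $f_1$ is unchanged (on the relevant part) when one removes contracts that $f_1$ itself rejected. Concretely, $f_1(Z_j) \subseteq Z_{j+1} \cup (\text{rejected contracts})$, and more to the point $f_1(Z_j) \subseteq Z_{j+1}$ fails in general — but what is true is that $f_2(f_1(Z_j)) \subseteq f_1(Z_j) \cap Z_{j+1}$. I would apply Substitutes in the form of Lemma~\ref{le:substitutes}, item~\ref{original:substitutes}: for $B \subseteq A$, $B \cap f(A) \subseteq f(B)$. Unfortunately this needs $Z_{j+1} \subseteq Z_j$, which holds by Lemma~\ref{le:decreasing}, but it would give $Z_{j+1} \cap f_1(Z_j) \subseteq f_1(Z_{j+1})$, and since $f_2(f_1(Z_j)) \subseteq f_1(Z_j)$ and $f_2(f_1(Z_j)) \subseteq Z_{j+1}$, we get $f_2(f_1(Z_j)) \subseteq Z_{j+1} \cap f_1(Z_j) \subseteq f_1(Z_{j+1})$, which is exactly the claim.

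So the proof reduces to two easy facts: (i) $Z_{j+1} \subseteq Z_j$, which is Lemma~\ref{le:decreasing}; and (ii) $f_2(f_1(Z_j)) \subseteq Z_{j+1} \cap f_1(Z_j)$, which follows from the definition of $Z_{j+1}$ (giving the first inclusion into $Z_{j+1}$) and from Contraction applied to $f_2$ (giving $f_2(f_1(Z_j)) \subseteq f_1(Z_j)$). Then Substitutes closes the argument. I expect the only subtlety — the part worth stating carefully rather than the part that is genuinely hard — is making sure the inclusion $f_2(f_1(Z_j)) \subseteq Z_{j+1}$ is read off correctly from Equation~(\ref{eq:Zj}), since the two pieces $Z_j - f_1(Z_j)$ and $f_2(f_1(Z_j))$ need not be disjoint; but that does not obstruct the containment. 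There is no real obstacle here: the lemma is a one-line consequence of Substitutes together with monotonicity of the sequence $Z_j$, and the whole point of isolating it is that it will feed into the later claim (flagged in the text before the statement) that side $2$ likes $f_1(Z_{j+1})$ at least as much as $f_1(Z_j)$.
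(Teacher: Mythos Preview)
Your proof is correct and follows essentially the same route as the paper: from $Z_{j+1}\subseteq Z_j$ (Lemma~\ref{le:decreasing}) and Substitutes in the form of Lemma~\ref{le:substitutes}, item~\ref{original:substitutes}, one gets $Z_{j+1}\cap f_1(Z_j)\subseteq f_1(Z_{j+1})$, and then one observes that $f_2(f_1(Z_j))\subseteq Z_{j+1}\cap f_1(Z_j)$. The paper actually notes the slightly sharper equality $Z_{j+1}\cap f_1(Z_j)=f_2(f_1(Z_j))$ (since $(Z_j - f_1(Z_j))\cap f_1(Z_j)=\emptyset$), but your inclusion is all that is needed.
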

\begin{proof}
Since \mbox{$Z_{j+1} \subseteq Z_{j}$}, by Lemma~\ref{le:substitutes}, 
we have \mbox{$Z_{j + 1} \cap f_{1}(Z_{j}) \subseteq$}
\mbox{$f_{1}(Z_{j + 1})$}.
But, \mbox{$Z_{j + 1} \cap f_{1}(Z_{j}) =$}
\mbox{$f_{2} ( f_{1} ( Z_{j} ) )$} 
by Equation~(\ref{eq:Zj}).
\end{proof}
In the Gale-Shapley deferred acceptance algorithm: a man chosen by a woman at stage $j$,
will still be available for her at stage $j + 1$.

The result below is a central part of our analysis.
From the point of view of side 2, the offers of side 1 get better and better.
\begin{lemma} \label{le:<2f1}
For any \mbox{$ j \geq 0 $}, \mbox{$f_{1} ( Z_{j} ) \leq_{2} f_{1} ( Z_{j + 1 } )$}.
\end{lemma}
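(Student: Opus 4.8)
The plan is to unwind the definition of $\leq_2$ using Lemma~\ref{le:equiv_leq}. Writing $S_j = f_1(Z_j)$ for brevity, the assertion $f_1(Z_j) \leq_2 f_1(Z_{j+1})$, i.e. $S_j \leq_2 S_{j+1}$, is by Lemma~\ref{le:equiv_leq} equivalent to the containment $f_2(S_j \cup S_{j+1}) \subseteq S_{j+1}$. So it suffices to show that every contract that $f_2$ selects out of $S_j \cup S_{j+1}$ already lies in $S_{j+1} = f_1(Z_{j+1})$.

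First I would record the two containments that drive everything. By Contraction, $f_2(S_j) \subseteq S_j$; and by Lemma~\ref{le:persistence}, $f_2(S_j) = f_2(f_1(Z_j)) \subseteq f_1(Z_{j+1}) = S_{j+1}$. Together these say: any contract of $S_j$ lying outside $S_{j+1}$ is, in particular, not in $f_2(S_j)$ (otherwise it would be in $S_{j+1}$).

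Then comes the core step. Let $x \in f_2(S_j \cup S_{j+1})$, so by Contraction $x \in S_j \cup S_{j+1}$. Suppose, for contradiction, $x \notin S_{j+1}$; then $x \in S_j$, and by the previous paragraph $x \notin f_2(S_j)$. Since $S_j \subseteq S_j \cup S_{j+1}$, the Substitutes property in the form of item~\ref{notin:substitutes} of Lemma~\ref{le:substitutes} yields $x \notin f_2(S_j \cup S_{j+1})$, contradicting the choice of $x$. Hence $x \in S_{j+1}$, which establishes $f_2(S_j \cup S_{j+1}) \subseteq S_{j+1}$, and therefore $S_j \leq_2 S_{j+1}$ by Lemma~\ref{le:equiv_leq}.

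I do not expect a genuine obstacle here; the proof is short. The two points that need care are: (i) getting the direction of $\leq_2$ right, so that one is checking $f_2(S_j \cup S_{j+1}) \subseteq S_{j+1}$ and not the reverse; and (ii) making sure Lemma~\ref{le:persistence} is in hand, since the containment $f_2(f_1(Z_j)) \subseteq f_1(Z_{j+1})$ is precisely what lets one conclude that a contract dropped from $S_j \setminus S_{j+1}$ was already rejected by $f_2$ at stage $j$ — and this, combined with the antimonotonicity packaged in Substitutes, is the whole engine of the argument. One could additionally remark that in fact $S_j \cap S_{j+1} = f_2(f_1(Z_j))$, but this sharper identity is not needed.
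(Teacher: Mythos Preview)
Your proof is correct and follows essentially the same line as the paper's: reduce via Lemma~\ref{le:equiv_leq} to the containment \mbox{$f_{2}(S_{j}\cup S_{j+1})\subseteq S_{j+1}$}, then use Lemma~\ref{le:persistence} together with Substitutes. The only cosmetic difference is that the paper invokes Lemma~\ref{le:union} to obtain \mbox{$f_{2}(S_{j}\cup S_{j+1})\subseteq f_{2}(S_{j})\cup f_{2}(S_{j+1})$} in one stroke, whereas you unpack that step element-wise; the paper also records a second, order-theoretic proof via \mbox{$S_{j}\sim_{2} f_{2}(S_{j})\leq_{2} S_{j+1}$} and transitivity.
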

\begin{proof}
By Lemma~\ref{le:equiv_leq}, it is enough to show 
\mbox{$f_{2} ( f_{1} ( Z_{j} ) \cup f_{1} ( Z_{j + 1 } ) ) \subseteq$}
\mbox{$f_{1} ( Z_{j + 1 } )$}.
By Lemma~\ref{le:union}, we have:
\mbox{$f_{2} ( f_{1} ( Z_{j} ) \cup f_{1} ( Z_{j + 1 } ) ) \subseteq$}
\mbox{$f_{2} ( f_{1} (Z_{j} ) ) \cup f_{2} ( f_{1} ( Z_{j + 1 } ) )$}.
We conclude by Lemma~\ref{le:persistence} and Contraction.

The following is an alternative proof.
By Lemma~\ref{le:subset_reflex}, \mbox{$f_{1}(Z_{i}) \sim_{2}$} \mbox{$  f_{2}(f_{1}(Z_{i}))$} and
\mbox{$f_{2}(f_{1}(Z_{i})) \leq_{2}$} \mbox{$f_{1}(Z_{i + 1} ) $}, by Lemma~\ref{le:persistence}.
We conclude by Lemma~\ref{le:transitive}.
\end{proof}
In the Gale-Shapley algorithm: after stage $j + 1$ a woman is either left in the same situation she was after stage $j$ or she has a better (for her) mate.

Our next result is central towards showing that $S$ is a stable set.
\begin{lemma} \label{le:notf2}
For any \mbox{$ j \geq 0$}, if \mbox{$x \in Z_{j} - Z_{j + 1}$},
then, for any \mbox{$k > j$}, one has
\mbox{$x \notin f_{2} ( \{ x \} \cup f_{1} ( Z_{k} ) )$} and in particular
\mbox{$x \notin f_{2} ( \{ x \} \cup S )$}.
\end{lemma}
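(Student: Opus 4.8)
The plan is to prove the statement by induction on $k > j$, exploiting the fact that once a contract $x$ is dropped at stage $j+1$ it is because side $2$ rejected it, and then using the monotonicity of side $1$'s offers from side $2$'s point of view (Lemma~\ref{le:<2f1}) to propagate the rejection forward. First I would unwind the hypothesis: if $x \in Z_j - Z_{j+1}$, then by Equation~(\ref{eq:Zj}), since all contracts rejected by $f_1$ stay in the pool, $x$ must have been in $f_1(Z_j)$ but dropped by $f_2$, i.e. $x \in f_1(Z_j)$ and $x \notin f_2(f_1(Z_j))$. Because $x \in f_1(Z_j) \subseteq f_1(Z_j) \cup \{x\}$, coherence of $f_2$ (specifically the IRC-type consequence, or simply Lemma~\ref{le:cumulativity}/Lemma~\ref{le:substitutes}) lets me conclude $x \notin f_2(\{x\} \cup f_1(Z_j))$: indeed $f_2(f_1(Z_j)) = f_2(\{x\} \cup f_1(Z_j)) \cap f_1(Z_j)$ by Substitutes, or more directly, adding back the rejected $x$ to $f_1(Z_j)$ does not make $x$ accepted. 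This establishes the base case $k = j+1$.

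**The inductive step.**

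For the inductive step, suppose $x \notin f_2(\{x\} \cup f_1(Z_k))$ for some $k > j$; I want $x \notin f_2(\{x\} \cup f_1(Z_{k+1}))$. The key is Lemma~\ref{le:<2f1}, which gives $f_1(Z_k) \leq_2 f_1(Z_{k+1})$. Now I apply Lemma~\ref{le:subst_<} with $A = f_1(Z_{k+1})$, $B = f_1(Z_k)$ — wait, I must be careful with the direction. Lemma~\ref{le:subst_<} states: if $B \leq_f A$ and $x \in f(\{x\} \cup A)$ then $x \in f(\{x\} \cup B)$. Taking the contrapositive: if $B \leq_f A$ and $x \notin f(\{x\} \cup B)$, then $x \notin f(\{x\} \cup A)$. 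So with $B = f_1(Z_k)$, $A = f_1(Z_{k+1})$, and the relation $f_1(Z_k) \leq_2 f_1(Z_{k+1})$ from Lemma~\ref{le:<2f1}, the inductive hypothesis $x \notin f_2(\{x\} \cup f_1(Z_k))$ yields exactly $x \notin f_2(\{x\} \cup f_1(Z_{k+1}))$. This closes the induction, and since $S = f_1(Z_f)$ with $f \geq j$ (as $k$ ranges over all indices greater than $j$, including $f$ once $j < f$; and if $j \geq f$ the hypothesis $Z_j - Z_{j+1} \neq \emptyset$ is vacuous since the sequence has stabilized), we get $x \notin f_2(\{x\} \cup S)$ as the special case.

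**Anticipated obstacle.**

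The main obstacle I expect is getting the base case exactly right — specifically, passing from "$x \notin f_2(f_1(Z_j))$" (which is what the iteration literally gives) to "$x \notin f_2(\{x\} \cup f_1(Z_j))$", which is the form needed to feed into Lemma~\ref{le:subst_<}. This requires invoking a coherence property of $f_2$ that says re-adding a contract already present and rejected does not resurrect it; concretely, since $x \in f_1(Z_j)$ we have $\{x\} \cup f_1(Z_j) = f_1(Z_j)$, so in fact $f_2(\{x\} \cup f_1(Z_j)) = f_2(f_1(Z_j))$ trivially and the base case is immediate. The only real subtlety is confirming that $x \in f_1(Z_j)$ in the first place — which follows because $x \in Z_j - Z_{j+1}$ and $Z_{j+1} \supseteq Z_j - f_1(Z_j)$ forces $x \in f_1(Z_j)$, and then $x \in Z_j - Z_{j+1}$ with $Z_{j+1} \supseteq f_2(f_1(Z_j))$ forces $x \notin f_2(f_1(Z_j))$. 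The rest is a clean two-line induction driven entirely by Lemma~\ref{le:<2f1} and Lemma~\ref{le:subst_<}.
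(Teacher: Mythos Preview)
Your proposal is correct and follows essentially the same line as the paper's proof: both extract from $x\in Z_j - Z_{j+1}$ that $x\in f_1(Z_j)-f_2(f_1(Z_j))$, hence $x\notin f_2(\{x\}\cup f_1(Z_j))$, and then push this rejection forward using Lemma~\ref{le:<2f1} together with Lemma~\ref{le:subst_<}. The only cosmetic difference is that the paper invokes transitivity of $\leq_2$ (Lemma~\ref{le:transitive}) once to obtain $f_1(Z_j)\leq_2 f_1(Z_k)$ directly and then applies Lemma~\ref{le:subst_<} a single time, whereas you unfold this into an explicit induction on $k$; also note that what you actually establish first is the case $k=j$ (not $k=j+1$), which then feeds the inductive step---a harmless labeling slip.
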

\begin{proof}
Let \mbox{$x \in Z_{j} - Z_{j + 1}$}.
We have \mbox{$x \in f_{1}(Z_{j}) - f_{2} ( f_{1} ( Z_{j} ) ) $} and therefore
\mbox{$x \notin f_{2} ( \{ x \} \cup f_{1} ( Z_{j} ) ) $}.
But, by Lemmas~\ref{le:<2f1} and~\ref{le:transitive}, 
\mbox{$f_{1} ( Z_{j} ) \leq_{2} f_{1} ( Z_{ k } )$}.
Lemma~\ref{le:subst_<} then implies that
\mbox{$x \notin f_{2} ( \{ x \} \cup f_{1} ( Z_{ k }) ) $}.
\end{proof}

In the Gale-Shapley algorithm: a woman who has rejected a man at some point will never
wish to be matched with him at some later point.

We may now state:
\begin{lemma} \label{le:S_stable}
The set $S$ is a stable agreement.
\end{lemma}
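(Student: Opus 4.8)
The plan is to combine Lemma~\ref{le:S_agreement}, which already shows that $S$ is an agreement, with a proof that $S$ is a stable set; by Definition~\ref{def:stable_agreement} the two together give that $S$ is a stable agreement.

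First I would reduce the stability condition of Definition~\ref{def:stable0} to singletons. Suppose some nonempty $B$ disjoint from $S$ satisfied $B \subseteq f_{1}(S \cup B) \cap f_{2}(S \cup B)$, and pick $x \in B$. Since $\{ x \} \cup S \subseteq S \cup B$ and $x \in f_{1}(S \cup B)$, the Substitutes property (Lemma~\ref{le:substitutes}) yields $x \in f_{1}(\{ x \} \cup S)$, and likewise $x \in f_{2}(\{ x \} \cup S)$. Hence it suffices to show that for every $x \in X - S$ one has $x \notin f_{1}(\{ x \} \cup S)$ or $x \notin f_{2}(\{ x \} \cup S)$.

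Next I would split according to whether $x$ survives to $Z_{f}$. If $x \in Z_{f}$, then, using Path Independence (Lemma~\ref{le:path}) together with $\{ x \} \cup Z_{f} = Z_{f}$, we get $f_{1}(\{ x \} \cup S) = f_{1}(\{ x \} \cup f_{1}(Z_{f})) = f_{1}(\{ x \} \cup Z_{f}) = f_{1}(Z_{f}) = S$; since $x \notin S$ this gives $x \notin f_{1}(\{ x \} \cup S)$. If instead $x \notin Z_{f}$, then because the sequence $(Z_{j})$ is decreasing (Lemma~\ref{le:decreasing}) with $Z_{0} = X$, there is an index $j$ with $0 \leq j < f$ and $x \in Z_{j} - Z_{j + 1}$; Lemma~\ref{le:notf2} then gives $x \notin f_{2}(\{ x \} \cup S)$. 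In either case the required disjunction holds, so $S$ is a stable set, and with Lemma~\ref{le:S_agreement} the proof is complete.

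I do not expect a genuine obstacle here: the technical content has been front-loaded into Lemma~\ref{le:notf2} (rejected contracts stay rejected) and Lemma~\ref{le:S_agreement}. The only point demanding care is the singleton reduction, where Substitutes must be invoked in the form of Lemma~\ref{le:substitutes} for both choice functions, together with the bookkeeping that every $x \in X - S$ was either rejected by side $1$ at the fixpoint stage or rejected by side $2$ at some earlier stage.
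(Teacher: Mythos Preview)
Your proposal is correct and follows essentially the same two-case argument as the paper: for $x \in Z_{f}$ use Path Independence to get $x \notin f_{1}(\{x\} \cup S)$, and for $x \notin Z_{f}$ invoke Lemma~\ref{le:notf2}. The only difference is that you spell out the singleton reduction from Definition~\ref{def:stable0} via Substitutes, which the paper leaves implicit (it simply checks singletons and declares $S$ stable); your version is slightly more complete on that point.
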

\begin{proof}
Lemma~\ref{le:S_agreement} showed that $S$ is an agreement.
Let now \mbox{$x \in X - S$}. We distinguish two cases.
On one hand, if \mbox{$x \in Z_{f}$}, \mbox{$x \in Z_{f} - f_{1} ( Z_{f} )$}.
But, by Lemma~\ref{le:path}, \mbox{$f_{1} ( \{ x \} \cup f_{1} ( Z_{f} )  ) =$}
\mbox{$f_{1} ( \{ x \} \cup Z_{f} ) =$} \mbox{$f_{1} ( Z_{f} )$}.
We conclude that \mbox{$x \notin f_{1} ( \{ x \} \cup S ) $}.
On the other hand, if \mbox{$x \in X - Z_{f}$}, there is a \mbox{$j \geq 0$} such that
\mbox{$x \in Z_{j} - Z_{j + 1 }$}.
By Lemma~\ref{le:notf2}, then, \mbox{$x \notin f_{2} ( \{ x \} \cup S ) $}.
We have shown that $S$ is a stable set.
\end{proof}

\subsection{Ordering agreements and stable sets} \label{sec:ordering_agreements}
We need to prove some basic properties of stable sets before we can prove that $S$ is the stable agreement
preferred by side $1$.
By Definition~\ref{def:stable0} both sides cannot agree to add any contract to a stable set.
We shall show now that both sides cannot agree to add any set of contracts to a stable set.
\begin{lemma} \label{le:stable_add}
If $A$ is a stable set, then, for any \mbox{$B \subseteq X$},
\[
f_{1}(A \cup B) \cap f_{2}(A \cup B) \subseteq A.
\]
\end{lemma}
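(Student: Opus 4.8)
The plan is to reduce the claim to the defining property of a stable set (Definition~\ref{def:stable0}) by replacing the arbitrary $B$ with just the portion of the disputed intersection that lies outside $A$. Write $C = f_{1}(A \cup B) \cap f_{2}(A \cup B)$ and set $B' = C - A$; the goal is exactly to show $B' = \emptyset$.

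First I would record the containment $A \cup B' \subseteq A \cup B$: by Contraction $C \subseteq A \cup B$, so $B' = C \cap (X - A) \subseteq (A \cup B) \cap (X - A) \subseteq B$, whence $A \cup B' \subseteq A \cup B$. Next, since $B' \subseteq C \subseteq f_{1}(A \cup B)$ and $A \cup B' \subseteq A \cup B$, property~\ref{original:substitutes} of Lemma~\ref{le:substitutes} applied to $f_{1}$ (with smaller set $A \cup B'$ and larger set $A \cup B$) gives $B' \subseteq (A \cup B') \cap f_{1}(A \cup B) \subseteq f_{1}(A \cup B')$; the identical argument for $f_{2}$ yields $B' \subseteq f_{2}(A \cup B')$. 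Hence $B' \subseteq f_{1}(A \cup B') \cap f_{2}(A \cup B')$.

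Finally, $B'$ is disjoint from $A$ by construction, so if $B'$ were non-empty it would contradict the stability of $A$ with $B'$ playing the role of $B$ in Definition~\ref{def:stable0}. Therefore $B' = \emptyset$, i.e.\ $C \subseteq A$, which is the assertion. There is essentially no hard step here; the only point needing care is the containment $A \cup B' \subseteq A \cup B$, which is what licenses the appeal to Substitutes (one could equally argue contract-by-contract using property~\ref{notin:substitutes} or~\ref{x:out} of Lemma~\ref{le:substitutes} instead).
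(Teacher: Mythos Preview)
Your proof is correct and follows essentially the same idea as the paper's: use Substitutes to pass from $A \cup B$ down to a smaller set containing $A$, then invoke Definition~\ref{def:stable0}. The only cosmetic difference is that the paper argues element-by-element (pick $x \in f_{1}(A\cup B)\cap f_{2}(A\cup B)-A$, note $x\in B$ by Contraction, then Substitutes gives $x\in f_{1}(A\cup\{x\})\cap f_{2}(A\cup\{x\})$, contradicting stability for the singleton $\{x\}$), whereas you handle the whole set $B'=C-A$ at once; you yourself note this alternative at the end.
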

\begin{proof}
Assume $x$ \mbox{$ \in f_{1}(A \cup B) \cap f_{2}(A \cup B) - A$}.
By Contraction \mbox{$x \in B$} and by Substitutes we have 
$x$ \mbox{$\in f_{1}(A \cup \{x\}) \cap f_{2}(A \cup \{x\})$}, contradicting our assumption that $A$ is a stable set.
\end{proof}

Our next result compares the two partial pre-orders $f_{1}$ and $f_{2}$.
In the sequel addition has to be understood as addition modulo 2: \mbox{$2 + 1 = 1$}.
If side $i$ prefers a set $B$ to a stable agreement $A$,
then the other side ($i+1$) prefers A to \mbox{$f_{i}( B )$}.
\begin{lemma} \label{le:converse}
Let \mbox{$A, B \subseteq X$}.
If $A$ is a stable set and
\mbox{$A \leq_{i} B$}, then \mbox{$f_{i} (B) \leq_{i + 1} A$}.
\end{lemma}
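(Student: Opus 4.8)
The plan is to reduce everything, via Lemma~\ref{le:equiv_leq}, to a single set inclusion. Writing $U := A \cup f_i(B)$, the conclusion $f_i(B) \leq_{i+1} A$ is by Lemma~\ref{le:equiv_leq} equivalent to $f_{i+1}(U) \subseteq A$, so the whole problem becomes: control where $f_{i+1}$ sends $U$. First I would put the hypothesis $A \leq_i B$ into a more convenient shape. By Definition~\ref{def:leqf} it says $f_i(A \cup B) = f_i(B)$, and Path Independence (Lemma~\ref{le:path}, condition~\ref{onef}) gives $f_i(U) = f_i(f_i(B) \cup A) = f_i(B \cup A)$; combining, $f_i(U) = f_i(B)$. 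This identity is the real engine: on the set $U$ the choice function $f_i$ returns exactly $f_i(B)$.

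Next I would take $x \in f_{i+1}(U)$ and argue $x \in A$. By Contraction $x \in U = A \cup f_i(B)$, so either $x \in A$, and we are done, or $x \in f_i(B) \setminus A$. In the second case the identity $f_i(U) = f_i(B)$ gives $x \in f_i(U)$ as well, hence $x \in f_1(U) \cap f_2(U)$. Now invoke Lemma~\ref{le:stable_add} for the stable set $A$ with the set $f_i(B)$ (note $A \cup f_i(B) = U$): this yields $f_1(U) \cap f_2(U) \subseteq A$, forcing $x \in A$, contradicting $x \notin A$. Therefore $f_{i+1}(U) \subseteq A$, which by Lemma~\ref{le:equiv_leq} is precisely $f_i(B) \leq_{i+1} A$.

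The step I expect to carry the real content is establishing $f_i(A \cup f_i(B)) = f_i(B)$ before touching $f_{i+1}$: membership $x \in f_{i+1}(U)$ by itself says nothing about $f_i$, and Lemma~\ref{le:stable_add} cannot be applied until $x$ sits in the intersection $f_1(U) \cap f_2(U)$; it is exactly this identity that promotes $x \in f_i(B)$ to $x \in f_i(U)$ and closes the gap. Everything else is routine bookkeeping with Contraction, Path Independence, and the already-proved strengthening of stability in Lemma~\ref{le:stable_add}.
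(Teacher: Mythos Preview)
Your proof is correct and follows essentially the same route as the paper: both first establish the identity \mbox{$f_{i}(A\cup f_{i}(B))=f_{i}(B)$} (you via Path Independence directly, the paper via \mbox{$A\leq_{i}f_{i}(B)$} and Idempotence), then invoke Lemma~\ref{le:stable_add} on the stable set $A$ with $f_{i}(B)$, and finally combine with Contraction to get \mbox{$f_{i+1}(A\cup f_{i}(B))\subseteq A$}. The only difference is cosmetic: the paper carries out this last step by set algebra (rewriting \mbox{$f_{i}(B)\cap f_{i+1}(U)\subseteq A$} as \mbox{$f_{i+1}(U)\subseteq A\cup(X-f_{i}(B))$} and intersecting with $U$), while you do the equivalent element chase.
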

\begin{proof}
Assume \mbox{$A \leq_{i} B$}. By Lemmas~\ref{le:subset_reflex} and~\ref{le:transitive},
\mbox{$A \leq_{i} f_{i}(B)$}, and, by Definition~\ref{def:leqf} and Idempotence,
\mbox{$f_{i} (A \cup f_{i} ( B ) ) =$} \mbox{$f_{i}(B)$}.
Since $A$ is a stable set, by Lemma~\ref{le:stable_add},
\[
f_{i}(A \cup f_{i} ( B ) ) \cap f_{i + 1}(A \cup f_{i} ( B ) ) \, \subseteq \, A.
\]
We see that
\[
f_{i} ( B ) \cap f_{i + 1}(A \cup f_{i} ( B ) ) \, \subseteq \, A,
\] 
and  therefore
\[
f_{i + 1 } (A \cup f_{i} ( B ) ) \, \subseteq \, A \cup (X - f_{i}( B ) ). 
\]
By Contraction, then, we see that 
\[
f_{i + 1 } (A \cup f_{i} ( B ) ) \, \subseteq \,
( A \cup (X - f_{i} ( B ) ) ) \cap (A \cup f_{i} ( B ) )
\]
and we conclude that
\mbox{$f_{i + 1}(A \cup f_{i} ( B ) ) \subseteq $} $A$.
We conclude, by Lemma~\ref{le:equiv_leq}, that \mbox{$f_{i} ( B ) \leq_{i + 1} A$}.
\end{proof}

It follows that $\leq_{1}$ and $\leq_{2}$ are just inverse relations on stable agreements,
which is a striking property of stable matchings.
\begin{corollary} \label{co:opp_sta}
If $A$ and $B$ are stable agreements, then \mbox{$A \leq_{i} B$} iff 
\mbox{$B \leq_{i + 1} A$}.
\end{corollary}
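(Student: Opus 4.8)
The plan is to prove both directions by applying Lemma~\ref{le:converse} twice, exploiting the symmetry between the two sides. The key observation is that Lemma~\ref{le:converse} already gives us an implication of the form ``$A$ stable and $A \leq_i B$ imply $f_i(B) \leq_{i+1} A$'', and that for a \emph{stable agreement} $A$ we additionally know $f_i(A) = A$ for $i = 1,2$, so there is no loss in applying $f_i$ to $A$ itself. Since the statement to be proved is symmetric under swapping the roles of the two sides (recall addition is modulo $2$), it suffices to prove just one direction, say: if $A$ and $B$ are stable agreements and $A \leq_i B$, then $B \leq_{i+1} A$; the converse then follows by relabelling $i \mapsto i+1$.

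First I would assume $A \leq_i B$, with $A$ a stable agreement (in particular a stable set) and $B$ a stable agreement. By Lemma~\ref{le:converse} applied to $A$, we immediately get $f_i(B) \leq_{i+1} A$. Now I would like to upgrade $f_i(B)$ to $B$ on the left-hand side. Here is where I use that $B$ is a stable agreement: since $f_{i+1}(B) = B$, Lemma~\ref{le:subset_reflex}(\ref{simf}) is not quite what is needed — rather, I want to relate $B$ and $f_i(B)$ under the $\leq_{i+1}$ preorder. The cleanest route: because $B$ is an agreement, $f_i(B) = B$ as well, so in fact $f_i(B) = B$ and the conclusion $f_i(B) \leq_{i+1} A$ \emph{is literally} $B \leq_{i+1} A$. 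That is the whole point: for an agreement $B$ we have $f_1(B) = f_2(B) = B$, so the passage through $f_i$ in Lemma~\ref{le:converse} costs nothing.

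So the argument is short: assume $A, B$ are stable agreements and $A \leq_i B$. Since $A$ is a stable set, Lemma~\ref{le:converse} yields $f_i(B) \leq_{i+1} A$. Since $B$ is an agreement, $f_i(B) = B$, hence $B \leq_{i+1} A$. For the converse, assume $B \leq_{i+1} A$; applying the same reasoning with the indices $i+1$ and $(i+1)+1 = i$ interchanged (using that $B$ is a stable set and $A$ is an agreement) gives $A \leq_i B$. This establishes the biconditional.

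I do not expect a genuine obstacle here — the corollary is essentially a bookkeeping consequence of Lemma~\ref{le:converse} together with the defining fixpoint property of agreements. The only point requiring a moment's care is making sure the index arithmetic modulo $2$ is handled correctly when invoking Lemma~\ref{le:converse} in the reverse direction, i.e.\ that ``side $i+1$ is a stable set'' and ``side $i$ is an agreement'' are exactly the hypotheses available, which they are since both $A$ and $B$ are stable agreements and hence simultaneously stable sets and agreements for both sides.
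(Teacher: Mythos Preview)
Your proposal is correct and follows essentially the same approach as the paper: apply Lemma~\ref{le:converse} (using that the ``left-hand'' set is a stable set) and then replace $f_i(B)$ by $B$ using the fixpoint property of agreements; the converse direction is obtained by symmetry. The paper's proof is the one-line ``By Lemma~\ref{le:converse} since $A$ and $B$ are agreements,'' which your write-up simply unpacks in more detail.
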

\begin{proof}
By Lemma~\ref{le:converse} since $A$ and $B$ are agreements.
\end{proof}

Let us also notice that, on stable agreements, the partial pre-orders $\leq_{i}$
\mbox{$i = 1 , 2$} are anti-symmetric and therefore partial orders.
\begin{theorem} \label{the:partial_order}
Among stable agreements the relations $\leq_{i}$, for
\mbox{$i = 1 , 2$}, are partial order relations.
\end{theorem}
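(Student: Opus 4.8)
The plan is to observe that almost all the work is already done: by Lemma~\ref{le:subset_reflex} the relation $\leq_{f_i}$ is reflexive and by Lemma~\ref{le:transitive} it is transitive, on all of $2^X$ and hence in particular on the subcollection of stable agreements. So the only property left to verify is antisymmetry once we restrict attention to stable agreements.

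To prove antisymmetry, I would take two stable agreements $A$ and $B$ with $A \leq_i B$ and $B \leq_i A$, i.e.\ $A \sim_{f_i} B$. By Lemma~\ref{le:antisym} this equivalence is characterized by $f_i(A) = f_i(B)$. Now the key point is that being an agreement (Definition~\ref{def:agreement}) forces $A = f_1(A) = f_2(A)$ and $B = f_1(B) = f_2(B)$, so in particular $f_i(A) = A$ and $f_i(B) = B$. Combining these gives $A = f_i(A) = f_i(B) = B$, which is exactly antisymmetry.

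I do not expect any real obstacle here; the statement is essentially a corollary of Lemmas~\ref{le:subset_reflex}, \ref{le:antisym} and \ref{le:transitive} together with the definition of an agreement. If anything, the only thing worth flagging is that stability is not actually needed for this particular conclusion — the agreement property alone collapses each $\sim_{f_i}$-equivalence class meeting the agreements to a single point — but restricting to stable agreements, as in the statement, does no harm and is the form used in the sequel (e.g.\ in the lattice results of Section~\ref{sec:lattice}). One should also remark, for completeness, that Corollary~\ref{co:opp_sta} then lets one phrase the two partial orders as mutual inverses on stable agreements.
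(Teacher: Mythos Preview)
Your proposal is correct and follows essentially the same approach as the paper: reflexivity and transitivity from Lemmas~\ref{le:subset_reflex} and~\ref{le:transitive}, and antisymmetry from the fixpoint property of agreements. The only cosmetic difference is that the paper appeals directly to Definition~\ref{def:leqf} (so that $A \leq_i B$ and $B \leq_i A$ give $f_i(A\cup B)=f_i(B)=B$ and $f_i(A\cup B)=f_i(A)=A$), whereas you route through Lemma~\ref{le:antisym}; your observation that stability is not actually used is also correct.
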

\begin{proof}
Lemmas~\ref{le:subset_reflex} and~\ref{le:transitive} proved the relations are reflexive
and transitive.
We are left to prove that if $A$ and $B$ are stable agreements,
\mbox{$A \leq_{i} B$} and \mbox{$B \leq_{i} A$}, then \mbox{$A = B$}.
Indeed one has \mbox{$f_{i} ( A \cup B ) =$} \mbox{$f(B) = B$}
and \mbox{$f_{i} ( B \cup A ) =$} \mbox{$f(A) = A$}.
\end{proof}

\subsection{An extremal property} \label{sec:extremal}
Our next result is central towards showing that $S$ is preferred by side 1 
to any stable agreement. 
In particular, Lemma~\ref{le:AsubZf} shows that any stable agreement is a subset of $Z_{f}$.

\begin{lemma} \label{le:AsubZf}
Let $A$ be a stable set of contracts such that \mbox{$f_{2}(A) =$} $A$.
Then, \mbox{$A \subseteq Z_{i}$} for any \mbox{$i \geq 0$}.
In particular \mbox{$A \subseteq Z_{f}$}, and therefore any stable agreement is a subset of $Z_{f}$.
\end{lemma}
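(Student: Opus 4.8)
The plan is to prove the set inclusion $A \subseteq Z_i$ by induction on $i$ and then read off the two stated consequences. The base case $i = 0$ is immediate, since $Z_0 = X$ and $A \subseteq X$. For the inductive step I would assume $A \subseteq Z_j$ and show $A \subseteq Z_{j+1}$, recalling from Equation~(\ref{eq:Zj}) that $Z_{j+1} = (Z_j - f_1(Z_j)) \cup f_2(f_1(Z_j))$.

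Fix $x \in A$. If $x \notin f_1(Z_j)$ then $x \in Z_j - f_1(Z_j) \subseteq Z_{j+1}$ and we are done; so assume $x \in f_1(Z_j)$, and the goal reduces to showing $x \in f_2(f_1(Z_j))$. This is where the preference machinery enters. Since $A \subseteq Z_j$, Lemma~\ref{le:subset_reflex} gives $A \leq_1 Z_j$. Because $A$ is a stable set, Lemma~\ref{le:converse} (applied with $i = 1$ and $B = Z_j$) then yields $f_1(Z_j) \leq_2 A$. On the other hand $f_2(A) = A$ by hypothesis, so $x \in f_2(A) = f_2(\{x\} \cup A)$. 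Now apply the extended Substitutes property of Lemma~\ref{le:subst_<} to $f_2$ with the pair $f_1(Z_j) \leq_2 A$: it gives $x \in f_2(\{x\} \cup f_1(Z_j))$, which equals $f_2(f_1(Z_j))$ since $x \in f_1(Z_j)$. Hence $x \in Z_{j+1}$, completing the induction.

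This establishes $A \subseteq Z_i$ for every $i \geq 0$, in particular $A \subseteq Z_f$. For the final sentence, observe that any stable agreement $A$ is in particular an agreement, so $f_2(A) = A$ and $A$ is a stable set; thus the hypotheses of the lemma are met and $A \subseteq Z_f$.

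The proof is short once the right tools are lined up, and the only real obstacle is recognizing which earlier results to chain together: translating the crude set inclusion $A \subseteq Z_j$ into the $\leq_1$-preference $A \leq_1 Z_j$, using stability of $A$ to flip it into the $\leq_2$-preference $f_1(Z_j) \leq_2 A$ via Lemma~\ref{le:converse}, and then transporting membership of $x$ from $f_2(A)$ to $f_2(f_1(Z_j))$ via Lemma~\ref{le:subst_<}. A reader who tries to argue directly from the three defining properties of coherent choice functions would end up essentially re-deriving Lemmas~\ref{le:converse} and~\ref{le:subst_<} inline; the pre-order $\leq_f$ is exactly the abstraction that keeps this argument clean.
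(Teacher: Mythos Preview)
Your proof is correct and follows essentially the same route as the paper: induction on $i$, then $A \subseteq Z_j \Rightarrow A \leq_1 Z_j$ via Lemma~\ref{le:subset_reflex}, then $f_1(Z_j) \leq_2 A$ via Lemma~\ref{le:converse}, then a Substitutes argument to push $A \cap f_1(Z_j)$ into $f_2(f_1(Z_j))$. The only cosmetic difference is in this last step: the paper unpacks $f_1(Z_j) \leq_2 A$ as $f_2(f_1(Z_j) \cup A) = A$ and applies ordinary Substitutes (Lemma~\ref{le:substitutes}) to the inclusion $f_1(Z_j) \subseteq f_1(Z_j) \cup A$, whereas you keep the preference relation and invoke the derived Lemma~\ref{le:subst_<} element-wise.
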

\begin{proof}
Assume that $A$ is a stable set and that
\mbox{$f_{2}(A) =$} $A$. 
We reason by induction on $i$.
Clearly \mbox{$A \subseteq Z_{0} = X$}.
Assume that \mbox{$A \subseteq Z_{i}$}.
By Lemma~\ref{le:subset_reflex} we have \mbox{$A \leq_{1} Z_{i}$} and
by Lemma~\ref{le:converse}, \mbox{$f_{1} ( Z_{i} ) \leq_{2} A $}, i.e.,
\mbox{$f_{2} ( f_{1} ( Z_{i} ) \cup A ) =$} \mbox{$f_{2} ( A ) = A$}.
But \mbox{$f_{1} ( Z_{i} ) \subseteq$} \mbox{$f_{1} ( Z_{i} ) \cup A $} and, 
by Lemma~\ref{le:substitutes}, we have
\mbox{$f_{1} ( Z_{i} ) \cap f_{2} ( f_{1} (Z_{i} ) \cup A ) \subseteq$}
\mbox{$f_{2} ( f_{1} ( Z_{i} ))$}.
Therefore
\mbox{$f_{1} ( Z_{i} ) \cap A \subseteq$} 
\mbox{$f_{2} ( f_{1} ( Z_{i} ) )$}.
But \mbox{$A \subseteq Z_{i} $} and we conclude that
\mbox{$A \subseteq$} \mbox{$Z_{i + 1}$}, by Equation~(\ref{eq:Zj}).
\end{proof}

We can now show that $S$ is preferred by side 1 to any stable agreement.
We summarize what we know on the set $S$.
\begin{theorem} \label{the:final}
The set $S$ of contracts is a stable agreement and for any stable agreement $A$
one has \mbox{$A \leq_{1} S$} and \mbox{$S \leq_{2} A$}.
\end{theorem}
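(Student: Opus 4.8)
The plan is to assemble the statement almost entirely from results already in hand, since the two genuinely substantive facts have been proved in the preceding lemmas. First I would recall Lemma~\ref{le:S_stable}, which states that $S = f_{1}(Z_{f})$ is a stable agreement; this settles the first assertion immediately, so the remaining work is purely about the two preference inequalities.

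For the inequality $A \leq_{1} S$, I would fix an arbitrary stable agreement $A$ and invoke Lemma~\ref{le:AsubZf}, which gives $A \subseteq Z_{f}$. Then Lemma~\ref{le:subset_reflex} supplies two facts at once: containment yields $A \leq_{1} Z_{f}$, and item~\ref{simf} of that lemma gives $Z_{f} \sim_{1} f_{1}(Z_{f}) = S$, hence in particular $Z_{f} \leq_{1} S$. Transitivity of $\leq_{1}$ (Lemma~\ref{le:transitive}) then chains these into $A \leq_{1} S$.

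For $S \leq_{2} A$, the cleanest route is to observe that $A$ and $S$ are both stable agreements and $A \leq_{1} S$, so Corollary~\ref{co:opp_sta} delivers $S \leq_{2} A$ directly. Alternatively, one can apply Lemma~\ref{le:converse} with $i = 1$ and $B = S$: since $A$ is a stable set and $A \leq_{1} S$, it gives $f_{1}(S) \leq_{2} A$, and $f_{1}(S) = S$ because $S$ is a fixpoint of $f_{1}$ (it is an agreement), so again $S \leq_{2} A$. I would probably quote Corollary~\ref{co:opp_sta} as the main line and mention the alternative.

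There is, frankly, no real obstacle remaining at this stage: the effort was spent in Lemmas~\ref{le:S_stable}, \ref{le:AsubZf}, \ref{le:converse} and Corollary~\ref{co:opp_sta}. The one point that deserves a moment of care is bookkeeping on the side indices and on the symmetry of $\sim_{f}$: one must use that $A$ is a full agreement (not merely a stable set with $f_{2}(A)=A$) so that Lemma~\ref{le:converse} / Corollary~\ref{co:opp_sta} apply on both sides, and that $Z_{f} \sim_{1} S$ genuinely gives $Z_{f} \leq_{1} S$ — the direction actually needed — and not just $S \leq_{1} Z_{f}$.
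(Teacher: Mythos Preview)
Your proposal is correct and follows essentially the same route as the paper's proof: invoke Lemma~\ref{le:S_stable} for stability, Lemma~\ref{le:AsubZf} for $A\subseteq Z_{f}$, Lemma~\ref{le:subset_reflex} for $A\leq_{1} Z_{f}\sim_{1} S$, transitivity, and finally Corollary~\ref{co:opp_sta} for $S\leq_{2} A$. The only cosmetic difference is that the paper additionally cites Lemma~\ref{le:S_agreement} alongside Lemma~\ref{le:S_stable}, whereas you (correctly) note that Lemma~\ref{le:S_stable} already asserts the full stable-agreement conclusion.
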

\begin{proof}
The set $S$ is a stable agreement by Lemmas~\ref{le:S_agreement} and~\ref{le:S_stable}.
Let $A$ be any stable agreement. 
By Lemma~\ref{le:AsubZf} we have \mbox{$A \subseteq Z_{f}$}.
Therefore, by Lemma~\ref{le:subset_reflex} 
\mbox{$A \leq_{1}$} \mbox{$ Z_{f} \sim_{1}$} $S$ and by Lemma~\ref{le:transitive}
we conclude that \mbox{$A \leq_{1} S$}.
By Corollary~\ref{co:opp_sta}, \mbox{$S \leq_{2} A$}.
\end{proof}

\subsection{The lattice of stable agreements} \label{sec:lattice}
We shall now show that the set of all stable agreements exhibits a lattice structure
with respect to the partial orders $\leq_{i}$ for \mbox{$i = 1 , 2$}.
Remember that those relations are partial orders on the family of stable agreements 
by Theorem~\ref{the:partial_order}.

The result below is a direct benefit of the absence of any structural assumption on $X$.
We can easily characterize sets of contracts that are less preferred than each of two other
sets of contracts.
The operation $I$ has been defined in Equation~(\ref{eq:I}).
We write $I_{i}$ for $I_{f_{i}}$, for \mbox{$i = 1 , 2$}.
\begin{lemma} \label{le:lower-bound}
Let \mbox{$A , B, C \subseteq X$} and let \mbox{$i = 1 , 2$}. 
We have \mbox{$ A \leq_{i} B$} and \mbox{$A \leq_{i} C$} iff
\mbox{$A \subseteq$} \mbox{$ I_{i}(B) \cap I_{i}(C)$}.
\end{lemma}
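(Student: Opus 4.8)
The plan is to read this off directly from item~\ref{A<B} of Lemma~\ref{le:ABI}, which already does all the work: it characterizes the pre-order $\leq_{f}$ by set containment in the image of the operation $I_{f}$. Applying that item with $f = f_{1}$ and then with $f = f_{2}$, i.e.\ for the fixed index $i \in \{1,2\}$, gives that $A \leq_{i} B$ holds iff $A \subseteq I_{i}(B)$, and similarly that $A \leq_{i} C$ holds iff $A \subseteq I_{i}(C)$.

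From there the conclusion is purely set-theoretic. The conjunction ``$A \leq_{i} B$ and $A \leq_{i} C$'' is equivalent to ``$A \subseteq I_{i}(B)$ and $A \subseteq I_{i}(C)$'', and a set is a subset of each of two sets exactly when it is a subset of their intersection, so this is in turn equivalent to $A \subseteq I_{i}(B) \cap I_{i}(C)$, as claimed. (One could instead give a hands-on proof unwinding Definition~\ref{def:I} together with Lemma~\ref{le:equiv_leq}, but that would merely re-prove item~\ref{A<B} of Lemma~\ref{le:ABI}, so invoking it is the clean route.) I do not expect any genuine obstacle here: all the substantive content — in particular the Substitutes-based monotonicity of $I_{f}$ and the identity $f(I_{f}(A)) = f(A)$ — has already been established in Lemma~\ref{le:ABI}, and this statement is essentially its ``conjunction form'', packaged so that it can serve in Section~\ref{sec:lattice} as the description of lower bounds in the lattice of stable agreements.
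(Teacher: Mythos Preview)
Your proposal is correct and matches the paper's own proof essentially line for line: the paper also invokes item~\ref{A<B} of Lemma~\ref{le:ABI} to get $A \leq_{i} B \Leftrightarrow A \subseteq I_{i}(B)$ (and likewise for $C$), leaving the intersection step implicit. Your added remark that the conjunction of the two inclusions is equivalent to inclusion in the intersection is the only extra detail, and it is harmless.
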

\begin{proof}
By Lemma~\ref{le:ABI}, property~\ref{A<B}, we have
\mbox{$A \leq_{i} B$} iff \mbox{$A \subseteq I_{i} ( B )$}  and
\mbox{$A \leq_{i} C$} iff \mbox{$A \subseteq I_{i} ( C )$}.
\end{proof}

In the sequel we shall consider only the case \mbox{$i = 1$} and therefore we set,
for any \mbox{$B , C \subseteq X$} :
\begin{equation} \label{notation:W}
W_{B}^{C} \: = \: I_{1} ( B ) \cap I_{1} ( C ).
\end{equation}

Our goal is to find, for any pair of stable agreements $B$ and $C$, a stable agreement $D$
less preferred by side 1 than $B$ and $C$, but such that any stable agreement less preferred
by side 1 than $B$ and $C$ is also less preferred than $D$.
Since we have just seen that any set of contracts less preferred than both $B$ and $C$ by
side 1 is a subset of $W_{B}^{C}$, it is natural to consider executing the iterative process of
Section~\ref{sec:algorithm} when the set of contracts considered is $W_{B}^{C}$, not $X$.

Let;
\[
Z_{0}^{W} \, = \, W_{B}^{C}, \ Z_{j + 1 }^{W} \, = \,
(Z_{j}^{W} - f_{1}( Z_{j}^{W} ) ) \cup f_{2} ( f_{1} ( Z_{j}^{W} ) ), \ 
Z_{g}^{W} \, = \,Z_{ g + 1 }^{W}.
\]
We shall prove that \mbox{$S^{W} = f_{1} ( Z_{g}^{W} )$} satisfies our goals.
Note that since the iterative process is exactly the one described in 
Section~\ref{sec:algorithm}, we may apply Theorem~\ref{the:final}, with paying attention
to the fact that the set of contracts is now $W_{B}^{C}$, not $X$.
The following is almost immediate and does not require 
that $B$ and $C$ be stable agreements: they can be any sets.
But note that we have to deal now with two different notions of stability: stability with respect to $X$ and
stability with respect to $W_{B}^{C}$.
A set $A$ is stable with respect to $W_{B}^{C}$ iff for every \mbox{$x \in W_{B}^{C} - A$}
\mbox{$x \notin f_{1}(A \cup \{ x \} ) \cap f_{2} (A \cup \{ x \} ) $}.
A set stable with respect to $X$ is stable with respect to $W_{B}^{C}$.
In the sequel when we write {\em stable} without qualification, we mean stable with respect to $X$.

\begin{lemma} \label{le:weak_lattice}
The set $S^{W}$ is an agreement, \mbox{$S^{W} \leq_{1} B$}, 
\mbox{$S^{W} \leq_{1} C$} and \mbox{$A \leq_{1} S^{W}$} 
for any stable agreement $A$ such that \mbox{$A \leq_{1} B$} and \mbox{$A \leq_{1} C$}.
\end{lemma}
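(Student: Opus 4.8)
The plan is to observe that the iteration defining $S^{W}$ is literally the algorithm of Section~\ref{sec:algorithm}, only started from $W_{B}^{C}$ instead of from $X$; hence Theorem~\ref{the:final}, together with all the lemmas leading to it, applies verbatim with $W_{B}^{C}$ in the role of the ambient set of contracts, and the whole statement reduces to unwinding definitions.

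First I would dispose of the three easy claims, none of which needs $B$ or $C$ to be a stable agreement. That $S^{W}$ is an agreement is Lemma~\ref{le:S_agreement} read in the universe $W_{B}^{C}$ --- being an agreement, i.e.\ $f_{1}(A)=f_{2}(A)=A$, is intrinsic and makes no reference to the ambient set. For $S^{W}\leq_{1}B$ and $S^{W}\leq_{1}C$, I would chain $S^{W}=f_{1}(Z_{g}^{W})\subseteq Z_{g}^{W}\subseteq Z_{0}^{W}=W_{B}^{C}=I_{1}(B)\cap I_{1}(C)$, using Contraction and the fact (Lemma~\ref{le:decreasing}, applied to the sequence $Z_{j}^{W}$) that this sequence is decreasing. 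Then $S^{W}\subseteq I_{1}(B)$ and $S^{W}\subseteq I_{1}(C)$, and Lemma~\ref{le:ABI}, item~\ref{A<B}, converts those containments into $S^{W}\leq_{1}B$ and $S^{W}\leq_{1}C$.

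The real content is the maximality clause. Given a stable agreement $A$ (stable with respect to $X$) with $A\leq_{1}B$ and $A\leq_{1}C$, Lemma~\ref{le:lower-bound} turns the two hypotheses into the single containment $A\subseteq I_{1}(B)\cap I_{1}(C)=W_{B}^{C}$. Since $A$ is stable with respect to $X$ it is in particular stable with respect to $W_{B}^{C}$, and it is an agreement; so $A$ is a stable agreement of the universe $W_{B}^{C}$ contained in $W_{B}^{C}$. Applying Theorem~\ref{the:final} in that universe --- whose conclusion is precisely that $S^{W}=f_{1}(Z_{g}^{W})$ is $\leq_{1}$-preferred by side~$1$ to every stable agreement --- delivers $A\leq_{1}S^{W}$.

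The only point calling for care, and the main obstacle, is the bookkeeping between the two notions of stability and the transfer of the earlier results to a new universe: one must be satisfied that the proofs of the lemmas of Section~\ref{sec:properties_iterative} and of Theorem~\ref{the:final} used only Contraction, the recursion~(\ref{eq:Zj}), and the coherence of $f_{1},f_{2}$ as functions on $2^{X}$ --- never the specific choice $Z_{0}=X$ --- and that ``stable with respect to $X$'' does imply ``stable with respect to $W_{B}^{C}$'' for a subset of $W_{B}^{C}$, which holds because the defining condition of the latter quantifies only over the smaller set $W_{B}^{C}-A\subseteq X-A$ of possible additions. Once that is granted, the lemma is a chain of citations.
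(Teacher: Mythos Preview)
Your proposal is correct and follows essentially the same route as the paper: invoke the algorithm's conclusions (Lemma~\ref{le:S_agreement} and Theorem~\ref{the:final}) in the universe $W_{B}^{C}$, use $S^{W}\subseteq W_{B}^{C}$ together with Lemma~\ref{le:ABI} item~\ref{A<B} (equivalently Lemma~\ref{le:lower-bound}) for the two bounds, and for the maximality clause first place $A$ inside $W_{B}^{C}$ via Lemma~\ref{le:lower-bound}, observe that stability with respect to $X$ implies stability with respect to $W_{B}^{C}$, and then apply Theorem~\ref{the:final} in the smaller universe. Your explicit discussion of the transfer of Section~\ref{sec:properties_iterative} to the ambient set $W_{B}^{C}$ is exactly the care the paper alludes to in the paragraph preceding the lemma; the one spot where $Z_{0}=X$ is genuinely used is the base case of Lemma~\ref{le:AsubZf}, and your containment $A\subseteq W_{B}^{C}$ supplies precisely that base case in the new universe.
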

\begin{proof}
The set $S^{W}$ is an agreement by Theorem~\ref{the:final}, 
since the definition of an agreement does not depend on the base set, $X$ or $W_{B}^{C}$.
By construction \mbox{$S^{W} \subseteq W_{B}^{C}$} and therefore 
Lemma~\ref{le:lower-bound} implies that \mbox{$S^{W} \leq_{1} B$} and
\mbox{$S^{W} \leq_{1} C$}.
If $A$ is a stable agreement (with respect to $X$) such that \mbox{$A \leq_{1} B$} and \mbox{$A \leq_{1} C$},
by Lemma~\ref{le:lower-bound} it is a subset of $W_{B}^{C}$ and it is a stable set with
respect to $W_{B}^{C}$.
Therefore, by Theorem~\ref{the:final}, \mbox{$A \leq_{1} S^{W}$}.
\end{proof}

The remainder of this section is devoted to showing that, 
if $B$ and $C$ are {\em stable} agreements, then $S^{W}$ is a stable set.
Note that Theorem~\ref{the:final} implies that $S^{W}$ is stable relative to $W_{B}^{C}$,
but we still have to show that for any \mbox{$x \in X - W_{B}^{C}$},
one has \mbox{$x \notin f_{1}( \{ x \} \cup S^{W} ) \cap f_{2} ( \{ x \} \cup S^{W} ) $}.

In the Gale-Shapley marriage situation, the greatest lower bound to stable matches $B$
and $C$ is given by \mbox{$f_{2} ( B \cup C )$}.
The consideration of this set will help even in the much more general situation we are faced with.
\begin{lemma} \label{le:f1W}
If $B$ and $C$ are stable agreements, then 
\mbox{$f_{2} ( B \cup C ) \subseteq f_{1} ( W_{B}^{C} ) $}.
\end{lemma}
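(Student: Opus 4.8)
The plan is to prove the slightly sharper statement $f_{1}(W_{B}^{C}) = f_{2}(B \cup C)$, of which the asserted inclusion is a special case; morally this is the many-to-many echo of the classical fact that the greatest lower bound of two stable matchings assigns each man his less preferred partner. The first step is to place $M := f_{2}(B \cup C)$ inside $W_{B}^{C}$. Since $B \subseteq B \cup C$, Lemma~\ref{le:subset_reflex} gives $B \leq_{2} B \cup C$, and since $B$ is a stable set, Lemma~\ref{le:converse} yields $f_{2}(B \cup C) \leq_{1} B$; by Lemma~\ref{le:ABI}, item~\ref{A<B}, this says $M \subseteq I_{1}(B)$. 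The symmetric argument with $C$ gives $M \subseteq I_{1}(C)$, hence $M \subseteq I_{1}(B) \cap I_{1}(C) = W_{B}^{C}$, that is, $M \leq_{1} W_{B}^{C}$. (Alternatively this is immediate from Lemma~\ref{le:lower-bound}.)

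Second, I would check that $M$ is a fixpoint of $f_{1}$, hence an agreement. By Contraction $M \subseteq B \cup C$; take $x \in M$ and assume $x \in B$ (the case $x \in C$ is symmetric). From the first step $M \leq_{1} B$, so by Definition~\ref{def:leqf} and the fact that $B$ is an agreement, $f_{1}(M \cup B) = f_{1}(B) = B$. Now $x \in B = f_{1}(M \cup B)$ and $x \in M \subseteq M \cup B$, so the Substitutes property of $f_{1}$ forces $x \in f_{1}(M)$. Hence $f_{1}(M) = M$.

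The remaining, and main, obstacle is the reverse comparison $W_{B}^{C} \leq_{1} M$ — equivalently, that every contract of $W_{B}^{C}$ lying outside $M$ is rejected by side $1$ from $\{x\} \cup M$, that is, $W_{B}^{C} \subseteq I_{1}(M)$. This is precisely the decomposition phenomenon of classical matching theory: a contract that side $1$ ranks no higher than its $B$-alternative and than its $C$-alternative is dominated, for side $1$, by a contract that side $2$ retained when choosing $f_{2}(B \cup C)$. It is here that the stability of \emph{both} $B$ and $C$ must genuinely enter. I would attack it by reading the membership $x \in I_{1}(B) \cap I_{1}(C)$ as the two relations $\{x\} \leq_{1} B$ and $\{x\} \leq_{1} C$ (Lemma~\ref{le:ABI}, item~\ref{A<B}), feeding these through Lemma~\ref{le:converse} and Lemma~\ref{le:stable_add} applied to $B$ and to $C$, and then transporting the resulting choices along the pre-order $\leq_{1}$ by means of Lemma~\ref{le:subst_<} and Path Independence (Lemma~\ref{le:path}) until the required rejection against $M$ appears.

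Granting the third step, $M \leq_{1} W_{B}^{C}$ together with $W_{B}^{C} \leq_{1} M$ gives $M \sim_{1} W_{B}^{C}$, so by Lemma~\ref{le:antisym} and the second step $f_{1}(W_{B}^{C}) = f_{1}(M) = M$. In particular $f_{2}(B \cup C) = M \subseteq f_{1}(W_{B}^{C})$, which is the claim.
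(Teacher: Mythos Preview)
Your first two steps are correct, and your Step~1 is a clean alternative to the paper's elementwise argument: the paper shows $M:=f_{2}(B\cup C)\subseteq W_{B}^{C}$ by taking $x\in M$, assuming (say) $x\in B$, observing by Substitutes that $x\in f_{2}(\{x\}\cup C)$, and then invoking the stability of $C$ to get $x\notin f_{1}(\{x\}\cup C)$ whenever $x\notin C$. Your appeal to Lemma~\ref{le:converse} packages this more slickly.

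The gap is that you are aiming at a strictly stronger statement than the lemma, and the only hard step in that stronger statement is your Step~3, which you do not actually carry out. Crucially, Steps~1 and~2 alone do \emph{not} yield even the required inclusion $M\subseteq f_{1}(W_{B}^{C})$: from $M\subseteq W_{B}^{C}$ and $f_{1}(M)=M$ one cannot conclude $M\subseteq f_{1}(W_{B}^{C})$, since a fixpoint of a coherent choice function need not lie inside the choice from a larger set (take $X=\{a,b\}$ with $f_{1}$ picking $a$ whenever available; then $\{b\}$ is a fixpoint contained in $\{a,b\}$ but $b\notin f_{1}(\{a,b\})=\{a\}$). So as written, your proof does not close without Step~3.

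The fix is short and is exactly what the paper does: rerun your Step~2 argument with $W_{B}^{C}$ in place of $M$. From $W_{B}^{C}\subseteq I_{1}(B)$ one has $W_{B}^{C}\leq_{1} B$ (Lemma~\ref{le:ABI}, item~\ref{A<B}), hence $f_{1}(W_{B}^{C}\cup B)=f_{1}(B)=B$; for $x\in M$ with $x\in B$ and (by Step~1) $x\in W_{B}^{C}$, Substitutes then gives $x\in f_{1}(W_{B}^{C})$, and the symmetric argument handles $x\in C$. The paper phrases this as $W_{B}^{C}\cap f_{1}(I_{1}(B))\subseteq f_{1}(W_{B}^{C})$ together with $f_{1}(I_{1}(B))=B$, which is the same computation. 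No Step~3 is needed for the lemma as stated.
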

\begin{proof}
We shall, first, show that \mbox{$f_{2}( B \cup C ) \subseteq W_{B}^{C}$}.
Let \mbox{$x \in f_{2} ( B \cup C )$}. 
By Contraction \mbox{$x \in B \cup C$}.
Without loss of generality, let us assume that \mbox{$x \in B$}.
By Substitutes, then, \mbox{$x \in f_{2} ( \{ x \} \cup C$ )}.
If \mbox{$ x \in C$}, we have \mbox{$x \in I_{1}(B) \cap I_{1}(C) =$} 
\mbox{$ W_{B}^{C}$} and we are through.
If \mbox{$x \notin C$}, since $C$ is a stable set, we have 
\mbox{$x \notin f_{1} ( \{ x \} \cup C ) $}, i.e., \mbox{$x \in I_{1}(C) $},
but \mbox{$x \in B \subseteq I_{1}(B)$} and therefore
\mbox{$x \in W_{B}^{C}$}.

Now, we know that 
\mbox{$f_{2} ( B \cup C ) \subseteq$} \mbox{$ W_{B}^{C} \cap (B \cup C) $}.
Since, by definition, \mbox{$W_{B}^{C} \subseteq$} \mbox{$I_{1}( B )$}, 
Lemma~\ref{le:substitutes} imply that 
\mbox{$ W_{B}^{C} \cap f_{1} ( I_{1} ( B ) ) \subseteq$} \mbox{$f_{1} ( W_{B}^{C} )$}.
But, by Lemma~\ref{le:ABI} part~\ref{IAA}, 
\mbox{$f_{1} ( I_{1} ( B ) ) =$} \mbox{$f_{1} ( B ) =$} $B$.
Therefore \mbox{$W_{B}^{C} \cap B \subseteq f_{1} ( W_{B}^{C} ) $} and similarly 
\mbox{$W_{B}^{C} \cap C \subseteq f_{1} ( W_{B}^{C} )$}.
\end{proof}

\begin{theorem} \label{the:lattice}
Let $B$ and $C$ be any stable agreements. 
The set $S^{W}$ is a stable agreement, \mbox{$S^{W} \leq_{1} B$},
\mbox{$S^{W} \leq_{1} C$}, and, for any stable agreement $A$ such that
\mbox{$A \leq_{1} B$} and \mbox{$A \leq_{1} C$}, one has
\mbox{$A \leq_{1} S^{W}$}.
\end{theorem}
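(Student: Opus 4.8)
The plan is to derive everything except the $X$-stability of $S^W$ directly from Lemma~\ref{le:weak_lattice}, which already gives that $S^W$ is an agreement, that $S^W \leq_1 B$ and $S^W \leq_1 C$, and that $A \leq_1 S^W$ for every stable agreement $A$ with $A \leq_1 B$ and $A \leq_1 C$. As noted just after that lemma, Theorem~\ref{the:final} applied with base set $W_B^C$ shows that $S^W$ is stable relative to $W_B^C$; so the whole remaining task is to prove that for every $x \in X - W_B^C$ one has $x \notin f_1(\{x\}\cup S^W) \cap f_2(\{x\}\cup S^W)$, and for that it suffices to show $x \notin f_2(\{x\}\cup S^W)$.

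First I would unpack the failure $x \notin W_B^C = I_1(B)\cap I_1(C)$: without loss of generality $x \notin I_1(B)$, which by the definition of $I_1$ in Equation~(\ref{eq:I}) means $x \notin B$ and $x \in f_1(\{x\}\cup B)$. Since $B$ is a stable set and $x \in f_1(\{x\}\cup B)$, Definition~\ref{def:stable0} forces $x \notin f_2(\{x\}\cup B)$.

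The key step will be to establish $B \leq_2 S^W$. Granting this, Lemma~\ref{le:subst_<} applied to $f_2$ shows that $x \in f_2(\{x\}\cup S^W)$ would entail $x \in f_2(\{x\}\cup B)$, contradicting the previous paragraph; hence $x \notin f_2(\{x\}\cup S^W)$, as required. To prove $B \leq_2 S^W$ I would assemble the chain $B \leq_2 B\cup C \leq_2 f_2(B\cup C) \leq_2 f_1(W_B^C) \leq_2 S^W$: the first inequality is containment (Lemma~\ref{le:subset_reflex}); the second comes from $B\cup C \sim_2 f_2(B\cup C)$ (Lemma~\ref{le:subset_reflex}); the third is Lemma~\ref{le:f1W}, which says $f_2(B\cup C) \subseteq f_1(W_B^C)$, again via containment; and the fourth is obtained by iterating Lemma~\ref{le:<2f1} on the $W$-process, since $f_1(W_B^C) = f_1(Z_0^W)$ and $S^W = f_1(Z_g^W)$, together with transitivity (Lemma~\ref{le:transitive}). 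Transitivity then gives $B \leq_2 S^W$. The case $x \notin I_1(C)$ is symmetric, replacing $B$ by $C$ throughout (the chain still runs through $f_2(B\cup C)$ and $f_1(W_B^C)$, and $C$ is a stable set by hypothesis).

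The real difficulty has already been handled in Lemma~\ref{le:f1W}: recognizing that $f_2(B\cup C)$ --- the set that serves as the meet in the classical marriage picture --- sits inside $f_1(W_B^C)$ is exactly what allows the $W$-process to begin, from side $2$'s viewpoint, already above both $B$ and $C$. What is left here is the bookkeeping of that preference chain plus the routine case split on which of $I_1(B)$, $I_1(C)$ omits $x$; once $S^W$ is known to be a stable agreement, the inequalities $S^W \leq_1 B$, $S^W \leq_1 C$ and the maximality of $S^W$ among common $\leq_1$-lower bounds of $B$ and $C$ are simply read off from Lemma~\ref{le:weak_lattice}.
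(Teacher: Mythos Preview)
Your proposal is correct and follows essentially the same route as the paper's proof: the chain $B \leq_{2} B\cup C \leq_{2} f_{2}(B\cup C) \leq_{2} f_{1}(W_{B}^{C}) \leq_{2} S^{W}$ via Lemmas~\ref{le:subset_reflex}, \ref{le:f1W}, \ref{le:<2f1} and \ref{le:transitive}, then the case split on $x\notin I_{1}(B)$ versus $x\notin I_{1}(C)$ combined with Lemma~\ref{le:subst_<}, with the case $x\in W_{B}^{C}-S^{W}$ handled by Theorem~\ref{the:final} on the restricted base set. Your write-up is in fact a slightly more explicit version of exactly what the paper does.
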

\begin{proof}
Except for the stability of $S^{W}$, all claims have been proved in Lemma~\ref{le:weak_lattice}.
By Lemmas~\ref{le:subset_reflex} and~\ref{le:f1W} we have
\[ 
B \leq_{2} B \cup C \leq_{2} f_{2} ( B \cup C ) \leq_{2} f_{1} ( W_{B}^{C} ).
\]
By Lemma~\ref{le:<2f1} \mbox{$f_{1} ( W_{B}^{C} ) \leq_{2} f_{1}( Z_{g}^{W} ) =$} \mbox{$S^{W}$}.
Therefore, by Lemma~\ref{le:transitive} one has \mbox{$B \leq_{2} S^{W}$}.
Similarly \mbox{$C \leq_{2} S^{W}$}.
We distinguish two cases.
First, for any \mbox{$x \in X - W_{B}^{C}$}, we may, without loss of generality, assume that
\mbox{$x \notin I_{1}(B)$}, i.e., \mbox{$x \notin B$} and 
\mbox{$x \in f_{1}( \{ x \} \cup B )$}.
Since $B$ is a stable set, \mbox{$x \notin f_{2} ( \{ x \} \cup B )$}.
Therefore, by Lemma~\ref{le:subst_<}, since \mbox{$B \leq_{2} S^{W}$}, we see that
\mbox{$x \notin f_{2} ( \{ x \} \cup S^{W} ) $}.
Secondly, for any \mbox{$x \in W_{B}^{C} - S^{W}$}, Theorem~\ref{the:final} implies that
\mbox{$x \notin f_{1}( \{ x \} \cup S^{W} ) \cap f_{2}( \{ x \} \cup S^{W} ) $}.
We have proved that $S^{W}$ is a stable set.
\end{proof}

\section{Matching with contracts with money} \label{sec:money}
One of the distinguishing features of matching among economic theory is that it does not involve money, or, 
more precisely, does not assume the existence of money.
In this section we shall show that the many-to-many matching with contracts framework, 
devoid of money, developed so far, also encompasses certain exchange economies with production.

\subsection{Preferences with money} \label{sec:pref_money}
In the framework used so far, prices, if present,  are hidden in the structure of the set $X$ 
of contracts.
Since we want to study the role of prices in markets, our first task is to take money out of the
obscurity and bring it explicitly in the framework.
We also need to bring to the fore the terms of the contracts: 
the item to be delivered and all the conditions that influence preferences since we want to compare 
the prices of contracts with the same terms.
We shall therefore assume now that every contract \mbox{$x \in X$} specifies, 
not only a member of $I$: $x_{I}$ and a member of $J$: $x_{J}$ 
but also a price \mbox{$x_{P} \in P$} and a template
\mbox{$x_{T} \in T$} where $P$ is a finite set of prices  equipped with a total order ($<$)
and $T$ is a finite set of templates.
Note that, since $X$ must be finite, we consider only a finite set of possible prices:
for example a natural number of cents less or equal to a maximal price.
A template, \mbox{$t \in T$} should be understood as containing all relevant 
information about the item to be produced, its full specification, packaging, 
time of delivery, place of delivery and so on, except the price and the identity 
of the two contracting parties.

A contract \mbox{$x\in X$} represents a contract 
between producer $x_{I}$ and consumer  $x_{J}$ for the production by $x_{I}$ 
of a specific item described in $x_{T}$ and the sale of this item to $x_{J}$ for an amount $x_{P}$, 
to be transferred from $x_{J}$ to $x_{I}$.
We shall use the notations introduced in Definition~\ref{def:contracts}: $X_{i}$, $X_{j}$ and 
even define $X_{t}$ and $X_{p}$ in a similar way.

The preferences of each agent \mbox{$a \in I \cup J$} are described
by a {\em coherent} choice function \mbox{$f_{a} : 2^{X_{a}} \rightarrow 2^{X_{a}}$}.
It seems quite natural to assume that the consumers' preferences are coherent, 
since, if we assume that a consumer's preferences are described by a $v$-substitutes valuation, 
then, by the results of Appendix~\ref{app:coherent_gross}, it can be described by a 
coherent choice function.
To assume that the producers' preferences are coherent is much more restrictive.
It seems that a factory whose operation produces two different chemicals $A$ and $B$, may well
accept a contract to sell $A$ and a contract to sell $B$ if presented with two such contracts, but
reject a contract to sell $A$ if there is no buyer for the by-product $B$. 
In other terms producers often exhibit complementarities in their preferences. 
Similarly, economies of scale in production translate in a choice function that
is not coherent.
Nevertheless the preferences of a producer with linear costs of production are described 
by a coherent choice function.
A set of contracts $X$ as above and coherent choice functions $f_{a}$ for every agent
\mbox{$a \in I \cup J$} form an economy.

Since contracts are only potential trades, not realized trades, it is reasonable to assume
that there is no shortage of contracts.
This assumption is formalized below.
\begin{definition} \label{def:no-shortage}
We say that an economy satisfies no-shortage of contracts iff
\begin{enumerate}
\item for any \mbox{$i \in I$}, \mbox{$j \in J$}, \mbox{$t \in T$}, and \mbox{$p \in P$} 
there is a contract \mbox{$x \in X$} such that \mbox{$x_{I} = i$}, \mbox{$x_{J} = j$},
\mbox{$x_{T} = t$} and \mbox{$x_{P} = p$},
\item for any contract \mbox{$x \in X$} and any stable agreement \mbox{$A \subseteq X$}
such that \mbox{$x \in A$}, there is a contract \mbox{$y \in X$} such that \mbox{$y \notin A$}
and \mbox{$y_{I} = x_{I}$}, \mbox{$y_{J} = x_{J}$}, \mbox{$y_{T} = x_{T}$} and
\mbox{$y_{P} = x_{P}$}.
\end{enumerate}
\end{definition}

In economies, since the money transfers are always from the consumer to the producer,
the preferences of the producers and the consumers concerning prices are opposed:
a producer always prefers higher prices and a consumer always prefers lower prices.
Another basic assumption is that a producer does not care to whom he or she sells
and a consumer does not care from whom he or she buys.
It is remarkable that those two basic facts can be modeled by suitable restrictions 
on the respective choice functions.
The correct formulation of the property we are looking for requires some thinking.

\begin{definition} \label{def:money_economy}
An economy that satisfies no-shortage of contracts is said to be a money-economy (m-economy) iff
for any \mbox{$A \subseteq X$}, any \mbox{$x, y \in X$} such that
\mbox{$ x_{T} = y_{T}$} and \mbox{$x_{P} < y_{P}$},
\begin{enumerate}
\item \label{producer}
if \mbox{$i =$} \mbox{$ x_{I} =$} \mbox{$ y_{I}$} and \mbox{$ x \in f_{i}(A \cap X_{i})$}, then
\mbox{$y \in f_{i}( ( A \cap X_{i} ) \cup \{ y \})$}, and
\item \label{consumer}
if \mbox{$j = x_{J} = y_{J}$} and \mbox{$ y \in f_{j}(A \cap X_{j})$}, then
\mbox{$x \in f_{j}( ( A \cap X_{j} ) \cup \{ x \})$}.
\end{enumerate}
\end{definition}
For producer $i$, if he or she has the possibility to contract with $j_{1}$ or with $j_{2}$
(they may be the same consumer) at different prices, he or she can choose to contract
with both, or with none, but if he or she chooses to contract with only one it must be for
the higher price. 
Therefore, if, when offered the bundle $A$ that contains $x$ he or she chooses to keep $x$,
it must the  case that he or she will keep the higher priced contract $y$
if such a contract is offered (in $A$ or) on top of $A$: 
either at the expense of rejecting the lower priced contract $x$, if
\mbox{$x \notin f_{i}( ( A \cap X_{i} ) \cup \{ y \}) $}, or by taking both contracts, if
\mbox{$x , y \in f_{i}( ( A \cap X_{i} ) \cup \{ y \})$}.
Similarly, mutatis mutandis, for consumers.

\subsection{Properties of stable agreements: a law of two prices} \label{sec:two_prices}
The results in previous Sections that concern matching with contracts are applicable 
to the special case of m-economies: we can define coherent collective preferences for the 
producers: denoted by $f_{I}$, coherent collective preferences for the consumers:
denoted $f_{J}$ and the notion of a stable agreement.
We can apply the results above and claim that the set of stable agreements is not empty 
and forms a lattice.

In a stable agreement any contract defines a price, so the agreement defines a price  for  every traded item.
Theorem~\ref{the:two_prices} below shows that any stable agreement in an m-economy defines
a unit price for every commodity: a commodity is a set of items that are indistinguishable by the agents.
It is the discrete, revealed preferences, version of the Law of one price much discussed
in the literature: in equilibrium all units of a given commodity are traded at the same price.
The result presented below extends significantly the {\em laws of one price}
discussed in the literature since it concerns stable agreements, not core elements.
Note also that prices appear even in the absence of utilities.
The price we have to pay for our discrete, revealed preferences framework is that we cannot exclude that two
units of a commodity be traded at slightly different prices: we can only show that if this happens those two prices are
neighboring prices: there is no price in-between.

In the framework of revealed preferences in which preferences are described by choice functions
the classical solution concepts such as competitive equilibrium, core elements or Pareto-optimal solutions 
cannot be defined in a straightforward manner since there is no obvious suitable notion of {\em better satisfied},
at least no such notion that would compare any two situations:
the binary relation of Definition~\ref{def:leqf} seems too partial to be useful.
Compared to Walrasian equilibria or core elements, stable agreements represent imperfect 
equilibria, or equilibria in imperfect markets.

We can now state a {\em Law of two prices} for m-economies: 
in a stable agreement, if different contracts that concern the same  template $t$ have different
prices those prices are neighboring prices: there is no price in-between those two prices.
If the prices are measured in cents, if the same item is sold at different prices, 
then the prices can differ only by one cent.

\begin{theorem}[Law of two prices] \label{the:two_prices}
In an m-economy that satisfies no-shortage of contracts, 
for any stable agreement $A$ and contracts \mbox{$x, y \in A$} such that
\mbox{$x_{T} =$} \mbox{$ y_{T}$},
the prices of those contracts are almost the same, i.e., 
there is no price \mbox{$p \in P$} that is strictly between $x_{P}$ and $y_{P}$.
\end{theorem}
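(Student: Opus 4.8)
The plan is to argue by contradiction. Assume $x, y \in A$ with $x_{T} = y_{T} = t$ and, renaming the two contracts if necessary, $x_{P} < y_{P}$, and suppose some price $p \in P$ satisfies $x_{P} < p < y_{P}$. I want to exhibit a single contract $z$ with template $t$ and price $p$ that \emph{both} sides would agree to add to $A$, contradicting the stability of $A$ via Definition~\ref{def:stable0} applied to the singleton $B = \{z\}$. The contract $z$ should name as its producer the producer $i = x_{I}$ of the cheaper contract $x$ --- who, preferring higher prices, should be willing to take a contract at price $p > x_{P}$ --- and as its consumer the consumer $j = y_{J}$ of the more expensive contract $y$ --- who, preferring lower prices, should be willing to take a contract at price $p < y_{P}$.

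First I would produce $z$. By the first part of the no-shortage assumption (Definition~\ref{def:no-shortage}) there is a contract $z \in X$ with $z_{I} = i$, $z_{J} = j$, $z_{T} = t$ and $z_{P} = p$. If this $z$ already belongs to $A$, then, since $A$ is a stable agreement, the second part of Definition~\ref{def:no-shortage} furnishes another contract with exactly the same four components that does not lie in $A$; so in either case I may assume $z \notin A$.

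Next I would verify that each side accepts $z$ when it is offered on top of $A$. Since $A$ is an agreement, $A = f_{I}(A) = \bigcup_{k \in I} f_{k}(A \cap X_{k})$ by Equation~(\ref{eq:collective}); as the sets $X_{k}$, $k \in I$, partition $X$ and $x \in X_{i} \cap A$, this forces $x \in f_{i}(A \cap X_{i})$. Because $z_{I} = i$, $z_{T} = x_{T}$ and $x_{P} < z_{P}$, item~\ref{producer} of Definition~\ref{def:money_economy} gives $z \in f_{i}((A \cap X_{i}) \cup \{z\}) = f_{i}((A \cup \{z\}) \cap X_{i})$, hence $z \in f_{I}(A \cup \{z\})$. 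Symmetrically, $y \in X_{j} \cap f_{J}(A)$ forces $y \in f_{j}(A \cap X_{j})$, and since $z_{J} = j$, $z_{T} = y_{T}$ and $z_{P} < y_{P}$, item~\ref{consumer} of Definition~\ref{def:money_economy} gives $z \in f_{j}((A \cap X_{j}) \cup \{z\})$, hence $z \in f_{J}(A \cup \{z\})$.

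Combining the two, $z \in f_{I}(A \cup \{z\}) \cap f_{J}(A \cup \{z\})$ with $z \notin A$, which contradicts the fact that $A$ is a stable set (Definition~\ref{def:stable0}). The one genuinely delicate point, rather than a routine verification, is the step in the second paragraph: the contract handed to us by the first part of no-shortage might already be in $A$, and it is precisely the second part of Definition~\ref{def:no-shortage} that is needed to sidestep this; beyond that, the proof is just a bookkeeping application of the two clauses of the m-economy definition together with the partition structure of the families $\{X_{k}\}$.
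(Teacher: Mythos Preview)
Your proof is correct and follows essentially the same argument as the paper's: construct a contract $z$ with template $t$, intermediate price $p$, producer $x_{I}$ and consumer $y_{J}$, and use the two clauses of Definition~\ref{def:money_economy} to show both sides accept $z$ on top of $A$. The only cosmetic difference is ordering: the paper first shows that any such $z$ must lie in $A$ (by stability), then invokes part~2 of no-shortage to produce a copy $z' \notin A$ and re-runs the argument for the contradiction, whereas you invoke part~2 up front to secure $z \notin A$ before checking acceptance --- your organization is arguably cleaner, but the content is identical.
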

Note that the producers $x_{I}$ and $y_{I}$ may be equal or different and similarly for 
the consumers $x_{J}$ and $y_{J}$,
and therefore Theorem~\ref{the:two_prices} describes a system-wide property, not a local one.

\begin{proof}
We reason by contradiction.
Assume that \mbox{$A \subseteq X$} is a stable agreement, that \mbox{$x , y \in A$} are
such that \mbox{$t =$} \mbox{$ x_{T} =$} \mbox{$ y_{T}$} and
\mbox{$x_{P} < p < y_{P}$}.
By no-shortage of contracts, there is a contract \mbox{$z \in X$} such that
\mbox{$z_{I} = x_{I}$}, \mbox{$z_{J} = y_{J}$}, \mbox{$z_{T} = x_{T} = y_{T}$}
and \mbox{$z_{P} = p$}.
Since $A$ is an agreement, \mbox{$A = f_{I}(A)$}, \mbox{$x \in f_{I}(A)$},
\mbox{$x \in f_{x_{I}}( A \cap X_{i} )$}. 
But \mbox{$z_{I} = x_{I}$}, \mbox{$z_{T} = x_{T}$}, \mbox{$x_{P} < z_{P}$}
and Definition~\ref{def:money_economy} implies that 
\mbox{$z \in f_{x_{I}} ( (A \cap X_{I} ) \cup \{ z \})$} and
\mbox{$z \in f_{I}(A \cup \{ z \})$}.
Similarly, we have \mbox{$y \in f_{J}(A)$}, \mbox{$y \in f_{y_{J}}(A \cap X_{J} )$}, \mbox{$z_{J} = y_{J}$},
\mbox{$z_{T} = y_{T}$}, \mbox{$z_{P} < y_{P}$} and we have
\mbox{$z \in f_{y_{J}}(A \cup \{ z \})$}, and \mbox{$z \in f_{J}(A \cup \{ z \})$}.
But \mbox{$z \in f_{I}(A \cup \{ z \}) \cap f_{J}(A \cup \{ z \})$} and $A$ is a stable set implies \mbox{$z \in A$}.

By no-shortage of contracts we conclude that there is  contract \mbox{$z' \in X$} such that
\mbox{$z' \notin A$} and \mbox{$z'_{I} = z_{I}$}, \mbox{$z'_{J} = z_{J}$},
\mbox{$z'_{T} = z_{T}$} and \mbox{$z'_{P} = z_{P}$}.
But the reasoning just above about $z$ carries over to $z'$ and we have \mbox{$z' \in A$}.
A contradiction.
\end{proof}

Note that the assumption that the agents' choice functions are coherent is not used in the proof
of Theorem~\ref{the:two_prices}. It only serves to ensure the existence of stable agreements.
The law of two prices is therefore applicable to imperfect equilibria in a large family of economies.

\section{Open problems and future work} \label{sec:open}
%%Open problems: representing coherent choice functions, 
%%how many coherent choice functions (at least factorial(n)
%% useful classes of coherent choice functions
%% Easy to characterize individual preferences in a one-to-one matching situation
%% Characterize collective preferences in such a situation: there is an equivalence relation
%% among contracts such that the preferred subset contains at most one member of each
%% class, and exactly one member if there  is at least one member of the class in the set.
%% Study anonymous matching: contracts are pairs (item, price) a consumer has a 
%% (gross) substitutes valuation for the items. The consumer collective has a collective
%% valuation: OR of the valuations of the consumers. It is GS. It defines a coherent
%% choice function. Check similarly for producers.
%% Is the mechanism truthful?
%% Properties of the lattice: distributive?
%% Complexity of the algorithm
%% Relation between (gross) substitutes valuations and coherent choice functions:
%% is every coherent choice function defined by some (gross) substitutes valuation?
%% Is the choice function defined by the convolution (OR) of individual (gross) substitutes
%% valuation the choice function representing the collective preferences of the individuals?
Below is a list of questions that need further reflection.
\begin{enumerate}
\item It is quite remarkable that substitutes agents can be exactly characterized by their choice functions.
Can other typical types of behavior be characterized this way and can the revealed preferences approach
successfully tackle other types of markets?

\item 
The study of the iterative process of Section~\ref{sec:algorithm} may be deepened.
In particular the question whether Lemma~\ref{le:f1W} can be strengthened is open.
Is the set \mbox{$f_{2}( B \cup C )$} a subset of $S^{W}$, is it equal to $S^{W}$,
as is the case in one-to-many matching?

\item
Suppose the preferences of each agent $i$ on one side are described 
by a v-substitutes valuation $v_{i}$ and the coherent choice function $f_{i}$ it defines
as in Appendix~\ref{app:coherent_gross}.
Is the collective coherent choice function of Equation~\ref{eq:collective} the choice function
defined by the v-substitutes valuation that is the convolution, i.e., 
\mbox{$\bigvee_{i} v_{i}$} as in~\cite{LLN:GEB}?

\item
Can one define useful families of coherent choice functions 
that possess short representations?

\item
One can easily formalize the restriction that an individual choice function picks a unique
best contract from any set of contracts. 
Can one similarly characterize choice functions that represent collective preferences that
aggregate such preferences? 
Can one easily prove the particular properties of one-to-one and one-to-many matchings
by considering the iterative process of Section~\ref{sec:algorithm}?

\item
Does the lattice of stable agreements enjoy additional properties? 
Under what conditions is it distributive?

\item
Is the iterative process of Section~\ref{sec:algorithm} truthful?
Can one of the sides benefit from choosing a subset that is not its preferred subset?

\item Section~\ref{sec:pref_money} considered whether it was reasonable to assume that agents,
and in particular producers preferences are described by coherent choice functions.
The study of {\em demand types} proposed in~\cite{BaldwinKemP:2019} may probably be used to classify choice
functions and study classes of agents for which a stable agreement is guaranteed to exist.

\item The stable agreements form a lattice. 
The situation of a stable agreement on the axis defined by the preference relations $\leq_{I}$ and $\leq_{J}$
describes the relative strengths of producers and consumers.
But how can one compare two stable agreements that are not related by those relations?
Could price changes move the market from one to another?

\item The algorithm presented in Section~\ref{sec:algorithm} obtains a stable agreement but does not describe
the way real markets behave in their way to some equilibrium which is not necessarily a competitive equilibrium.
One should study more realistic ways in which an unstable agreement can evolve into a stable one.
\end{enumerate}

\section{Conclusion} \label{sec:conclusion}
Many-to-many matching can be treated in a pure revealed preferences framework if
the choice functions that describe the agents' preferences are assumed to be coherent. 
The collective preferences of each side are then described too by coherent choice functions. 
A natural iterative process provides both an extremal stable agreement
and a proof of the lattice structure of stable agreements.
Two-sided markets in which agents' preferences can be described by coherent choice functions 
have a non-empty set of stable agreements and if there are  competitive equilibria they are in this set, 
but stable agreements exist in markets that lack a competitive equilibrium.
Stable agreements probably correspond to imperfect market equilibria.
Contrary to competitive equilibria, stable agreements are easy to find and only polynomial-size information 
needs to be shared to find one.

\section{Acknowledgments} \label{sec:Ack}
Discussions with Aron Matskin in a preliminary study conducted in 2004 is gratefully acknowledged.

\bibliographystyle{plain}

\appendix
\section{V-substitutes valuations and coherent functions} \label{app:coherent_gross}
We shall show that any valuation that is v-substitutes yields a coherent choice function.
We consider the most general valuations \mbox{$v : 2^{X} \rightarrow \cR$}.
The value $v(A)$ may be negative, the function $v$ is not necessarily monotone and
$v(\emptyset)$ is not necessarily equal to $0$.
The following generalizes the definition of Kelso and Crawford in~\cite{KelsoCraw:82}.
\begin{definition} \label{def:v-subst}
A valuation \mbox{$v : 2^{X} \rightarrow \cR$} is {\em v-substitutes} iff 
for any price vector \mbox{$p : X \rightarrow \cR$}, if we define
\mbox{$u_{p}(A) = v(A) - \sum_{x\in A} p(x)$}, and if we consider two price vectors
$p_{1}$ and $p_{2}$ such that \mbox{$p_{1} \leq p_{2}$} and if $A$ is a set of contracts 
that maximizes $u_{p_{1}}$ over all sets of contracts, then there is a set $A'$ that maximizes
$u_{p_{2}}$ over all sets of contracts that includes all contracts of $A$ whose prices are
the same in $p_{1}$ and $p_{2}$.
Note that prices may be negative.
\end{definition}

We want to show that any v-substitutes valuation yields a corresponding coherent choice 
function.
\begin{theorem} \label{the:char_coherent}
Let \mbox{$v : 2^{X} \rightarrow \cR$} be any v-substitutes valuation. 
There is a coherent choice function \mbox{$f : 2^{X} \rightarrow 2^{X}$} such that,
for any \mbox{$A \subseteq X$}, \mbox{$v ( f ( A ) ) \geq v ( B )$} for any
\mbox{$B \subseteq A$}.
\end{theorem}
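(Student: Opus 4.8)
The plan is to build $f$ from $v$ by breaking the ties in the maximization of $v$ with a fixed, generic, infinitesimally small price perturbation, and then to check the three clauses of Definition~\ref{def:coherent} one at a time; only Substitutes will actually use the v-substitutes hypothesis (Definition~\ref{def:v-subst}).

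First I would fix a perturbation vector \mbox{$\epsilon : X \rightarrow \cR$} that is simultaneously \emph{generic} --- the values \mbox{$v(B) - \sum_{x \in B} \epsilon(x)$} are pairwise distinct over all \mbox{$B \subseteq X$} --- and \emph{small}, say \mbox{$\| \epsilon \|_{\infty} < \frac{1}{|X|} \min \{ |v(B) - v(B')| \mid B, B' \subseteq X, v(B) \neq v(B') \}$}. Genericity is attainable because, for each pair \mbox{$B \neq B'$}, the equation \mbox{$v(B) - \sum_{x\in B}\epsilon(x) = v(B') - \sum_{x\in B'}\epsilon(x)$} cuts out a proper affine subspace of \mbox{$\cR^{X}$}, so the finite union of these subspaces is avoided by almost every $\epsilon$, in particular by some $\epsilon$ of arbitrarily small norm. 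Then define \mbox{$f(A)$} to be the unique subset of $A$ that maximizes \mbox{$u_{\epsilon}(B) = v(B) - \sum_{x \in B}\epsilon(x)$} over all \mbox{$B \subseteq A$} (notation of Definition~\ref{def:v-subst} with $\epsilon$ as price vector); uniqueness is genericity, and \mbox{$v(f(A)) \geq v(B)$} for every \mbox{$B \subseteq A$} is smallness. Contraction holds by construction. For IRC, if \mbox{$x \in A$} and \mbox{$x \notin f(A)$} then \mbox{$f(A) \subseteq A - \{x\}$}, so $f(A)$ is still a competitor in --- and hence still maximizes $u_{\epsilon}$ over --- the smaller family \mbox{$\{ B \mid B \subseteq A - \{x\} \}$}; thus \mbox{$f(A - \{x\}) = f(A)$}, which is even stronger than IRC.

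The substantive step is Substitutes, and here I would encode ``maximize over subsets of $A$'' as ``maximize $u_{p}$ over all of \mbox{$2^{X}$}'' for a suitable $p$. Pick $M$ larger than \mbox{$\max_{B} v(B) - \min_{B} v(B) + \sum_{x \in X} |\epsilon(x)|$}, and for a set $A$ set \mbox{$p_{A}(x) = \epsilon(x)$} for \mbox{$x \in A$} and \mbox{$p_{A}(x) = \epsilon(x) + M$} for \mbox{$x \notin A$}. The choice of $M$ forces every maximizer of $u_{p_{A}}$ over \mbox{$2^{X}$} to be a subset of $A$, on which $u_{p_{A}}$ agrees with $u_{\epsilon}$, so this maximizer is exactly $f(A)$. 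Now let \mbox{$B \subseteq A$}. Then \mbox{$p_{A} \leq p_{B}$}, the two vectors coinciding precisely on \mbox{$B \cup (X - A)$}. Apply v-substitutes to the $u_{p_{A}}$-maximizer $f(A)$: there is a maximizer $A'$ of $u_{p_{B}}$ over \mbox{$2^{X}$} that contains every contract of $f(A)$ whose price is the same in $p_{A}$ and $p_{B}$; since \mbox{$f(A) \subseteq A$}, those are exactly the contracts of \mbox{$f(A) \cap B$}. By the same encoding \mbox{$A' = f(B)$}, so \mbox{$B \cap f(A) \subseteq f(B)$}, which by Lemma~\ref{le:substitutes} is the Substitutes condition. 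Hence $f$ is coherent and picks a $v$-optimal subset of every $A$.

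The step I expect to be the main obstacle is this encoding of Substitutes: one must choose $M$ uniformly, so that it suppresses the out-of-$A$ contracts for \emph{every} $A$ at once, and one must read off correctly the coordinates on which $p_{A}$ and $p_{B}$ agree, so that the v-substitutes conclusion delivers containment of $f(A) \cap B$ --- which is exactly what is needed --- rather than a weaker set. The other two clauses are routine once the tie-breaking $\epsilon$ has been set up with the right genericity and smallness.
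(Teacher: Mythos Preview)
Your proof is correct and follows essentially the same approach as the paper's: a small generic price perturbation of $v$ to secure uniqueness of the maximizer, followed by an encoding of ``maximize over subsets of $A$'' via very high prices outside $A$ so that the Substitutes clause becomes a direct application of Definition~\ref{def:v-subst}. The paper separates these into Lemma~\ref{le:v-sub_unique} (perturbation) and Lemma~\ref{le:unique_coherent} (high-price encoding applied to the perturbed valuation, which is itself v-substitutes), while you fold the perturbation into the price vectors $p_{A}$ and apply v-substitutes to $v$ directly; the content is the same.
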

\begin{proof}
The proof proceeds in two steps.
First, we shall show there there is a v-substitutes valuation $v'$ that is a perturbation
of $v$ that enjoys the uniqueness property: any set $A$ has a unique subset that maximizes
$v'$, and that is such that the $v'$ maximizing subset of any $A$ is a $v$ maximizing subset of $A$.

\begin{definition} \label{def:uniqueness}
A valuation \mbox{$v : 2^{X} \rightarrow \cR$} possesses the {\em uniqueness property}
iff for any \mbox{$A \subseteq X$} there is a unique \mbox{$B \subseteq A$} that maximizes
$v$ over all subsets of $A$.
\end{definition}

The second step shows that any v-substitutes valuation that enjoys the uniqueness property
yields a suitable choice function. This result is not original.
\end{proof}

\begin{lemma} \label{le:v-sub_unique}
Let $v$ be any valuation. 
One may find prices $p$  that are small enough such that, for every $A$,
the subset that maximizes $u_{p}$ is a subset that maximizes $v$ and
such that $u_{p}$ possesses the uniqueness property.
If $v$ is v-substitutes, so is $u_{p}$.
\end{lemma}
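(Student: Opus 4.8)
The plan is to produce $p$ by a generic small perturbation, in three independent stages: first fix how small $p$ must be so that the perturbation does not disturb $v$-optimality, then show that a generic choice within that range breaks all ties (giving the uniqueness property), and finally check that the v-substitutes property is unaffected by the shift $v \mapsto v-p$.

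For the smallness, write $n = |X|$, $p(B) = \sum_{x \in B} p(x)$, $u_p(B) = v(B) - p(B)$, and let $\delta > 0$ be the least strictly positive value of $v(B) - v(C)$ over pairs of subsets of $X$, or $\delta = 1$ if $v$ is constant. I would impose $\|p\|_\infty < \delta/n$. Then if $B \subseteq A$ is not $v$-optimal in $A$, pick $C \subseteq A$ with $v(C) - v(B) \geq \delta$; since $|p(C) - p(B)| \leq n\,\|p\|_\infty < \delta$ we get $u_p(C) - u_p(B) = (v(C)-v(B)) - (p(C)-p(B)) > 0$. Hence no $v$-suboptimal subset of $A$ is $u_p$-optimal, so every $u_p$-maximizer in $A$ is a $v$-maximizer in $A$.

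For the tie-breaking, observe that two distinct subsets $B, B'$ of $X$ satisfy $u_p(B) = u_p(B')$ exactly when $p(B) - p(B') = v(B) - v(B')$; since $B$ and $B'$ differ, the left-hand side is a nonzero linear functional of $p$, so the corresponding bad set of $p$ is an affine hyperplane of $\cR^X$. There are finitely many such hyperplanes, so their union has empty interior, and the box $\{p : \|p\|_\infty < \delta/n\}$ is a nonempty open set, hence I can choose $p$ in the box and off all the hyperplanes. (Concretely, take $p = t\,w$ where $w \in \cR^X$ is chosen so that $B \mapsto w(B) = \sum_{x\in B}w(x)$ is injective on $2^X$, e.g. $w(x_i) = 2^i$, and $t \neq 0$ is small enough while avoiding the finitely many forbidden values.) For such $p$ the numbers $u_p(C)$, $C \subseteq X$, are pairwise distinct, hence so are the values over subsets of any $A$; thus each $A$ has a unique $u_p$-maximizing subset, and by the previous stage that subset also maximizes $v$ over subsets of $A$, which is precisely the asserted property.

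For the last claim, note that for any price vector $q$ one has $u_p(A) - \sum_{x\in A} q(x) = v(A) - \sum_{x\in A}(p(x)+q(x))$, so the subsets maximizing the $q$-perturbation of $u_p$ are exactly those maximizing the $(p+q)$-perturbation of $v$; moreover $q_1 \leq q_2$ iff $p+q_1 \leq p+q_2$, and $q_1(x) = q_2(x)$ iff $(p+q_1)(x) = (p+q_2)(x)$. Hence the v-substitutes requirement for $u_p$ at $q_1 \leq q_2$ coincides verbatim with that for $v$ at $p+q_1 \leq p+q_2$, which holds by hypothesis. The only point that needs any care is the joint satisfiability of ``small'' and ``generic'' in the middle stage, which is the standard fact that finitely many proper affine subspaces cannot cover a nonempty open set; the rest is routine.
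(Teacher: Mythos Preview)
Your proof is correct and follows essentially the same route as the paper: bound the perturbation by the minimum positive $v$-gap divided by $|X|$ so that $u_p$-maximizers remain $v$-maximizers, then observe that ties $u_p(B)=u_p(B')$ cut out finitely many affine hyperplanes in price space, so a generic small $p$ avoids them all; the preservation of v-substitutes is immediate from the identity $(u_p)_q = u_{p+q}$. Your write-up is in fact more careful than the paper's (you handle the constant-$v$ edge case, make the hyperplane argument explicit, and give the concrete tie-breaker $w(x_i)=2^i$), but the underlying idea is the same.
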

\begin{proof}
Consider the number:
\mbox{$m = \min_{A  , B \subseteq X} \{ \mid v(A) - v(B) \mid \} $} and choose
\mbox{$0 < \epsilon < \frac{m}{ \mid X \mid} $}.
Let us choose prices $p_{x}$, \mbox{$0 \leq p_{x} \leq \epsilon$} and let
\mbox{$u(A) = v(A) - \sum_{x \in A} p_{x}$}.
The function $u$ is v-substitutes, by the definition of v-substitutes.
Note that, for any \mbox{$A , B \subseteq X$}, 
\mbox{$u(A) - u(B) > v(A) - v(B) - m$} and therefore \mbox{$u(A) \leq u(B)$} implies
\mbox{$v(A) < v(B)$}. 
We conclude that the subset of $A$ that maximizes $u$ already maximized $v$.
Now, the uniqueness property is not satisfied for all choices of prices, but it is satisfied
for almost all choices of prices since, if \mbox{$A \neq B$} and \mbox{$v(A) = v(B)$}, 
\mbox{$u(A) \neq u(B)$} unless some specific linear relation holds between the prices $p_{x}$.
\end{proof}

\begin{lemma} \label{le:unique_coherent}
If \mbox{$v : 2^{X} \rightarrow \cR$} is a v-substitutes valuation that possesses the
uniqueness property, then the choice function \mbox{$f : 2^{X} \rightarrow 2^{X}$}
defined by: $f(A)$ is the unique subset of $A$ that maximizes $v$, is coherent.
\end{lemma}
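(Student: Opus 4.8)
The plan is to verify the three defining properties of Definition~\ref{def:coherent} directly, invoking v-substitutes only for the third. \emph{Contraction} is immediate, since $f(A)$ is by construction a subset of $A$. For \emph{Irrelevance of rejected contracts}, suppose $x \in A$ and $x \notin f(A)$; then $f(A) \subseteq A - \{x\} \subseteq A$, so $f(A)$ is a subset of $A - \{x\}$ that maximizes $v$ over all subsets of $A$, hence in particular over all subsets of $A - \{x\}$. By the uniqueness property, $f(A - \{x\}) = f(A)$, which is stronger than IRC requires (and already gives Cumulativity).

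The only place where v-substitutes is needed is \emph{Substitutes}, and I would prove it in the equivalent form $B \cap f(A) \subseteq f(B)$ for $B \subseteq A$ (property~\ref{original:substitutes} of Lemma~\ref{le:substitutes}). The idea is to encode the constraint ``choose among subsets of a given set'' by a price vector that is prohibitively large outside that set. Since $X$ is finite, fix a number $M$ large enough that $v(S) - M < v(T)$ for all $S, T \subseteq X$. Define $p_1$ by $p_1(x) = 0$ for $x \in A$ and $p_1(x) = M$ otherwise, and $p_2$ by $p_2(x) = 0$ for $x \in B$ and $p_2(x) = M$ otherwise; since $B \subseteq A$ we have $p_1 \leq p_2$. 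For any set $S$ containing a contract outside $A$, $u_{p_1}(S) \leq v(S) - M < v(T) = u_{p_1}(T)$ for every $T \subseteq A$; hence every $u_{p_1}$-maximizer lies among the subsets of $A$, where $u_{p_1}$ coincides with $v$, and by the uniqueness property $v$ has a unique maximizer among subsets of $A$, namely $f(A)$, so $f(A)$ is the unique $u_{p_1}$-maximizer. The same argument makes $f(B)$ the unique $u_{p_2}$-maximizer.

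Applying the v-substitutes property of $v$ to the pair $p_1 \leq p_2$ and the $u_{p_1}$-maximizer $f(A)$, there is a $u_{p_2}$-maximizer $A'$ containing every contract of $f(A)$ whose price agrees in $p_1$ and $p_2$. The contracts with equal prices are exactly those in $B \cup (X - A)$; since $f(A) \subseteq A$, the contracts of $f(A)$ with equal prices are precisely those of $f(A) \cap B$. Thus $f(A) \cap B \subseteq A'$, and by the uniqueness just established $A' = f(B)$, giving $B \cap f(A) \subseteq f(B)$. By Lemma~\ref{le:substitutes} this is the Substitutes property, so $f$ is coherent. I expect the main obstacle to be precisely this translation between the ``maximize over all sets'' formulation of v-substitutes and the ``best subset of $A$'' formulation of the choice function: one must pick $M$ so that the prohibitive prices genuinely confine all maximizers, and then keep the uniqueness bookkeeping tight enough to force the set $A'$ delivered by the definition to coincide with $f(B)$; once this encoding is in place, the remaining verifications are routine.
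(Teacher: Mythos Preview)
Your proof is correct and follows essentially the same approach as the paper: Contraction and IRC are handled directly from the definition of $f$ (the paper dismisses these in one line, you spell IRC out), and Substitutes is obtained by encoding ``restrict to a subset'' via a prohibitively high price vector and applying the v-substitutes definition to the pair $p_{1}\leq p_{2}$. Your treatment is in fact a bit more careful than the paper's---you make the threshold $M$ explicit and pin down uniqueness of the global $u_{p_i}$-maximizer before invoking the definition---and you prove the set form $B\cap f(A)\subseteq f(B)$ rather than the elementwise form, but these are minor stylistic differences, not a different route.
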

\begin{proof}
Contraction and IRC follow directly from the definition.
For Substitutes, assume that \mbox{$x \in B \subseteq A$} and 
\mbox{$x \in f(A) $}.
Let $p$ be a price vector with very high prices for all contracts in \mbox{$X - A$}
and zero prices on contracts in \mbox{$A$}.
At those prices the subset of \mbox{$A$} that maximizes $u_{p}$ is
\mbox{$f(A)$} and, by assumption, it contains $x$.
Let us now consider a price vector $p'$ in which all contracts in \mbox{$A - B$}
are very high, and all other contracts keep the price they had in $p$.
Since $v$ is v-substitute, the subset of \mbox{$B$} that maximizes $u_{p'}$ contains 
every contract of $f(A)$ whose price has not been raised, in the move from $p$ to $p'$.
We see that $x$ is a member of this set.
But this set is also the subset of $B$ that maximizes $v$ among all subsets of $B$ since
elements of $B$ have zero price in $p'$: it is $f(B)$.
\end{proof}

It has been shown, in Section~\ref{sec:examples_choice}, that valuations that are not v-substitutes
may also yield coherent choice functions. 

\section{Another characterization of coherent choice functions} \label{app:Ai}
The following is a characterization of coherent choice functions due to Aizerman and 
Malishevski. The proof can be found in~\cite{AizerMalish:81}.
\begin{lemma}[Aizerman and Malishevski] 
\label{le:rationalizable}
A choice function $f$ is coherent iff there is a finite set of binary 
relations $>_{i}$ on $X$ such that, for any \mbox{$A \subseteq X$}, $f(A)$ is 
the set of all elements of $A$ that are maximal in $A$ for at least one of the $>_{i}$'s.
\end{lemma}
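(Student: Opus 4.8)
The statement is an equivalence; the converse (such a family forces $f$ to be coherent) is routine, and the forward direction carries the content. For the converse I would take the relations $\succ_1,\dots,\succ_k$ (the $>_i$ of the statement) to be transitive, which loses nothing since passing to transitive closures leaves maximal elements unchanged. Contraction is immediate, and Substitutes holds because deleting elements of a set never produces new dominators. For IRC: if $x\notin f(A)$, $y\in\max(A-\{x\},\succ_i)$ but $y\notin\max(A,\succ_i)$, then the only possible new dominator of $y$ in $A$ is $x$, so $x\succ_i y$; since $x\notin f(A)$ some $w\in A-\{x\}$ has $w\succ_i x$, hence $w\succ_i y$ by transitivity, contradicting $y\in\max(A-\{x\},\succ_i)$; so $f(A-\{x\})\subseteq f(A)$.

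For the forward direction, let $f$ be coherent and assume first that $f(B)\neq\emptyset$ for every non-empty $B$. I will produce, for each pair $(A,x)$ with $x\in f(A)$, a strict linear order $\succ_{A,x}$ such that (i) $x\succ_{A,x}y$ for every $y\in A-\{x\}$, and (ii) $\max(B,\succ_{A,x})\in f(B)$ for every non-empty $B$. Granting this, the finite family $\{\succ_{A,x}\}_{A\subseteq X,\,x\in f(A)}$ represents $f$: by (ii), $\bigcup\max(B,\succ_{A,x})\subseteq f(B)$ for all $B$, while for each $x\in f(B)$ property (i) makes $\max(B,\succ_{B,x})=\{x\}$, giving the reverse inclusion.

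The order $\succ_{A,x}$ is built greedily from the top. Keep a ``remaining'' set $R$, initially $X$; at each step pick the next (currently highest) element to be placed to be some member of $f(R)$ and delete it from $R$, under the one policy rule that no element of $A-\{x\}$ may be picked before $x$ is picked. Property (ii) then holds automatically: if $\max(B,\succ_{A,x})$ is the element $y_k$ picked at step $k$, then $B$ is contained in the set $\{y_k,y_{k+1},\dots,y_n\}$ of elements not yet removed at that moment, $y_k$ lies in it and also in $f$ of it, so Substitutes gives $y_k\in f(B)$. What makes the policy rule always satisfiable — this is the crux of the whole argument — is the claim: \emph{whenever $x\in R$ and $A-\{x\}\subseteq R$, the set $f(R)\cap(\{x\}\cup(X-A))$ is non-empty.} For if it were empty then $f(R)\subseteq A-\{x\}\subseteq A\subseteq R$, so Substitutes gives $f(R)=A\cap f(R)\subseteq f(A)$, and then Cumulativity (Lemma~\ref{le:cumulativity}, valid because $f(R)\subseteq A\subseteq R$) gives $f(A)=f(R)$ — contradicting $x\in f(A)$ and $x\notin f(R)$. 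Hence while $x$ is still unplaced a legal choice exists inside $\{x\}\cup(X-A)$, so $x$ is placed above every element of $A-\{x\}$, which is (i); once $x$ is placed one simply continues choosing arbitrary members of $f(R)$.

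It remains to lift the non-emptiness assumption on $f$. If $f(R)=\emptyset$ for some non-empty $R$, the greedy process stalls there; but by IRC $f(B)=\emptyset$ for every $B\subseteq R$, so one may complete the order by placing all of $R$ at the bottom, each element carrying a self-loop. Such an element is never $\succ$-maximal in any set, so properties (i) and (ii) and the converse-direction verification are all untouched (the relations stay transitive; a self-looped element is never maximal, a non-looped one never carries a loop). The fully degenerate case $f\equiv\emptyset$ is handled by the empty family. I expect the displayed claim in the third paragraph — the one place where Substitutes and Cumulativity combine — to be the only real obstacle; the rest is bookkeeping.
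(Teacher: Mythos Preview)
The paper does not actually prove this lemma; it simply cites \cite{AizerMalish:81}. So there is no in-paper argument to compare against, and I assess your proof on its own merits.

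Your forward direction is correct and is essentially the classical Aizerman--Malishevski construction. The greedy enumeration of $X$, constrained by the policy that nothing in $A-\{x\}$ is placed before $x$, is exactly the right device, and the displayed claim --- that $f(R)\not\subseteq A-\{x\}$ whenever $A\subseteq R$ --- is indeed the only substantive step; your use of Cumulativity there is clean. (In fact the claim already forces $f(R)\neq\emptyset$ while $x$ is unplaced, so the non-emptiness hypothesis is only needed \emph{after} $x$ has been placed, exactly as you say.) The self-loop device for the stalled tail is informal but sound: once $f(R)=\emptyset$, repeated IRC (equivalently Cumulativity with $f(R)=\emptyset\subseteq B\subseteq R$) gives $f(B)=\emptyset$ for all $B\subseteq R$, so those elements must contribute nothing to any $\max$, and a self-loop achieves this.

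Your converse direction has a genuine slip. The claim that ``passing to transitive closures leaves maximal elements unchanged'' is false for proper subsets of $X$: with $X=\{a,b,c\}$ and the single relation $a>b$, $b>c$, both $a$ and $c$ are maximal in $\{a,c\}$, but after closing transitively only $a$ is. The same example shows that the induced $f$ violates IRC (here $f(\{a,b,c\})=\{a\}$ while $f(\{a,c\})=\{a,c\}$), so for arbitrary binary relations the converse is simply false and cannot be rescued by taking closures. The fix is just to assume the $>_i$ are transitive (indeed, the Aizerman--Malishevski theorem is usually stated for linear orders, and the notation $>$ in the lemma suggests this); your IRC argument is then correct as written, and your forward construction already produces such relations, so the biconditional closes.
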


\section{Inexistence of a stable agreement} \label{sec:inexistence}
The following example will show that the existence of a stable agreement is not always guaranteed.
\begin{example}
Assume two agents $1$ and $2$ and two contracts $a$ and $b$.
Let 
\[
f_{1}(\emptyset) = f_{2}(\emptyset) = f_{1}(\{ b \} ) = \emptyset,
\]
\[ 
f_{1}(\{ a \} ) = f_{2}( \{ a \} ) = \{ a \},
\]
\[
f_{2}(\{ b \} ) = f_{2} ( \{ a , b \} ) = \{ b \},
\]
and \mbox{$f_{1}( \{ a , b \} ) = \{ a , b \}$}.
Note that $f_{2}$ satisfies path equivalence but $f_{1}$ does not.
The empty set is not stable since \mbox{$a \in f_{1} ( \{ a \} ) \cap f_{2}( \{ a \})$}.
The singleton \mbox{$\{ a \}$} is not stable since \mbox{$b \in f_{1}(\{ a , b \} ) \cap f_{2}(\{ a , b \} ) $}.
The singleton \mbox{$\{ b \}$} is not an agreement since \mbox{$b \not \in f_{1}( \{ b \} )$}.
The set \mbox{$\{ a , b \}$} is not an agreement since \mbox{$a \not \in f_{2}(\{ a , b \} ) $}.
\end{example}

Suppose there are two agents, a seller $s$ and a buyer $b$. 
The seller is interested in selling two objects: \mbox{$a , b \in O$}.
The set of contracts $X$ is the set \mbox{$O \times \cal R^{+}$} 
where $\cal R^{+}$ is the set of nonnegative real numbers.
The first element of a contract is the item to be sold and the second element is the price.
The seller is interested in selling both $a$ and $b$ at any price.
Therefore his/her choice function $f_{s}$ chooses, in a set \mbox{$A \subseteq X$} the contract for $a$ with the
highest price, if there is one such contract, and the contract for $b$ with the highest price, if there is one such contract. 
The function $f_{s}$ satisfies path equivalence.

\end{document}